\newcommand{\rnote}[1]{\footnote{{\bf \color{red}Rocco}: {#1}}}
\newcommand{\snote}[1]{\footnote{{\bf \color{blue}Sandip}: {#1}}}
\newtheorem*{rep@theorem}{\rep@title}
\newcommand{\newreptheorem}[2]{
\newenvironment{rep#1}[1]{
 \def\rep@title{#2 \ref{##1}}
 \begin{rep@theorem}\itshape}
 {\end{rep@theorem}}}
\theoremstyle{plain}
\newenvironment{proofof}[1]{\par
  \pushQED{\qed}%
  \normalfont \topsep6\p@\@plus6\p@\relax
  \trivlist
  \item[\hskip\labelsep
\emph{    Proof of #1\@addpunct{.}}]\ignorespaces
}{%
  \popQED\endtrivlist\@endpefalse
}
\newcommand{\ignore}[1]{}
\def\colorful{0}
\newcommand{\blue}[1]{{{\color{blue}#1}}}
\newcommand{\gray}[1]{{\color{gray}{#1}}}
\newcommand{\blue}[1]{{{#1}}}
\newcommand{\gray}[1]{{{#1}}}
\newtheorem*{theorem*}{Theorem}
\newtheorem*{noclaim*}{Claim}
\newcommand{\SW}{\mathrm{SW}}
\newcommand{\Del}{\mathrm{Del}}
\newcommand{\Indicator}[1]{\mathbf{1}\left[#1\right]}
\newcommand{\B}{\mathrm{B}}
\def\PAIR{\textnormal{PAIR}}
\def\SUBWORD{\textnormal{SUBWORD}}
\newcommand{\worst}{\mathrm{worst}}
\newcommand{\subword}{\mathrm{subword}}
\newcommand{\id}{\mathrm{id}}
\begin{document}

\title{Polynomial-time trace reconstruction\\ 
in the smoothed complexity model\vspace{0.2cm}}

\author{
Xi Chen\thanks{Supported by NSF grants CCF-1703925 and IIS-1838154.} 
\\
Columbia University\\
xichen@cs.columbia.edu
\and
Anindya De\thanks{Supported by NSF grants CCF-1926872 and CCF-1910534.}
\\
University of Pennsylvania\\
anindyad@cis.upenn.edu
\and
Chin Ho Lee\thanks{Supported by a grant from the Croucher Foundation and by the Simons Collaboration on Algorithms and Geometry.}
\\
Columbia University\\
c.h.lee@columbia.edu
\and
Rocco A. Servedio\thanks{Supported by NSF grants CCF-1814873, IIS-1838154, CCF-1563155, and by the Simons Collaboration on Algorithms and Geometry.} 
\\
Columbia University\\
rocco@cs.columbia.edu
\and
Sandip Sinha\thanks{Supported by NSF grants  CCF-1714818, CCF-1822809, IIS-1838154, CCF-1617955, CCF-1740833, and by the Simons Collaboration on Algorithms and Geometry.}
\\
Columbia University\\
sandip@cs.columbia.edu
}

\maketitle

\thispagestyle{empty}


\begin{abstract}
In the \emph{trace reconstruction problem}, an unknown source string $x \in \{0,1\}^n$ is sent through a probabilistic \emph{deletion channel} which independently deletes each bit with probability $\delta$ and concatenates the surviving bits, yielding a \emph{trace} of $x$. The problem is to reconstruct $x$ given independent traces. This problem has received much attention in recent years both in the worst-case setting  where $x$ may be an arbitrary string in $\{0,1\}^n$ \cite{DOS17,NazarovPeres17,HHP18,HL18,Chase19} and in the average-case setting where $x$ is drawn uniformly at random from $\{0,1\}^n$ \cite{PeresZhai17,HPP18,HL18,Chase19}.

This paper studies trace reconstruction in the \emph{smoothed analysis} setting, in which  a ``worst-case'' string $x^{\worst}$ is chosen arbitrarily from $\{0,1\}^n$, and then a  perturbed version $\bx$ of $x^{\worst}$ is formed by independently replacing each coordinate by a uniform random bit with probability $\sigma$. The problem is to reconstruct $\bx$ given independent traces from it.

Our main result is an algorithm which, for any constant perturbation rate $0<\sigma < 1$ and any constant deletion rate $0 < \delta < 1$, uses $\poly(n)$ running time and traces and succeeds with high probability in reconstructing the string $\bx$. This stands in contrast with the worst-case version of the problem, for which $\text{exp}(O(n^{1/3}))$ is the best known time and sample complexity \cite{DOS17,NazarovPeres17}.

Our approach is based on reconstructing $\bx$ from the multiset of its short subwords and is quite different from previous algorithms for either the worst-case or average-case versions of the problem. The heart of our work is a new $\poly(n)$-time procedure for reconstructing the multiset of all $O(\log n)$-length subwords of any source string $x\in \{0,1\}^n$ given access to traces of $x$.

\end{abstract}



\newpage

\setcounter{page}{1}


\section{Introduction} \label{sec:intro}

\emph{Trace reconstruction} is a simple-to-state algorithmic problem which has been intensively studied yet remains mysterious in many respects.  The problem 
  captures some of the core algorithmic~challenges that arise in dealing with the \emph{deletion channel}; this is a noise process which, when given~an input string, independently deletes each coordinate with some fixed probability $\delta$ and outputs the concatenation of surviving coordinates.  In the trace reconstruction problem an algorithm is given access to independent \emph{traces} of a fixed unknown string $x \in \zo^n$, where a ``trace'' of $x$, denoted $\bz \sim \Del_\delta(x)$, is the string $\bz$ that results from passing $x$ through a deletion channel. The task is to use these traces to reconstruct the unknown string $x.$ 

Variants of the trace reconstruction problem have a long history, going back at least to \cite{Kalashnik73}.  The problem was studied on and off throughout the 2000s \cite{Lev01a,Lev01b,BKKM04,KM05,VS08,HMPW08,MPV14}, and has seen a renewed surge of recent interest over the past few years \cite{DOS17,NazarovPeres17,PeresZhai17,HPP18,HHP18,Chase19,BCFSSfocs19,BCSSrandom19,KMMP19,Narayanan20,HPPZ20}\ignore{\rnote{Did I miss any relevant references here? Should we take out our two population recovery papers and the Narayanan paper or leave them in? \textbf{Sandip:} I am in favour of keeping the pop rec papers in, and maybe add `... versions of the problem (and some generalizations of it).' at the end of the sentence.} \rnote{Different note about references:  visiting \href{https://arxiv.org/abs/1801.04783}{https://arxiv.org/abs/1801.04783}, it looks like in late April 2020 a new version of this was uploaded with Alex Zhai added as an author. I added this as a separate citation since just replacing \cite{HPP18} with it would give an inaccurate chronology.}} with the development of new algorithms and lower bounds for both the worst-case and average-case versions of the problem as well as various generalizations. Below we describe these two versions of the problem and recall the current state of the art for each of them.

\subsection{Prior work:  Worst-case and average-case trace reconstruction}

The original version of the trace reconstruction problem is the \emph{worst-case} version, in which the unknown string $x$ is an arbitrary (i.e.~adversarially chosen) string from $\zo^n$. This version of the problem has proved to be quite challenging; the first non-trivial result is due to Batu et al. \cite{BKKM04}, who gave a $\poly(n)$-time algorithm that uses $\poly(n)$ traces and succeeds when the deletion rate $\delta$ is very small, at most $n^{-1/2 - \eps}$ for any $\eps > 0$.  In \cite{HMPW08} Holenstein et al. gave an algorithm that runs in $\smash{\text{exp}(\tilde{O}(n^{1/2}))}$ time using $\smash{\text{exp}(\tilde{O}(n^{1/2}))}$ traces and succeeds for any $\delta$ bounded away from 1 by a constant. Simultaneous and independent works of De et al.~\cite{DOS17} and Nazarov and Peres~\cite{NazarovPeres17} gave an algorithm that improves the running time and sample complexity of \cite{HMPW08} to $\smash{\text{exp}(O(n^{1/3}))}$. In this same constant-$\delta$ regime, successively stronger lower bounds on the required sample complexity were given by \cite{MPV14,HL18}, culminating in a $\smash{\tilde{\Omega}(n^{3/2})}$ lower bound due to Chase \cite{Chase19}.

Another natural variant of the trace reconstruction problem is the \emph{average-case} version; in this variant the unknown string $x$ is assumed to be drawn uniformly at random from $\zo^n$, and the goal is for the algorithm to succeed with high probability over the random choice of $x$.  This problem variant is motivated both by the apparent difficulty of the worst-case problem and by the fact that in various application domains it may be overly pessimistic to assume that the input string $x$ is adversarially generated.  Much more efficient algorithms are known for the average-case problem: several  early works \cite{BKKM04,KM05,VS08} gave efficient algorithms that succeed for trace reconstruction of almost all $x \in \zo^n$ for various $o_n(1)$ deletion rates $\delta$, and \cite{HMPW08} gave an 
  algorithm that runs in $\poly(n)$ time using $\poly(n)$ traces when $\delta$ is at most some sufficiently small constant.  More recent results of  Peres and Zhai \cite{PeresZhai17} and Holden et al. \cite{HPP18,HPPZ20}, which build on worst-case trace reconstruction results of \cite{DOS17,NazarovPeres17}, substantially improve on this, with \cite{HPP18,HPPZ20} giving an algorithm which uses $\smash{\text{exp}(O(\log^{1/3} n ))}$ traces to reconstruct a random $x\in \{0,1\}^n$ in $n^{1+o_n(1)}$ time when  the deletion rate is any constant bounded away from~$1$.

Summarizing the results described above, the current $\smash{\text{exp}(O(n^{1/3}))}$  state-of-the-art for worst-case trace reconstruction is exponentially higher than the current $\smash{\text{exp}(O( \log^{1/3} n ))}$ state-of-the-art for average-case trace reconstruction.  Given this substantial gap, it is natural to investigate intermediate formulations of the problem between the worst-case and average-case models.

\subsection{This work:  Smoothed analysis of trace reconstruction}

The well-studied  \emph{smoothed analysis} model, introduced by Spielman and Teng \cite{ST01}, provides a natural framework for interpolating between worst-case and average-case complexity. In smoothed analysis the input to an algorithm is obtained by applying a random $\sigma$-perturbation to a worst-case input instance; here $\sigma$ is a ``perturbation rate,''  which it is natural to scale so that $\sigma=1$ corresponds to a truly random instance and $\sigma=0$ corresponds to a worst-case instance.  By choosing intermediate settings of $\sigma$ it is possible to interpolate between worst-case and average-case problem variants.

We now give a detailed statement of the smoothed trace reconstruction problem that we consider.   First, a ``worst-case'' string $x^{\worst}$ is chosen arbitrarily from $\zo^n$, and then a randomly perturbed version $\bx$ of the string $x^{\worst}$ is formed by independently replacing each coordinate of $x^{\worst}$ by a uniform random bit with probability $\sigma$.\ignore{\snote{Removed the `$> 0$' restriction on $\sigma$ for the definition of the model. The very next sentence considers what happens when $\sigma=0$.}}  
The goal is to reconstruct $\bx$ given access to independent traces drawn from $\Del_\delta(\bx).$ Note that when $\sigma=0$ this reduces to the worst-case trace reconstruction problem, and when $\sigma=1$ this reduces to the average-case problem.

As our main result, we give an algorithm for the smoothed trace reconstruction problem. For any initial string $x^{\worst},$ our algorithm can recover a $1-1/\poly(n)$ fraction of perturbed strings $\bx$ obtained from $x^{\worst}$ (for any $\poly(n)$) 
in polynomial time for any constant perturbation rate $0 < \sigma \leq 1$ and any constant deletion rate $0 < \delta < 1$. 
More precisely, the main theorem we prove is the following:

\begin{theorem} [Polyomial time smoothed trace reconstruction] \label{thm:main}
Let $0 < \delta, \eta, \tau <1$ and $0 < \sigma \leq 1$. 
Let $x^\worst$ be an arbitrary and unknown string in $\zo^n$ and let $\bx$ be formed from $x^\worst$ by independently replacing each bit of $x^\worst$ with a uniform random bit from $\zo$ with probability $\sigma$. 

There is an algorithm with the following guarantee: with probability at least $1-\eta$ \emph{(}over the random generation of $\bx$ from $x^\worst$\emph{)}, it is the case that
the algorithm, given access to independent traces drawn from $\Del_\delta(\bx)$, outputs the string $\bx$ with
probability at least $1-\tau$ \emph{(}over the random traces drawn from $\Del_\delta(\bx)$\emph{)}. Its running time, as well as the number of traces it uses, is
\[
    \left(\frac{n}{\eta}\right)^{O\left(\frac{1}{\sigma(1-\delta)} \log \frac{2}{1-\delta}\right)} \log \frac{1}{\tau}.
\]

\end{theorem}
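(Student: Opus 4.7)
The plan is to reduce the theorem to two subproblems: (i) a \emph{structural} claim that with probability at least $1-\eta$ over the smoothing of $x^\worst$ to $\bx$, the string $\bx$ is uniquely determined by the multiset of its length-$k$ contiguous subwords, for some $k = \Theta\bigl(\log(n/\eta) / (\sigma(1-\delta))\bigr)$; and (ii) an \emph{algorithmic} claim that from $\poly(n)$ traces we can reconstruct this multiset in $\poly(n)$ time. Given both pieces, the algorithm is simply to run the multiset-reconstruction procedure, enumerate candidate strings consistent with the resulting multiset, and output the (w.h.p.~unique) such string. The abstract explicitly flags (ii) as the heart of the paper, so I would expect most of the technical effort to go there.

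For (i), I would union-bound over all $y \in \zo^n$ with $y \neq \bx$. The key lemma to establish is a combinatorial characterization of when two strings can share the same multiset of length-$k$ subwords: such pairs must be related by local rearrangements whose positions are constrained in a rigid way. The $\sigma$-fraction of coordinates of $\bx$ that are replaced by fresh uniform bits then act as ``fingerprints'' inside each length-$k$ window, with roughly $\sigma k$ fresh bits per window. The probability that any fixed rearranged $y$ conforms to these fingerprints decays as $2^{-\Omega(\sigma k)}$ per affected window, and for $k$ of order $\log(n/\eta)/(\sigma(1-\delta))$ this beats the union bound over $y$ with margin $\eta / \poly(n)$ to spare.

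For (ii), the central task is to estimate $\mcount_w(\bx)$ for each $w \in \zo^k$, where $\mcount_w(\bx)$ is the number of positions in $\bx$ at which $w$ appears as a contiguous substring. My approach would be a linear-algebraic inversion: for each length $\ell \le k$ and each $w \in \zo^\ell$, the empirical frequency with which $w$ appears as a contiguous block in a random trace can be written as an explicit linear combination of counts $\{\mcount_{w'}(\bx) : |w'| \ge \ell\}$, where the coefficients come from the deletion process. The ``clean signal'' corresponds to $\ell$ consecutive positions in $\bx$ that all survive with no intermediate deletion, contributing a coefficient of $(1-\delta)^\ell$; the remaining contributions correspond to longer substrings of $\bx$ in which intermediate bits are deleted, with coefficients given by explicit negative-binomial expressions. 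Processing the $\mcount_w(\bx)$ values in a suitable order (e.g.~longest-first, peeling off known contributions as one descends in $\ell$) reduces the reconstruction to a sequence of well-conditioned estimations. The $(1-\delta)^{-k}$ attenuation sets the per-subword sample cost, and multiplying by the $2^k$ candidate subwords and the $\poly(n)$ precision needed to resolve integer counts yields the stated complexity $\bigl(n/\eta\bigr)^{O((\log(2/(1-\delta)))/(\sigma(1-\delta)))}\log(1/\tau)$.

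The main obstacle I anticipate is making step (ii) quantitative: one must show that the inversion is stable, i.e.~that the per-stage errors do not cascade uncontrollably as shorter-subword counts are extracted using estimates of longer-subword counts. A naive implementation may be ill-conditioned because a single length-$k$ substring of $\bx$ contributes to many length-$\ell$ block frequencies in the trace, and the bookkeeping for the induced deletion patterns has to be arranged so that the diagonal term $(1-\delta)^\ell$ dominates. Step (i), by contrast, I expect to be a comparatively standard probabilistic argument once the right combinatorial lemma on subword-multiset collisions is in hand, relying crucially on the $\sigma$-fraction of fresh randomness to provide enough distinguishing power within windows of the chosen length $k$.
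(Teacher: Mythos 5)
Your decomposition into (i) a structural uniqueness claim plus (ii) a subword-deck-reconstruction algorithm matches the paper's decomposition exactly (\Cref{lem:from-subwords-to-x} and \Cref{thm:subword-deck-reconstruction}). But the arguments you sketch for each half diverge from the paper's, and the second one has a serious gap.

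For (i), the paper does \emph{not} union-bound over candidate strings $y$ sharing the subword deck, and you should not either: you would first need to prove the combinatorial characterization of deck-preserving rearrangements that you flag as ``the key lemma,'' which is harder than the problem it is meant to solve. Instead, the paper establishes a local uniqueness property (``$k$-goodness''): with high probability no two length-$(k-1)$ windows of $\bx$ coincide. This is a union bound over $O(n^2)$ \emph{pairs of positions}, each failing with probability $(1-\sigma/2)^{k-1} \le \eta/n^2$ for $k = O(\log(n/\eta)/\sigma)$, and $k$-goodness suffices for a simple greedy reconstruction from the deck. Note also that the correct choice is $k = O(\log(n/\eta)/\sigma)$ with no $(1-\delta)^{-1}$ factor; the $\delta$-dependence in the final bound enters only through part (ii), where the subword-deck algorithm's cost is $(n(2/(1-\delta))^k)^{O(1/(1-\delta))}\log(1/\tau)$.

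For (ii), the peeling scheme is where your proposal breaks down. You correctly observe that the expected $\ell$-block count in a trace is a linear combination of $\{\#(w',\bx) : |w'| \ge \ell\}$, but this system is not triangular in any order that helps. The sum runs over $|w'|$ up to $n$, so ``longest-first'' peeling requires first determining $\#(w',\bx)$ for $|w'| \approx n$ --- which is exactly the original reconstruction problem --- and the coefficients attached to long $w'$ are astronomically small (high powers of $\delta$), making those counts unrecoverable from polynomially many traces. There is no stable descent. The paper's actual mechanism is a polynomial extrapolation, not a linear peeling: it packages the counts into a single univariate polynomial $\SW_{x,w}(\zeta)$ whose constant coefficient is $\#(w,\bx)$, shows that values $\SW_{x,w}(\zeta)$ for $\zeta \in [\delta,1]$ can be estimated directly from traces (by subsampling to a higher deletion rate), and then proves a new extremal result --- via Jensen's formula and the Hadamard three-circle theorem --- that any polynomial with bounded coefficients and constant term differing from $\SW_{x,w}(0)$ by at least $1/2$ must deviate from $\SW_{x,w}$ by $n^{-O(k/(1-\delta))}$ somewhere on $[\delta, (\delta+1)/2]$. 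This converts the problem into one that a linear program over candidate coefficient vectors can solve. Your proposal has no analogue of this extrapolation/conditioning argument, and without one the ill-conditioning you anticipate is not an obstacle to be overcome by careful bookkeeping but a genuine wall.
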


It is interesting that while the best currently known algorithms for the worst-case problem, corresponding to $\sigma=0$, require $\smash{\text{exp}(O(n^{1/3}))}$ time, for any constant perturbation rate we can solve the problem in a dramatically more efficient way. Intuitively, this shows that worst-case instances for trace reconstruction are ``few and far between,'' in the sense that even a small perturbation of such an instance typically makes it much easier to solve.
\ignore{
}

\subsection{Techniques}

Before describing our approach we briefly recall some of the methods used in prior work for the worst-case and average-case problems and discuss why these approaches do not seem applicable to the smoothed problem that we consider.

\medskip
\noindent
{\bf Worst-case algorithms.}
All of the known worst-case algorithms \cite{HMPW08,DOS17,NazarovPeres17} for deletion rates bounded away from 1 are ``mean-based,'' meaning that they only use (estimates of) the $n$ expected values $\smash{\E_{\by \sim \Del_\delta(x)}[\by_i]}$, $i=1,\dots,n$. The two papers \cite{DOS17,NazarovPeres17} both show~that mean-based algorithms can only succeed if they are given estimates of these expectations that are additively $\pm \hspace{0.04cm}\text{exp}(-\Omega(n^{1/3}))$-accurate, and hence mean-based algorithms must inherently use $\smash{\text{exp}(\Omega(n^{1/3}))}$ traces for the worst-case problem.  Inspection of  \cite{DOS17,NazarovPeres17} shows that these worst-case lower bounds  
  for mean-based algorithms 
in fact hold for a $1 - o_n(1)$ fraction of strings in $\zo^n$. Thus, the mean-based algorithmic approach of \cite{HMPW08,DOS17,NazarovPeres17} will not work~for our smoothed variant of the problem (and indeed our algorithm is not a mean-based algorithm).

\medskip
\noindent
{\bf Average-case algorithms.}
The average-case algorithms of \cite{PeresZhai17,HPP18,HPPZ20} work by aligning individual traces (and are not mean-based).  The analysis builds off~of some of the structural results established in \cite{DOS17,NazarovPeres17}, but also employs sophisticated probabilistic arguments which heavily depend on the randomness of the source string $x$ being reconstructed.  

As noted in \cite{HPP18,HPPZ20}, their average-case algorithm extends to the setting in which the target string $x$ is drawn from the $p$-biased distribution over $\zo^n$ (under which each bit $x_i$ is independently taken to be 1 with probability $p$).  Taking $p=\sigma/2$, this corresponds to our smoothed analysis model in the special case in which the original  string $x^\worst$ is promised to be the string $0^n$.  Equivalently, we can view our smoothed analysis problem as a more challenging variant of $p$-biased average-case trace reconstruction --- more challenging because the initial string ($x^{\worst}$) is no longer promised to be $0^n$, but rather \ignore{it\snote{remove the `it'?}}is both arbitrary and moreover unknown to the reconstruction algorithm. It is not  clear how to extend the $p$-biased average-case results of \cite{HPP18,HPPZ20} even to the setting in which the starting string $x^{\worst}$  is a \emph{known} arbitrary string, let alone to our setting in which $x^\worst$ is both arbitrary and unknown.

\medskip

\subsection{Our approach: Reconstruction from subwords and the subword deck\\ reconstruction problem}

In contrast with prior algorithms for the worst-case and average-case problem, our approach is based on first reconstructing \emph{subwords} of the target string and then reconstructing the target string from those subwords.  Recall that a subword of a string $x=(x_1,\dots,x_n)$ is a sequence of contiguous characters of $x$, i.e. a $(b-a+1)$-character string $(x_a,x_{a+1},\dots,x_b)$  for some $1 \leq a \leq b \leq n$.

\medskip

\noindent {\bf Reconstruction from subwords.}
Given a length $1 \leq k \leq n$, let us write $\subword(x,k)$ to denote the \emph{multiset} of all $n-k+1$ length-$k$ subwords of $x$; we refer to this multiset as the \emph{$k$-subword deck of $x$}. For example, if $n=7$ and $k=3$, then the $k$-subword deck of $x=1101011$ would be the 5-element multiset $\{010, 011, 101, 101, 110\}$. 

In general the $k$-subword deck of $x$ may not uniquely identify the string $x$ within $\zo^n$ unless $k$ is very large; for example, the two multisets 
\[
\subword\big(0^{n/4}1^{n/4 - 1} 0^{n/4} 1^{n/4+ 1},k\big) \quad \text{and} \quad
\subword\big(0^{n/4}1^{n/4+1}0^{n/4}1^{n/4 - 1},k\big)
\]
are identical for every $k \leq n/4-1.$ This simple example shows that for worst-case strings $x$,  the $k$-subword deck of $x$ may not suffice to information-theoretically specify $x$ unless $k$ is linear in $n$.

The starting point of our approach is the observation that the situation is markedly better for \emph{random perturbations} of worst-case strings:  for any worst-case string $x^\worst \in \zo^n$, with high probability a random $\sigma$-perturbation $\bx$ of $x^\worst$ is such that $\subword(\bx,k)$ \emph{does} uniquely identify $\bx$ within $\zo^n$ even if $k$ is relatively small. Moreover, there is an efficient algorithm to reconstruct $\bx$ from $\subword(\bx,k)$. This is captured by the following result, which we prove in \Cref{sec:from-subwords-to-x}:


\begin{lemma} [Reconstructing perturbed strings from their subword decks] \label{lem:from-subwords-to-x}
Let $0 < \sigma,\eta
< 1$.  There is a deterministic algorithm {\tt Reconstruct-from-subword-deck} which takes as input the $k$-subword deck $\subword(x,k)$ of a string $x \in \zo^n$, where $k = O({ {\log(n /\eta)}/{\sigma}})$, and outputs either a string in $\zo^n$ or \emph{``\textsf{fail}.''} {\tt Reconstruct-from-subword-deck} runs in $\poly(n)$ time and has the following property:  for any $x^\worst \in \zo^n$, if $\bx$ is a random $\sigma$-perturbation of $x^\worst$ (i.e. $\bx$ is obtained by independently replacing each bit of $x^\worst$ with a uniform random bit with probability $\sigma$), then with probability at least $1-\eta$ 
the output of {\tt Reconstruct-from-subword-deck} on input $\subword(\bx,k)$ is the string $\bx$.
\end{lemma}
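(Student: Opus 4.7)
The plan is to reduce reconstruction from the subword deck to tracing a unique Eulerian trail in a de Bruijn--style directed multigraph, and to argue that for the stated $k$ the random perturbation makes the trail unique with high probability. Given the input multiset $S = \subword(x, k)$, I would construct the multigraph $G_S$ whose vertices are the length-$(k-1)$ binary strings appearing as a prefix or suffix of some $s \in S$, and which contains, for each $s_1 s_2 \cdots s_k \in S$ (with multiplicity), a directed edge from $s_1 \cdots s_{k-1}$ to $s_2 \cdots s_k$. The algorithm {\tt Reconstruct-from-subword-deck} greedily traces an Eulerian trail starting from the unique vertex with more outgoing than incoming edges, outputting the resulting string if tracing succeeds uniquely and \emph{``\textsf{fail}''} otherwise; this runs in $\poly(n)$ time since $|S| \le n$ and $k = O(\log(n/\eta)/\sigma)$. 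The correctness claim to establish is: with probability at least $1 - \eta$ over the perturbation, all $n - k + 2$ length-$(k-1)$ substrings of $\bx$ are pairwise distinct; under this event every vertex of $G_{\subword(\bx, k)}$ has in-degree and out-degree at most one, the graph is a directed simple path, and its unique Eulerian trail is exactly $\bx$.

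To prove this probability bound, fix positions $1 \le i < j \le n - k + 2$, set $d = j - i$, and upper bound the probability of the event $E_{i, j}$ that $\bx_{i+l} = \bx_{i+l+d}$ for all $l = 0, \ldots, k-2$. Writing $p = 1 - \sigma/2$ and $q = \sigma/2$, each $\bx_a$ is independently equal to $x^\worst_a$ with probability $p$ and to $1 - x^\worst_a$ with probability $q$. The positions $\{i, i+1, \ldots, i+d+k-2\}$ partition into $d$ residue classes modulo $d$, and $E_{i, j}$ is the intersection, across classes, of ``class is monochromatic'' events on disjoint (hence independent) sets of bits. A class of size $s$ is monochromatic with probability at most $p^s + q^s$ (the worst case being when all preferred values agree), and I would invoke the elementary inequality
\[
p^s + q^s \le p^{s-1} \qquad \text{for } p + q = 1,\ p \ge 1/2,\ s \ge 1,
\]
which follows from $p^s + q^s - p^{s-1} = -q(p^{s-1} - q^{s-1}) \le 0$. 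Multiplying across the $d$ residue classes (whose sizes sum to $(k-1) + d$) gives $\Pr[E_{i, j}] \le p^{k-1} \le e^{-\sigma(k-1)/2}$, and a union bound over the $\le n^2$ pairs yields total failure probability at most $n^2 e^{-\sigma(k-1)/2} \le \eta$ for $k = O(\log(n/\eta)/\sigma)$ with a suitably large absolute constant.

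The main obstacle is obtaining this per-pair probability bound uniformly in the displacement $d$: the naive estimate $\Pr[\text{class monochromatic}] \le 2 p^s$ introduces a spurious $2^d$ factor that breaks the analysis for small $\sigma$, whereas the sharp inequality $p^s + q^s \le p^{s-1}$ absorbs the class count cleanly and produces the required $p^{k-1}$ bound. Once this bound is in hand, the remaining algorithmic details and the union bound are routine.
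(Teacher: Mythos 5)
Your proposal is correct, and it reaches the same underlying structural target as the paper — showing that with probability $\geq 1-\eta$ the perturbed string $\bx$ has no two equal length-$(k-1)$ substrings — but packages both the algorithm and the probability estimate differently. On the algorithmic side, the paper uses a direct greedy reconstruction (find the unique subword with no right-extension, then extend leftward bit by bit, requiring uniqueness at each step), whereas you recast the same idea as tracing the unique Eulerian path in a de Bruijn multigraph; under the no-collision event the graph is a directed simple path and the two formulations are equivalent, so this is a stylistic but not a substantive difference. The more interesting divergence is in the per-pair probability bound. To show $\Pr[\bx[i:i+k-2]=\bx[j:j+k-2]] \leq (1-\sigma/2)^{k-1}$ when $d=j-i$ may be smaller than $k-1$, the paper conditions sequentially: writing the event as a product of conditional probabilities and observing that in the $\ell$th factor the bit $\bx_{j+\ell}$ is fresh relative to the conditioning, so it matches a fixed target with probability at most $1-\sigma/2$. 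You instead decompose the $d+k-1$ relevant positions into $d$ residue classes mod $d$, note that the event forces each class to be monochromatic and that these events are independent across classes, and then use the sharp inequality $p^s+q^s \leq p^{s-1}$ (with $p=1-\sigma/2$, $q=\sigma/2$) together with $\sum_i (s_i - 1) = k-1$ to recover the same $(1-\sigma/2)^{k-1}$ bound. Both arguments are clean; the paper's conditioning trick avoids the need for the extremal inequality on $p^s+q^s$, while your residue-class decomposition makes the independence structure fully explicit and might be easier to adapt if one wanted to, say, handle position-dependent perturbation rates. Either way the union bound over $O(n^2)$ pairs and the choice $k=O(\log(n/\eta)/\sigma)$ close the argument identically.
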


\noindent 
{\bf The subword deck reconstruction problem.}  \Cref{lem:from-subwords-to-x} naturally motivates the algorithmic problem of \emph{subword deck reconstruction}:  given access to independent traces drawn from $\Del_\delta(x)$ and a length $k$, can we reconstruct the $k$-subword deck of $x$?  Our main algorithmic contribution is an efficient algorithm for this problem:

\begin{theorem} [Reconstructing the $k$-subword deck of $x$] \label{thm:subword-deck-reconstruction}
Let $0 < \delta,\tau < 1$. There is an algorithm {\tt Reconstruct-subword-deck} which takes as input a parameter $1 \leq k \leq n$ and access to independent traces of an unknown source string $x \in \zo^n$.
The running time of {\tt Reconstruct-subword-deck}, as well as the number of traces it uses, is
\[
\left(n \left( {\frac 2 {1-\delta}}\right)^k \right)^{O(1/(1-\delta))} \log \frac{1}{\tau}\ .
\]
{\tt Reconstruct-subword-deck} has the following property: for any string $x \in \zo^n$, with probability at least $1-\tau$ 
the output of {\tt Reconstruct-subword-deck} is the $k$-subword deck $\subword(x,k)$.
\end{theorem}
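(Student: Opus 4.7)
It suffices to recover, for each $s \in \zo^k$, the integer $N_s(x) := |\{a \in [n-k+1]:\ x_{[a,a+k-1]}=s\}|$ to additive accuracy strictly less than $1/2$ with constant success probability (say $0.99$); running $O(\log(1/\tau))$ independent copies and majority-voting each coordinate then amplifies the confidence to $1-\tau$. My starting point is an explicit formula for a natural trace statistic. Let $C_s(\by)$ denote the number of occurrences of $s$ as a contiguous length-$k$ substring of $\by$. Enumerating over $k$-tuples $a_1 < \cdots < a_k$ in $[n]$ that ``survive contiguously'' (i.e., all $a_j$ survive and every position strictly between $a_1$ and $a_k$ that is not one of $a_2,\dots,a_{k-1}$ is deleted) and summing out the trace position yields, for every $\delta' \in (0,1)$,
\[
    \E_{\by \sim \Del_{\delta'}(x)}\bigl[\,C_s(\by)\,\bigr]
    \;=\; (1-\delta')^{k}\, P_s(\delta'),
    \qquad
    P_s(t) \;:=\; \sum_{g=0}^{n-k} M^{(g)}_s\, t^{g},
\]
where $M^{(g)}_s$ counts $k$-subsequences of $x$ with bit pattern $s$ and ``gap'' $g := a_k - a_1 - k + 1$. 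Crucially, $N_s(x) = M^{(0)}_s = P_s(0)$, so it suffices to recover the constant term of $P_s$.

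The plan is a two-step scheme. First, I invoke the standard re-deletion observation: if $\by \sim \Del_\delta(x)$ and each bit of $\by$ is independently deleted with probability $(\delta'-\delta)/(1-\delta)$, the resulting string is distributed as $\Del_{\delta'}(x)$. Using this I can estimate $P_s(\delta')$ at any chosen node $\delta' \in [\delta, 1-c]$ (for a small constant $c > 0$ to be optimized) by averaging $C_s(\by)/(1-\delta')^{k}$ over subsampled traces; since $|C_s(\by)| \leq |\by| \leq n$, Hoeffding gives additive precision $\varepsilon$ per node using $\poly(n, 1/\varepsilon, (1-\delta)^{-k})$ traces. Second, with estimates of $P_s$ at a well-chosen grid of nodes in $[\delta, 1-c]$ (Chebyshev-scaled), I apply a stable polynomial extrapolation formula to approximate $P_s(0)$ and round the result to the nearest integer.

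The main obstacle will be the numerical conditioning of the extrapolation step. A priori $P_s$ has degree up to $n-k$, and Chebyshev extrapolation of a degree-$d$ polynomial from $[\delta, 1-c]$ to $0$ amplifies estimation error by a factor of order $(2/(1-\delta))^{d}$, which is catastrophic for $d=n-k$. The key technical input I would need is to reduce the effective degree of $P_s$ to $G = O(k/(1-\delta))$ at the cost of only a negligible additive truncation error: exploiting the nonnegativity of the coefficients $M^{(g)}_s$ and the crude bound $\sum_g M^{(g)}_s \leq \binom{n}{k}$, the tail $\sum_{g>G} M^{(g)}_s (\delta')^g$ on the chosen nodes decays geometrically in $G$ and so can be absorbed into the overall accuracy budget once $G$ crosses this threshold. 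Running the extrapolation on the truncated polynomial then amplifies noise by only $(2/(1-\delta))^{O(k/(1-\delta))}$.

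Putting the pieces together, I balance the per-node estimation precision against both the truncation error and the extrapolation amplification, so that the recovered estimate of $N_s(x)$ is within $\pm 1/3$. Taking a union bound over the $2^k$ possible length-$k$ strings $s$ and the polynomial-in-$n$ bookkeeping factors needed for the grid, and combining with the $\log(1/\tau)$ factor from the majority-vote confidence amplification, yields the claimed sample and time complexity $\bigl(n(2/(1-\delta))^{k}\bigr)^{O(1/(1-\delta))}\log(1/\tau)$.
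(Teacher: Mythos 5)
Your high-level strategy matches the paper's algorithm {\tt Multiplicity}$_{\text{large-}\delta}$: $\#(s,x)$ is the constant term of a polynomial $P_s$ with nonnegative integer coefficients (this is exactly the paper's $\SW_{x,s}$), its values at $\delta'\ge\delta$ are estimable from subsampled traces (the paper's \Cref{lem:estimator-polynomial}), and one wants to infer $P_s(0)$ from noisy estimates of $P_s$ on an interval bounded away from $0$ after truncating the high-degree tail. However, the recovery step you propose --- Chebyshev-scaled polynomial extrapolation --- has a genuine gap that cannot be repaired by re-tuning the cutoff degree $G$.

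Quantitatively: extrapolating a degree-$G$ polynomial from $[\delta,(1+\delta)/2]$ to $0$ amplifies per-node noise by roughly $T_G\!\bigl(\tfrac{3\delta+1}{1-\delta}\bigr)=\Theta\!\bigl(\tfrac{1}{1-\delta}\bigr)^{G}$ for $\delta$ bounded away from $0$, so you must estimate each node to precision $\kappa\approx(\Theta(1-\delta))^{G}$. The truncated tail at the evaluation nodes must also be $\le\kappa$. With your crude bound $\sum_g M^{(g)}_s\le\binom{n}{k}$, the tail is $\le\binom{n}{k}\bigl(\tfrac{1+\delta}{2}\bigr)^{G}$, and the required inequality becomes $\binom{n}{k}\le\bigl(\Theta\bigl(\tfrac{1-\delta}{1+\delta}\bigr)\bigr)^{G}$; the right side is $<1$ for every $G$, so it holds for no $G$. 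Even switching to the sharper per-degree bound $M^{(g)}_s\le n\binom{g+k-2}{k-2}$ (\Cref{eq:coefficient-bound}) does not save you: the tail then scales like $n^2\binom{G+k}{k}\bigl(\tfrac{1+\delta}{2}\bigr)^{G}$ and the same computation still requires $n^2\binom{G+k}{k}$ to be at most a quantity that is $<1$. The obstruction is structural: Chebyshev-type extrapolation incurs a loss that grows \emph{exponentially in the degree}, and by the time you truncate high enough for the tail to be controlled, that loss has swallowed the budget. (The threshold $G=O(k/(1-\delta))$ you quote is only the point at which the tail terms \emph{start} decreasing, not the point at which the tail is negligible.)

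The paper avoids this by replacing raw extrapolation with an extremal inequality for real polynomials with bounded coefficients (\Cref{thm:complex}, proved via Jensen's formula and the Hadamard three-circle theorem): if all coefficients have magnitude $\le m$ and $|p(0)|\ge 1/2$, then $\sup_{\zeta\in[\delta,(1+\delta)/2]}|p(\zeta)|\ge m^{-O(1/(1-\delta))}$ --- a bound depending only on the coefficient cap $m$, \emph{independent of the degree}. Plugging in the per-degree coefficient bound, truncating at $d=\tfrac{C}{\theta}\bigl(k\ln\tfrac{C}{\theta}+\ln n\bigr)$ with $\theta=(1-\delta)^2/2$ (\Cref{choiced}), and bounding $\binom{d+k}{k}\le n(C/\theta)^{3k}$ via \Cref{hehe102}, the required precision becomes $\kappa=\bigl(\tfrac1n(\tfrac{1-\delta}{2})^k\bigr)^{O(1/(1-\delta))}$ (\Cref{thm:uniqueness-improvement}), and \Cref{lem:upper-bound-high-degree-part} shows the truncation error is a vanishing fraction of $\kappa$, so the bookkeeping closes. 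Algorithmically the extremal inequality is deployed through a linear program (\Cref{fig:tlin-1}) rather than an explicit extrapolation formula. Two minor remarks: the paper enumerates over $s$ via branch-and-prune (proof of \Cref{thm:subword-deck-reconstruction} from \Cref{thm:Multiplicity}) rather than your $2^k$ union bound, though yours happens to be absorbed by the stated running time; and the paper additionally gives a second, independent algorithm for $\delta<1/2$ based on Taylor-expanding a ``generalized deletion polynomial,'' which bypasses complex analysis entirely.
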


\Cref{thm:main} follows immediately from \Cref{lem:from-subwords-to-x} and \Cref{thm:subword-deck-reconstruction}.
We note that \Cref{thm:subword-deck-reconstruction} dominates the overall running time of \Cref{thm:main}, and that \Cref{thm:subword-deck-reconstruction} works for arbitrary strings.  

The algorithm in \Cref{lem:from-subwords-to-x}  and its analysis are relatively straightforward. To explain the main idea, we define the notion of the right (and left) extension of a string. (Starting from this point, 
  it will be convenient for us to index a string~$x \in \zo^n$
  using $\{0, \ldots,n-1\}$ as $x=(x_0,\dots,x_{n-1})$.) 
  
\begin{definition} \label{def:right-extension}
\hspace{-0.05cm}Given a $k$-bit string $w = (w_0,\dots,w_{k-1}) \in \zo^k$, a $k$-bit string $(w_1,\dots,w_{k-1},b)$~for some $b \in \{0,1\}$ is said to be a \emph{right-extension} of $w$.
We define \emph{left-extensions} of a string similarly.
\end{definition}

At a high level, the algorithm relies on the fact that if $\bx$ is obtained by a random $\sigma$-perturbation of $x^\worst$, then $\bx$ has useful \emph{local uniqueness properties}. More precisely, for $k = O({ {\log(n /\tau)}/{\sigma}})$, a simple probabilistic argument shows that with high probability $\bx[n-k:n-1]$ is the unique~element of $\subword(\bx,k)$  with no right-extension in $\subword(\bx,k)$. Consequently, we can identify $\bx[n-k:n-1]$ from the $k$-subword deck $\subword(\bx,k)$ of $\bx$. This argument can be extended inductively without much difficulty to in fact identify the whole of $\bx$.



 In contrast, \Cref{thm:subword-deck-reconstruction} is substantially more challenging. The structural results that underlie \Cref{thm:subword-deck-reconstruction} are based on two different sets of analytic arguments. The first argument only works when $\delta \le 1/2$ and employs (real) Taylor series; the second argument works for the entire range of $\delta <1$ and employs tools from complex analysis. While the first argument is more limited in scope of applicability, it is somewhat more elementary (which we see as a positive feature) and introduces a new ingredient (the so-called ``generalized deletion polynomial;'' see \Cref{sec:generalized}) which might be useful in future work, and thus  we include both arguments in the paper. In this proof overview below we only describe the second argument.

We begin by observing that $\subword(\bx,k)$ can be obtained by computing the multiplicity of occurrences of each $w \in \zo^k$ in the  set $\subword(\bx,k)$; we denote this multiplicity by $\#(w,\bx)$.
 The first key step  is to define a univariate polynomial (in the variable $\zeta$) $\SW_{\bx,w}(\zeta)$ which has the following two key properties: (i) $\SW_{\bx,w}(0)= \#(w,\bx)$, and (ii) using traces from $\Del_{\delta}(\bx)$, we have an unbiased estimator for $\SW_{\bx,w}(\zeta)$ for $\zeta= \delta$.
 Next, observe that given traces from $\Del_{\delta}(\bx)$, we can trivially simulate traces from $\Del_{\delta'}(\bx)$ for any $\delta' \ge \delta$, and hence we can get an unbiased estimator for $\SW_{\bx,w}(\zeta)$ for any $\zeta \in [\delta, 1]$. Recall, though, that our goal is to estimate $\SW_{\bx,w}(\zeta)$ at $\zeta =0$ and thus items (i) and (ii) above do not give us an unbiased estimator for $\SW_{\bx,w}(0)$.

 The most obvious idea at this point would be to do polynomial interpolation and use estimates for  $\SW_{\bx,w}(\zeta)$ for $\zeta \in [\delta, 1]$ to infer $\SW_{\bx,w}(0)$. Unfortunately, directly applying Lagrange interpolation is too naive an approach:  to accurately estimate $\SW_{\bx,w}(0)$, it turns out that we need $\SW_{\bx,w}(\zeta)$ for $\zeta \in [\delta, 1]$ up to error $\pm 2^{-\Theta(n)}$. However, to estimate $\SW_{\bx,w}(\zeta)$ up to error $\pm \kappa$, at least $\poly(1/\kappa)$ traces from $\Del_{\delta}(\bx)$
 are needed (\Cref{lem:estimator-polynomial}). Thus, directly applying Lagrange interpolation would require a sample complexity that grows like $2^{\Theta(n)}$, which is too expensive. 
 
Our approach is to forego Lagrange interpolation and instead (in essence) interpolate using tools from complex analysis. In particular, we prove a new structural result (\Cref{thm:complex}) about polynomials whose constant coefficient is not too small 
 and whose coefficients have magnitude bounded from above by a parameter $m$ (which is set to be $n^k$  in
 our application given that every coefficient of $\SW_{\bx,w}(\zeta)$ is bounded by $n^k$).
 This result implies that $\SW_{\bx,w}(0)$ (which must be an integer given that $\SW_{\bx,w}(0)= \#(w,\bx)$)
    is uniquely determined by the values of $\SW_{\bx,w}(\zeta)$ in the interval $\zeta \in [\delta, 1]$ if these values are given up to error  $n^{O(-k/(1-\delta))}$; see \Cref{thm:uniqueness-info-theory}. Thus, in principle  we can determine  $\SW_{\bx,w}(0)$ by estimating $\SW_{\bx,w}(\zeta)$ for values of $\zeta \in [\delta, 1]$ to error $\pm n^{O(-k/(1-\delta))}$. This essentially implies that $\SW_{\bx,w}(0)$ can be determined using 
 $\approx n^{O( k/(1-\delta))}$ traces from $\Del_{\delta}(\bx)$.  (Note though that this sample complexity is not quite as good as is claimed in \Cref{thm:subword-deck-reconstruction}. We refine the above argument, using stronger coefficient bounds on $\SW_{\bx,w}$ and other ideas described at the beginning of \Cref{sec:strong-version}, to get  \Cref{thm:subword-deck-reconstruction} in its full strength as stated earlier.)
  
 In closing this subsection, we emphasize that while \Cref{thm:uniqueness-info-theory} is about the behavior of polynomials on the real line, its proof crucially uses tools from complex analysis such as Jensen's formula and the Hadamard three circle theorem. We further note that while we have sketched above how $\SW_{\bx,w}(0)$ can be determined in principle, this does not necessarily give an efficient algorithm.  To get an efficient algorithm, we use an approach based on linear programming.




\subsection{Discussion and future work}

We view this paper as a first exploration, establishing that the algorithmic framework of smoothed analysis  can be fruitfully brought to bear on the trace reconstruction problem.  There are many interesting questions and directions for future work, some of which we highlight below.

One natural question is to establish strong sample complexity lower bounds for smoothed trace reconstruction.  Currently the best lower bound we are aware of for this framework is $\smash{\tilde{\Omega}( \log ^{5/2}n )}$ for average-case trace reconstruction due to \cite{Chase19}. Can an $\smash{n^{\Omega(1)}}$ lower bound be established for the smoothed model?

Another natural goal is to quantitatively strengthen our algorithmic result.  In the regime of $\sigma=c/n$ with $c$ a small constant, the smoothed problem reduces to the worst-case problem, and in the regime $\sigma = 1$ it reduces to the average-case problem; however, the running times of our algorithm in these regimes do not match the state-of-the-art running times for the corresponding problems that are provided in \cite{DOS17,NazarovPeres17} and in \cite{HPP18,HPPZ20} respectively. As a concrete first question along these lines, is it possible to improve the sample complexity of our algorithm from its current $n^{1/\sigma}$ dependence on the perturbation rate to a dependence more like $\smash{n^{1/\sigma^{1/3}}}$?

\ignore{
}


\section{Preliminaries} \label{sec:preliminaries}

\noindent {\bf Notation.}
Given a nonnegative integer $n$, we write $[n]$ to denote $\{1,\ldots,n\}$.
Given integers $a\le b$ we write $[a:b]$ to denote $\{a,\ldots,b\}$. 
It will be convenient for us to index a binary string~$x \in \zo^n$
  using $[0:n-1]$ as $x=(x_0,\dots,x_{n-1})$. Given such a string $x$ and integers $0 \leq a < b \leq n-1$, we write $x[a:b]$ to denote the \emph{subword} $(x_a, x_{a+1}, \dots, x_b)$.
We write $\ln$ to denote natural logarithm and $\log$ to denote logarithm to the base 2.

We denote the set of non-negative integers by $\Z_{\geq 0}$.
For a vector $\alpha = (\alpha_1, \dots, \alpha_{\ell}) \in \Z_{\geq 0}^{\ell}$, we write $|\alpha|$ to denote $\alpha_1 + \alpha_2 + \dots + \alpha_{\ell}$, and write $\alpha!$ to denote $\alpha_1! \alpha_2! \cdots \alpha_{\ell}!$.

\medskip

\noindent {\bf Subword deck.} Fix a string $x \in \zo^n$ and an integer $k \in [n]$. A $k$-subword of $x$ is a 
 (contiguous) subword of $x$ of length $k$, given by $(x_a, x_{a+1}, \dots, x_{a+k-1})$ for some $a \in [0: n-k]$. For a string $w \in \zo^k$, let $\#(w,x)$ denote the number of occurrences of $w$ as a subword of $x$. We define the \emph{$k$-subword deck} of $x$, denoted $\subword(x,k)$, to be the $(n-k+1)$-size (unordered) multiset of all $k$-subwords~of $x$.
 We also extend the notation of $\#(w,x)$ to strings $w\in \{0,1,*\}^k$, where 
   $*$ is the wildcard symbol:
   $\#(w,x)$ is the sum of $\#(w',x)$ over all $w'\in \{0,1\}^k$ with $w_i'=w_i$ for every $w_i\ne *$. 

\medskip

\noindent {\bf Distributions.}
We use bold font letters to denote probability distributions and 
  random variables, which should be clear from the context.
We write ``$\bx \sim \bX$'' to indicate that random variable~$\bx$~is 
  distributed according to distribution $\bX$.

\medskip

\noindent {\bf Deletion channel and traces.}
Throughout this paper the parameter $\delta:0 <$ $\delta < 1$ denotes~the \emph{deletion probability}.  Given a string $x \in \zo^n$, we write $\Del_\delta(x)$ to denote the distribution of the string that results from passing  $x$ through the $\delta$-deletion channel (so the distribution $\Del_\delta(x)$ is supported on $\zo^{\leq n}$), and we refer to a string in the support of $\Del_\delta(x)$ as a \emph{trace} of $x$.  Recall that a random trace $\by \sim \Del_\delta(x)$ is obtained by independently deleting each bit of $x$ with probability $\delta$ and concatenating the surviving bits.\hspace{0.05cm}\footnote{For simplicity in this work we assume that the deletion probability $\delta$ is known to the reconstruction algorithm.  We~note that it is possible to obtain a high-accuracy estimate of $\delta$ simply by measuring the average length of traces received from the deletion channel.}

\medskip

\noindent {\bf Perturbation and smoothed analysis.}. The perturbation model we consider corresponds to the standard notion of perturbation of an $n$-bit string which arises in the analysis of Boolean functions.  Given an $n$-bit string $x^\worst \in \zo^n$, a \emph{$\sigma$-perturbation of $x^\worst$} is a random string $\bx \in \zo^n$ obtained by independently setting each coordinate $\bx_i$ to be $x^\worst_i$ with probability $1-\sigma$ and to be uniformly random with the remaining probability $\sigma$.  Equivalently, $\bx$ is a random string that is $(1-\sigma)$-correlated with $x^\worst$; in the notation of Chapter~2 of \cite{Odonnell-book}, we may write this as ``$\bx \sim N_{1-\sigma}(x^\worst)$.''

We recall that in the smoothed analysis framework, an initial string
  $x^\worst \in \zo^n$ is selected (in what may be thought of as an adversarial manner), and then a $\sigma$-perturbation $\bx$ of $x^\worst$ is drawn at random from $N_{1-\sigma}(x^\worst)$, and the algorithm runs on instance $\bx$.  The goal is to develop algorithms which, for every $x^\worst \in \zo^n$, succeed with high probability on the perturbed instance $\bx \sim N_{1-\sigma}(x^\worst).$




\section{Reconstructing perturbed worst-case strings from their\\ subword decks: 
Proof of \Cref{lem:from-subwords-to-x}} \label{sec:from-subwords-to-x}

In this section we prove \Cref{lem:from-subwords-to-x}:

\medskip

 \noindent {\bf Restatement of \Cref{lem:from-subwords-to-x}} (Reconstructing perturbed strings from their subword decks){\bf .}
\emph{
Let $0 < \sigma,\eta < 1$.  There is a deterministic algorithm {\tt Reconstruct-from-subword-deck} which takes as input the $k$-subword deck $\subword(x,k)$ of a string $x \in \zo^n$, where $k = O({ {\log(n /\eta)}/{\sigma}})$, and outputs either a string in $\zo^n$ or \emph{``\textsf{fail}.''} {\tt Reconstruct-from-subword-deck} runs in $\poly(n)$ time and has the following property:  For any string $x^\worst \in \zo^n$, if $\bx$ is a random $\sigma$-perturbation of $x^\worst$ (i.e. $\bx$ is obtained by independently replacing each bit of $x^\worst$ with a uniform random bit with probability $\sigma$), then with probability at least $1-\eta$ 
the output of {\tt Reconstruct-from-subword-deck} on input $\subword(\bx,k)$ is the string $\bx$.
}

\medskip

The idea of \Cref{lem:from-subwords-to-x} is very simple:  a probabilistic argument shows that for any worst-case string $x^\worst$, a random $\sigma$-perturbation introduces enough variability into $\bx \sim N_{1-\sigma}(x^\worst)$ so that the $k$-subwords comprising the $k$-subword deck of $\bx$ can be easily pieced together in a unique way to yield $\bx$ by a simple greedy algorithm. We now provide details.


Given $\subword(x,k)$ of a string $x\in \zo^n$,
  we use the following greedy algorithm to recover $x$:
\begin{flushleft}\begin{enumerate}
\item 
We will store the output in $y$, a string of length $n$. 
\item Let $w\in \subword(x,k)$ be a string that fails to have a right-extension in 
  $\subword(x,k)$.
(Note the only $k$-subword of $x$ that can fail to have a right-extension in $\subword(x,k)$
  is $x[n-k:n-1]$.)
If no such $w$ exists, return \textsf{fail}; otherwise set $y[n-k:n-1]=w$ and $\ell=n-k$.
\item While $\ell>0$, do the following: 
  Find $w\in \subword(x,k)$ as a left-extension of $y[\ell:\ell+k-1]$.
  (Note that if $y$ agrees with $x$ so far, then such a left-extension must exist.)
  If $w$ is not unique (counted with multiplicity), return \textsf{fail}; otherwise set $y_{\ell-1}=w_0$ and decrement $\ell$ by 1.
\item When $\ell=0$, return $y$.
\end{enumerate}\end{flushleft}

It is clear from the description of the greedy algorithm above and comments therein that
  either it returns \textsf{fail} or there is no ambiguity (in filling in the last $k$ bits
  and extending from there bit by bit) and $x$ is recovered correctly as $y$ at the end.
We use the following definition to capture strings $x$ on which the greedy algorithm succeeds:

\begin{definition} \label{def:good}
     An $n$-bit string $x$ is said to be \emph{$k$-good} if 
\begin{flushleft}\begin{itemize}
\item[(i)] for every $j\in [n-k]$, there is exactly one string in $\subword(x,k)$ (counted
  with multiplicity) that is a left-extension of the subword $x[j:j+k-1]$; and
\item [(ii)] the subword $x[n-k:n-1]$ does not have a right-extension in $\subword(x,k).$
\end{itemize}\end{flushleft}
\end{definition}

To prove
\Cref{lem:from-subwords-to-x}, it remains only to establish the following claim:


\begin{claim} \label{claim:subword-unique}
    Fix any string $x^{\worst} \in \zo^n$.  Then for $k = O ( \log (n/\eta) /{\sigma} )$
    \[
\Prx_{\bx \sim N_{1-\sigma}(x^{\worst})}\big[\hspace{0.03cm}\bx\text{~is~$k$-good}\hspace{0.08cm}\big] \geq 1 - \eta.
\] 
\end{claim}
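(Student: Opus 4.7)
The plan is to identify a clean sufficient condition for $\bx$ to be $k$-good and then show that this condition holds with high probability via a union bound. Write $\bs_j := \bx[j:j+k-2]$ for $j \in [0:n-k+1]$. Unwinding \Cref{def:good}, a $k$-subword $\bx[i:i+k-1]$ is a left-extension of $\bx[j:j+k-1]$ iff $\bs_{i+1} = \bs_j$, and it is a right-extension of $\bx[n-k:n-1]$ iff $\bs_i = \bs_{n-k+1}$. Since $i = j-1$ always supplies one left-extension, both clauses of \Cref{def:good} are implied by the single stronger statement that $\bs_0,\bs_1,\ldots,\bs_{n-k+1}$ are pairwise distinct. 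It therefore suffices to bound pairwise collision probabilities and union bound.

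The technical heart of the proof is the inequality
\[
\Pr[\bs_i = \bs_j] \;\leq\; (1-\sigma/2)^{k-1} \qquad \text{for every } i \ne j.
\]
To prove this, write $\bx_a = x^\worst_a \oplus \be_a$ where the noise bits $\be_a \sim \mathrm{Bernoulli}(\sigma/2)$ are independent. Setting $d := |j - i|$ (WLOG $j > i$), the event $\bs_i = \bs_j$ is the conjunction of the $k-1$ XOR constraints
\[
\be_{i+r} \oplus \be_{i+r+d} \;=\; x^\worst_{i+r} \oplus x^\worst_{j+r}, \qquad r = 0,1,\ldots,k-2.
\]
Form the graph $G$ on vertex set $\{\be_a : a \in [i:i+k-2+d]\}$ with one edge per constraint. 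By inspection each vertex appears in at most two constraints, so $G$ is a disjoint union of simple paths (one per residue class modulo $d$; this collapses to a perfect matching when $d \geq k-1$). Along any such path $P$, once the root variable is fixed the remaining variables are uniquely determined by the edge equations; since each $\be_a$ independently takes any prescribed value with probability at most $\max(\sigma/2,\,1-\sigma/2) = 1-\sigma/2$, summing over the two values of the root gives $\Pr[\text{all edges of } P \text{ satisfied}] \leq (1-\sigma/2)^{|E(P)|}$. Multiplying over the vertex-disjoint paths and using $\sum_P |E(P)| = k-1$ yields the claim.

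The claim then follows by a union bound over the $\binom{n-k+2}{2} \le n^2$ pairs:
\[
\Pr[\bx \text{ is not } k\text{-good}] \;\leq\; n^2 (1-\sigma/2)^{k-1},
\]
and since $\ln(1/(1-\sigma/2)) \geq \sigma/2$, choosing $k = O(\log(n/\eta)/\sigma)$ with a sufficiently large absolute constant makes this at most $\eta$. The main obstacle is the pairwise collision bound in the overlapping regime $d < k-1$: there the bits being equated share coordinates and are not independent, and the graph/path viewpoint above is the device that cleanly reduces the question to the independence of the underlying noise bits $\be_a$; the disjoint case $d \geq k-1$ is essentially immediate from independence.
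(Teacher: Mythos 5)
Your proof is correct. The top-level structure coincides exactly with the paper's: observe that failure of $k$-goodness forces two of the $(k{-}1)$-subwords of $\bx$ to coincide, union-bound over the $O(n^2)$ pairs, and prove the pairwise collision bound $\Pr\big[\bx[i:i{+}k{-}2]=\bx[j:j{+}k{-}2]\big]\le(1-\sigma/2)^{k-1}$. Where you diverge is in how that collision bound is established. The paper conditions sequentially on $\bx_i,\dots,\bx_{j+\ell-1}$ and notes that, at step $\ell$, the bit $\bx_{j+\ell}$ is fresh (it has the largest index seen so far, hence is independent of everything already revealed) and agrees with the fixed value of $\bx_{i+\ell}$ with probability at most $1-\sigma/2$; chaining the $k-1$ conditional factors gives the bound. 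You instead pass to the i.i.d.\ noise bits $\bx_a = x^\worst_a\oplus\be_a$, encode the $k-1$ equalities as XOR constraints, and observe that the constraint graph on $\{\be_a\}$ is a vertex-disjoint union of paths (one per residue class mod $d=j-i$); fixing the root of each path pins down the remaining path vertices, each fresh bit matching its prescribed value with probability at most $\max(\sigma/2,1-\sigma/2)=1-\sigma/2$, and the paths are independent because they are vertex-disjoint. Both routes exploit the same underlying independence; the paper's sequential conditioning is slightly shorter and avoids introducing the auxiliary graph, while your path decomposition makes the combinatorial structure of the overlapping case ($d<k-1$) explicit, which some readers may find more transparent. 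Your final accounting (union bound over $\binom{n-k+2}{2}\le n^2$ pairs and the choice $k=O(\log(n/\eta)/\sigma)$) matches the paper's.
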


\begin{proof}
    Let $E(\bx)$ be the event that $\bx$ is \emph{not} $k$-good. We observe that for $E(\bx)$ to occur, there must exist indices $0 \leq i < j \leq n-k+1$ such that the $(k-1)$-subwords of $\bx$ starting at positions $i$ and $j$ are equal, 
     i.e., $\bx[i:i+k-2] = \bx[j:j+k-2]$. 
     In particular, we have the following (where here and subsequently all probabilities are over the random draw of $\bx \sim N_{1-\sigma}(x^\worst)$):
    \begin{align*}
        \Pr\big[E(\bx)\big] &\leq \Pr\Big[\hspace{0.03cm}\exists\hspace{0.06cm}i,j \text{ such that } \bx[i:i+k-2] = \bx[j:j+k-2]\hspace{0.02cm}\Big]\\[0.6ex]
        &\leq \sum_{0\leq i < j \leq n-k+1} \Pr\Big[\hspace{0.05cm}\bx[i:i+k-2] = \bx[j:j+k-2]\hspace{0.02cm}\Big]. \tag{by a union bound}
    \end{align*}
    Let $E_{i,j}(\bx)$ denote the event that $\bx[i:i+k-2] = \bx[j:j+k-2]$. To prove the claim, it suffices to show that $\Pr[E_{i,j}(\bx)] \leq \eta/n^2$ for each fixed pair $1 \leq i < j \leq n-k+1$.
    
To this end, we write the probability of $E_{i,j}(\bx)$ as 
$$
\Pr\big[\bx_i=\bx_j\big]\cdot \prod_{\ell=1}^{k-2}\hspace{0.1cm}
\Pr\Big[\hspace{0.04cm}\bx_{i+\ell}=\bx_{j+\ell}\hspace{0.06cm}\Big|\hspace{0.06cm} \bx_{i+h}=\bx_{j+h}\ \text{for all $h=0,\ldots,\ell-1$}\hspace{0.04cm}\Big].
$$    
The first factor $\Pr\big[\bx_i=\bx_j\big]$ is at most $1-\sigma/2$ because for any
  fixed value $b$ of $\bx_i$, $\bx_j$ agrees with $b$ after the perturbation with probability 
  at most $1-\sigma/2$.
The upper bound of $1-\sigma/2$ holds for every other factor in the product. 
For the $\ell$th factor, we note that for any fixed values of 
  $\bx_{i},\ldots,\bx_{j+\ell-1}$ that satisfy the conditioning part 
  $\bx_{i+h}=\bx_{j+h}$ for all $h=0,...,\ell-1$,
  $\bx_{j+\ell}$ agrees with the fixed value of $\bx_{i+\ell}$
  with probability at most $1-\sigma/2$.
Thus, by setting $k=C\log (n/\eta) / \sigma$ for some large enough constant $C$, we have 
    \[
        \Pr\big[E_{i,j}(\bx)\big] \leq (1-\sigma/2)^{k-1} \leq \exp\left(-\Omega\left(\log \frac{n}{\eta}\right)\right) \leq \frac{\eta}{n^2}.
    \]
This finishes the proof of the claim.
\end{proof}

\ignore{
}


\section{Reconstructing the $k$-subword deck:  Proof of \Cref{thm:subword-deck-reconstruction}} \label{sec:proof-of-thm-subword-deck-reconstruction}

The remaining task to establish the main result, \Cref{thm:main}, is to prove \Cref{thm:subword-deck-reconstruction} \blue{(restated below)}, which gives an efficient algorithm to reconstruct the $k$-subword deck of an arbitrary source string $x \in \zo^n$ given access to independent traces of $x$. \blue{Throughout this section, let $\rho = (1-\delta)/2$.}

\medskip

\noindent {\bf Restatement of \Cref{thm:subword-deck-reconstruction}} (Reconstructing the $k$-subword deck){\bf .}
\emph{
Let $0 < \delta,\tau' < 1$.  There is an algorithm {\tt Reconstruct-subword-deck} which takes as input a parameter 
$1 \leq k \leq n$
and access to independent traces of an unknown source string $x \in \zo^n$.
The running time of {\tt Reconstruct-subword-deck}, as well as the number of traces it uses, is
\[
\left( n/{\rho^k}\right)^{O(1/\rho)}  \log (1/\tau').
\]
{\tt Reconstruct-subword-deck} has the following property:  For any unknown string $x \in \zo^n$, with probability at least $1-\tau'$, 
{\tt Reconstruct-subword-deck}
  outputs $\subword(x,k)$.
}

\medskip

The main algorithmic ingredient that underlies \Cref{thm:subword-deck-reconstruction} is an algorithm for a closely related but slightly simpler problem. This algorithm, which we call {\tt Multiplicity}, takes as input a string $w \in \zo^k$ and access to independent traces from an unknown source string $x$, and it outputs $\#(w,x)$, the multiplicity of $w$ in the $(n-k+1)$-element multiset $\subword(x,k)$ (note that this multiplicity can be zero if $w$ is not present as a subword of $x$):

\begin{theorem} \label{thm:Multiplicity}
Let $0 < \delta,\tau < 1$ and let $\rho = (1-\delta)/2$. There is an algorithm {\tt Multiplicity} which takes as input a string $w \in \zo^k$ 
and access to independent traces of an unknown source string $x \in \zo^n$.
{\tt Multiplicity} runs in \blue{$\left(n / \rho^k\right)^{O(1/\rho)} \log (1/\tau)$} 
time and uses \blue{$\left(n / \rho^k\right)^{O(1/\rho)} \log (1/\tau)$}
many traces from $\Del_\delta(x)$, and has the following property:  For any unknown source string $x \in \zo^n$, with probability at least $1-\tau$   
the output of {\tt Multiplicity} is $\#(w,x)$ \emph{(}i.e.~the number of occurrences of $w$ as a subword of $x$\emph{)}.
\end{theorem}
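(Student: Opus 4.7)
The plan is to implement the complex-analytic interpolation strategy sketched in the techniques subsection. First I would define a univariate \emph{subword polynomial} $\SW_{x,w}(\zeta)$ of degree at most $n$, designed so that (i) $\SW_{x,w}(0) = \#(w,x)$ exactly, and (ii) for every $\zeta \in [\delta,1]$, $\SW_{x,w}(\zeta)$ equals the expected value of a bounded, explicitly computable statistic on a single trace $\by \sim \Del_{\zeta}(x)$. The natural candidate is the polynomial whose $j$-th coefficient counts (signed) alignments of $w$ into $x$ that survive $j$ deletions; a standard generating-function calculation then shows that the expected number of occurrences of $w$ as a subword of a random trace is exactly $\SW_{x,w}(\delta)$. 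Basic counting gives coefficient bounds $|[\zeta^j]\SW_{x,w}| \leq n^{O(k)}$, and the constant term is a nonnegative integer.

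Next I would observe that given traces from $\Del_\delta(x)$, one can simulate traces from $\Del_{\delta'}(x)$ for every $\delta' \in [\delta,1]$ by applying an independent secondary deletion pass at rate $(\delta'-\delta)/(1-\delta)$. Combined with the estimator above, this yields a bounded, unbiased estimator for $\SW_{x,w}(\zeta)$ at every $\zeta \in [\delta,1]$, so by a Hoeffding bound we can produce $\pm\kappa$-accurate estimates of $\SW_{x,w}(\zeta_i)$ on a polynomial-size grid $\{\zeta_i\} \subset [\delta,1]$, with confidence $1-\tau$, using $\poly(n, 1/\kappa)\cdot\log(1/\tau)$ traces.

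The crux is then to invoke the uniqueness theorem (\Cref{thm:uniqueness-info-theory}, to be established via Jensen's formula and the Hadamard three-circle theorem on an annulus around $[\delta,1]$): for a polynomial $p$ of degree $\leq n$ with coefficient magnitudes at most $m = n^{O(k)}$, the value $p(0)$ is determined by the values of $p$ on $[\delta,1]$ up to an additive error of the form $m \cdot (\text{something})^{-\Omega(1/(1-\delta))}$. Matching this against the target error $<\tfrac12$ (so that rounding recovers the integer $\#(w,x)$) forces $\kappa \approx (\rho^k/n)^{O(1/\rho)}$, which in turn yields the advertised trace complexity $(n/\rho^k)^{O(1/\rho)}\log(1/\tau)$.

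Finally, to convert this identifiability into an efficient algorithm, I would set up a linear program whose variables are the coefficients $c_0,\ldots,c_n$ of a candidate polynomial $q$, with constraints $|q(\zeta_i)-\widehat{\SW}_{x,w}(\zeta_i)| \leq \kappa$ at the grid points and $|c_j| \leq n^{O(k)}$; the uniqueness bound forces $c_0$ to lie within $\tfrac12$ of $\#(w,x)$, so rounding returns the correct count. I expect the main obstacle to be twofold: first, engineering $\SW_{x,w}$ so that property (ii) genuinely holds via a bounded-variance estimator while keeping $\|\SW_{x,w}\|_\infty$ small enough to feed into the uniqueness theorem; and second, proving \Cref{thm:uniqueness-info-theory} with the tight $O(1/\rho)$ exponent, which is what distinguishes our sample complexity from the naive $2^{\Theta(n)}$ bound one gets from Lagrange interpolation and seems to require a careful choice of annulus radii in the three-circle argument.
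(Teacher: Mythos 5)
Your proposal correctly reproduces the paper's complex-analytic route (what it calls {\tt Multiplicity}$'_{\text{large-}\delta}$ and its refinement): subword polynomial with $\SW_{x,w}(0)=\#(w,x)$, simulating $\Del_{\delta'}$ for $\delta'\ge\delta$, the Jensen/Hadamard uniqueness theorem, and the LP over candidate coefficient vectors. But there is a quantitative gap that prevents your argument, as written, from reaching the stated bound $(n/\rho^k)^{O(1/\rho)}\log(1/\tau)$.

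The issue is in your accounting of the coefficient bound. You use the uniform bound $|[\zeta^\ell]\SW_{x,w}|\le n^{O(k)}$ and feed it into the uniqueness theorem (\Cref{thm:complex}), which for a polynomial with coefficients bounded by $m$ gives a lower bound of the form $m^{-O(1/\rho)}$ on $\sup_{[\delta,(\delta+1)/2]}|p|$. Plugging $m=n^{O(k)}$ yields $\kappa=n^{-O(k/\rho)}$, and hence sample complexity $n^{O(k/\rho)}$ — that is the \emph{weaker} \Cref{thm:Multiplicity-large-delta-weak}, which is quasipolynomial when $k\approx\log n$. The jump you make to ``$\kappa\approx(\rho^k/n)^{O(1/\rho)}$'' does not follow from the premise $m=n^{O(k)}$; note $(n/\rho^k)^{O(1/\rho)}$ is $\poly(n)\cdot 2^{O(k)}$ for constant $\rho$, which is exponentially smaller (in $k$) than $n^{O(k/\rho)}=n^{O(k)}$.

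To close this gap the paper adds two ingredients you do not have. First, a \emph{degree-dependent} coefficient bound $m_\ell = n\binom{\ell+k-2}{k-2}$ (\Cref{eq:coefficient-bound}), which is far smaller than $n^k$ for small $\ell$. Second, and crucially, a \emph{high-degree truncation}: because the evaluation interval $[\delta,(\delta+1)/2]$ is bounded away from $1$, the degree-$\ell$ terms decay geometrically, and one shows (\Cref{lem:upper-bound-high-degree-part}) that the tail $p^{>d}$ is negligible once $d=O\bigl((k\ln(1/\rho)+\ln n)/\rho^2\bigr)$. Running the Hadamard argument on $p^{\le d}$, whose coefficients are bounded by $n\binom{d+k}{k}\le n\cdot(1/\rho)^{O(k)}$, gives the sharper $\kappa=(n(2/(1-\delta))^k)^{-O(1/(1-\delta))}$ (\Cref{thm:uniqueness-improvement}), and hence the claimed $(n/\rho^k)^{O(1/\rho)}$ complexity. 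Without the truncation-plus-sharper-coefficient step, your argument only proves the $n^{O(k/\rho)}$ version. You did flag ``keeping $\|\SW_{x,w}\|_\infty$ small'' as an open obstacle, which is the right instinct, but the resolution is the truncation argument, not merely tuning the three-circle radii.
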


A standard ``branch-and-bound'' argument gives \Cref{thm:subword-deck-reconstruction} from \Cref{thm:Multiplicity}:

\medskip

\noindent \emph{Proof of \Cref{thm:subword-deck-reconstruction} using \Cref{thm:Multiplicity}.}
Let $\ell=\lfloor \log n\rfloor$. We first consider the case that $k \leq \ell $.  In this case {\tt Reconstruct-subword-deck} simply runs {\tt Multiplicity}$(w)$ once for each of the $2^k$ strings $w \in \zo^k$, with the confidence parameter ``$\tau$'' for each run of {\tt Multiplicity} set to $\tau'/2^k$. Since we can reuse the same traces for each of the $2^k$ runs, in this case the running time is \blue{$2^k \left(n / \rho^k\right)^{O(1/\rho)} \log (2^k/\tau') = \left(n / \rho^k\right)^{O(1/\rho)} \log (1/\tau')$}  
and the sample complexity is
\blue{$\left(n / \rho^k\right)^{O(1/\rho)} \log (1/\tau')$}.

Next we consider the case that $k > \ell$.  To avoid an exponential running time dependence on $k$, the algorithm uses a simple ``branch-and-prune'' approach.  In the first stage, similar to the previous paragraph, {\tt Reconstruct-subword-deck} runs {\tt Multiplicity} on each of the $2^\ell$ strings $w \in \zo^{\ell}$ with confidence parameter $\tau'/(2nk)$, thereby obtaining the $\ell$-subword deck $\subword(x,\ell)$.  It then executes $k - \ell$ many successive stages $j=1,2,\dots,k-\ell,$ where in stage $j$ the algorithm determines the $(\ell + j)$-subword deck of $x$ using the $(\ell + j-1)$-subword deck of $x$. It does this in each stage as follows:  for each of the (at most $n$) distinct strings $w \in \subword(x,\ell + j - 1)$, the algorithm runs {\tt Multiplicity}$(w0)$ and {\tt Multiplicity}$(w1)$, each with confidence parameter $\tau'/(2nk)$.

The correctness of this approach follows from the trivial fact that an $(\ell+j)$-bit string can only be present in $\subword(x,\ell+j)$ if its $(\ell+j-1)$-bit prefix is present in $\subword(x,\ell+j-1)$. Since there are at most 
$n + 2n(k-\ell) < 2kn$
many runs of {\tt Multiplicity} overall, the running time of {\tt Reconstruct-subword-deck} is at most \blue{$O(kn) \cdot \left(n / \rho^k\right)^{O(1/\rho)} \log (2kn/\tau') = \left(n / \rho^k\right)^{O(1/\rho)} \log (1/\tau')$} 
and the sample complexity is at most
\blue{$\left(n / \rho^k\right)^{O(1/\rho)} \log (1/\tau')$}, 
and \Cref{thm:subword-deck-reconstruction} is proved. \qed

\medskip

Thus, in the rest of the paper, we focus on proving \Cref{thm:Multiplicity}. 

\subsection{The subword polynomial}

The following ``subword polynomial'' plays an important role in our approach:

\begin{definition} \label{def:subword-polynomial}
Given $x \in \zo^n$ and $w=(w_0,\dots,w_{k-1}) \in \zo^k$, let $\SW_{x,w}(\zeta)$ be the following univariate polynomial of degree $n-k$:
\[
    \SW_{x,w}(\zeta) := \sum_{\mathclap{\substack{\alpha \in \Z_{\geq 0}^{k-1} \\ |\alpha| \leq n-k}}} \, \#\left(w_0 \ast^{\alpha_1} w_1 \ast^{\alpha_2} w_2 \cdots w_{k-2} \ast^{\alpha_{k-1}} w_{k-1},\hspace{0.06cm} x\right) \, \cdot\zeta^{|\alpha|}.
\]
\end{definition}

In words, the degree-$\ell$ coefficient of the subword polynomial $\SW_{x,w}$ is the number of ways that $w$ arises as a substring of $x$ with a total of exactly $\ell$ extraneous additional characters interspersed among the characters of $w$. In particular, we have that the constant term of $\SW_{x,w}$ (i.e. $\SW_{x,w}(0)$, since $0^0=1$) is equal to $ \#(w,x)$, the frequency of $w$ as a subword of $x$, which is what \Cref{thm:Multiplicity} aims to estimate efficiently from traces of $x$. 

\subsection{Outline of our approach}

We prove \Cref{thm:Multiplicity} by giving two different algorithms depending on the value of the deletion rate $\delta$.
The first of these algorithms, {\tt Multiplicity}$_{\text{small-}\delta}$, gives a simple and direct approach to compute the value $\SW_{x,w}(0) =\#(w,x)$; however this approach requires the deletion rate $\delta$ to be less than $1/2$. 
This approach is based on analyzing a new object, the ``generalized deletion polynomial,'' that we believe may be useful for subsequent work.  
The second of these algorithms, {\tt Multiplicity}$_{\text{large-}\delta}$, gives a different and somewhat more involved algorithm (involving linear programming and a new extremal result on polynomials, proved using complex analysis) that can be used for any deletion rate $\delta<1$.  

Readers who are  interested in a simple analysis (albeit one that works only for $\delta <1/2$)
  may wish to focus on {\tt Multiplicity}$_{\text{small-}\delta}$ (\Cref{sec:small-delta}).
Readers who are  interested in a more involved approach that succeeds for all $\delta<1$ may wish to focus on {\tt Multiplicity}$_{\text{large-}\delta}$ (\Cref{sec:large-delta}).  The two algorithms and analyses are each self-contained; each may be read independently of the other.

For each of the two algorithms, 
  we first give a simpler version of the analysis which establishes a quantitatively weaker version of the result, with an $n^{O(k)}$ running time and sample complexity (ignoring the dependence on other parameters); see the statements of \Cref{thm:Multiplicity-small-delta-weak} and \Cref{thm:Multiplicity-large-delta-weak}, at the beginnings of \Cref{sec:small-delta} and \Cref{sec:large-delta} respectively, for detailed statements of these weaker versions.  In \Cref{sec:strong-version} we quantitatively strengthen both  \Cref{thm:Multiplicity-small-delta-weak} and \Cref{thm:Multiplicity-large-delta-weak} to achieve a $\poly(n) \cdot \exp(O(k))$ running time and sample complexity, and thereby complete the proof of \Cref{thm:Multiplicity}.


\section{{\tt Multiplicity}$'_{\text{small-}\delta}$:  An algorithm for deletion rate $\delta <1/2$
}
\label{sec:small-delta}

In this section we prove \Cref{thm:Multiplicity-small-delta-weak},
  a weaker version of \Cref{thm:Multiplicity}.
It gives an algorithm that has $n^{O(k)}$ running time and sample complexity (ignoring the dependence on other parameters) and works when $\delta < 1/2$.
Actually \Cref{thm:Multiplicity-small-delta-weak} works when $\delta \le 1/2$;
  we only require $\delta < 1/2$ later in \Cref{sec:improvement_small_delta} 
  to achieve
  the improved running time and sample complexity in \Cref{thm:Multiplicity} based on a similar approach (the running time achieved in that section will depend on how close $\delta$ is to $1/2$).
  

\begin{theorem} \label{thm:Multiplicity-small-delta-weak}
Let $0 < \delta \le 1/2$. 
There is an algorithm  {\tt Multiplicity}$'_{\text{small-}\delta}$ which takes as input a string $w \in \zo^k$, access to independent traces of an unknown source string $x \in \zo^n$, and a parameter $\tau>0$.
{\tt Multiplicity}$'_{\text{small-}\delta}$ draws $n^{O(k)}  \cdot \log (1/\tau)$ traces from $\Del_\delta(x)$, runs in time $n^{O(k)}  \cdot \log (1/\tau)$,  and has the following property:  For any unknown source string $x \in \zo^n$, with probability at least $1-\tau$ 
the output of {\tt Multiplicity}$'_{\text{small-}\delta}$ is the multiplicity of $w$ in $\subword(x,k)$ \emph{(}i.e.~the number of occurrences of $w$ as a subword of $x$\emph{)}. 
\end{theorem}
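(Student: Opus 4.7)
The plan is to exhibit an unbiased estimator for $\#(w,x)$ that can be computed in polynomial time from a single trace, and then apply Hoeffding's inequality to control the sample complexity. The new object I would introduce --- the ``generalized deletion polynomial'' of a trace --- is defined, for a binary string $u \in \zo^m$, a word $w \in \zo^k$, and a real $\zeta$, by
$$
T_w(u,\zeta) \;:=\; \sum_{\substack{0 \le i_0 < i_1 < \cdots < i_{k-1} < m \\ u_{i_\ell}=w_\ell \text{ for all } \ell}} \zeta^{\,i_{k-1}-i_0-(k-1)}.
$$
The first step is to compute $\E_{\by\sim\Del_\delta(x)}[T_w(\by,\zeta)]$: fixing positions $j_0<\cdots<j_{k-1}$ in $x$ with $x_{j_\ell}=w_\ell$, all $k$ bits survive with probability $(1-\delta)^k$, and conditioned on survival, the number of surviving bits in each gap $(j_{s-1},j_s)$ is $\mathrm{Binomial}(\alpha_s,1-\delta)$ with $\alpha_s=j_s-j_{s-1}-1$, independently across gaps. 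Using the generating-function identity $\E[\zeta^{\mathrm{Binomial}(\alpha,1-\delta)}]=(\delta+(1-\delta)\zeta)^{\alpha}$ and Definition~\ref{def:subword-polynomial}, one obtains the key identity
$$
\E_{\by\sim\Del_\delta(x)}\bigl[T_w(\by,\zeta)\bigr] \;=\; (1-\delta)^k\cdot \SW_{x,w}\bigl(\delta+(1-\delta)\zeta\bigr).
$$

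Setting $\zeta^*:=-\delta/(1-\delta)$ makes the argument of $\SW_{x,w}$ equal to zero, so
$$
\E\bigl[T_w(\by,\zeta^*)\bigr] \;=\; (1-\delta)^k\cdot \SW_{x,w}(0) \;=\; (1-\delta)^k\cdot \#(w,x).
$$
The algorithm {\tt Multiplicity}$'_{\text{small-}\delta}$ therefore draws $N$ independent traces, evaluates $T_w(\by,\zeta^*)$ on each via a straightforward $\poly(n)$-time dynamic program (e.g.\ first count, for each pair $(i_0,i_{k-1})$, the number of valid interior tuples $(i_1,\dots,i_{k-2})$, then weight by $(\zeta^*)^{i_{k-1}-i_0-(k-1)}$), averages these values into $\widehat M$, and returns the nearest integer to $\widehat M/(1-\delta)^k$.

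Correctness reduces to a Hoeffding bound, for which I need an a priori uniform magnitude bound on $|T_w(\by,\zeta^*)|$. For $\delta<1/2$ we have $|\zeta^*|=\delta/(1-\delta)<1$, and the geometric-series bound
$$
\bigl|T_w(\by,\zeta^*)\bigr|\;\le\; |\by|\cdot\bigl(1-|\zeta^*|\bigr)^{-(k-1)}\;\le\; n\cdot\bigl((1-\delta)/(1-2\delta)\bigr)^{k-1}
$$
suffices; the boundary case $\delta=1/2$ is handled by the cruder uniform bound $|T_w(\by,\zeta^*)|\le \binom{n}{k}\le n^k$. In either case $|T_w(\by,\zeta^*)|\le n^{O(k)}$. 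Since $\#(w,x)$ is an integer, it is sufficient to estimate $\E[T_w(\by,\zeta^*)]$ to additive accuracy better than $(1-\delta)^k/3$, and Hoeffding then yields the required sample complexity $N = n^{O(k)}\log(1/\tau)$.

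The main obstacle is the sign cancellation in $T_w(\by,\zeta^*)$: because $\zeta^*<0$, individual summands alternate in sign, and one might worry the estimator has variance much larger than its mean. However, a uniform a priori magnitude bound of $n^{O(k)}$ is enough for Hoeffding to control the empirical average directly, with no fine-grained cancellation analysis; the sign cancellation costs only a factor of $(1-\delta)^{-2k}=\exp(O(k))$ in sample size, which is absorbed into the $n^{O(k)}$ bound.
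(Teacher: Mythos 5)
Your proof is correct, and it reaches the same structural identity as the paper but by a genuinely more elementary route. In fact your $T_w(u,\zeta)$ is exactly the subword polynomial $\SW_{u,w}(\zeta)$ of the trace $u$ itself, and your key identity
\[
\E_{\by\sim\Del_\delta(x)}\bigl[\SW_{\by,w}(\zeta)\bigr] \;=\; (1-\delta)^k\cdot \SW_{x,w}\bigl(\delta+(1-\delta)\zeta\bigr)
\]
is equivalent (under the substitution $\zeta \mapsto \delta+(1-\delta)\zeta$ and linearity of expectation) to the paper's \Cref{thm:Taylor-subword-poly}. The difference is in how the identity is derived and how the estimator is packaged. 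The paper introduces the multivariate ``generalized deletion-channel polynomial'' $P_{x,f}$, computes $D^\beta P_{x,f}(\vec{\delta})$ via the combinatorial probability $\Pr[\gamma\to\beta]$, and then performs a multivariate Taylor expansion around $\vec{\delta}$ (\Cref{claim:derivatives}, \Cref{lemma:Taylor-of-del-poly}); it also estimates each $E_\alpha$ separately for the $\approx n^{k-1}$ patterns $\alpha$ and combines them via a union bound. You instead sum over $\beta$ in closed form using the binomial generating function $\E[\zeta^{\mathrm{Bin}(\alpha,1-\delta)}]=(\delta+(1-\delta)\zeta)^\alpha$ --- algebraically the same sum, but no derivatives or Taylor series appear --- and you treat $T_w(\by,\zeta^*)$ as a single per-trace scalar to which Hoeffding applies directly. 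Your route buys a shorter, more self-contained derivation of the identity and a cleaner single-estimator concentration argument (no union bound over $\alpha$); the paper's route buys the more general $P_{x,f}$ object, which the authors flag as potentially useful beyond this theorem. Both yield the stated $n^{O(k)}\log(1/\tau)$ sample complexity. One small caution on implementation: if you compute $T_w(\by,\zeta^*)$ by accumulating partial weights $(\zeta^*)^{-i_0}$ in a forward DP, those intermediate values can be exponentially large in $n$ since $|\zeta^*|<1$; your stated pair-wise scheme (count interior tuples for each $(i_0,i_{k-1})$ and then weight by $(\zeta^*)^{i_{k-1}-i_0-(k-1)}$, an exponent which is always nonnegative) avoids this and runs in $\poly(n)$ time with $\poly(k\log n)$-bit arithmetic.
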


In \Cref{sec:strong-version} we will build on \Cref{thm:Multiplicity-small-delta-weak} to give a stronger version that has $\poly(n) \cdot \exp(O(k))$ running time and sample complexity (ignoring the dependence on other parameters) for $\delta < 1/2$.

The rest of this section is organized as follows. In \Cref{sec_estimate_poly_values}, we give an equivalent expression for $\SW_{x,w}(\zeta)$ in \Cref{thm:Taylor-subword-poly},
  which relates the subword polynomial to traces drawn from the deletion channel.
The proof uses the \emph{generalized deletion polynomial} and is presented in
  \Cref{sec:generalized}.  
This new expression for $\SW_{x,w}(\zeta)$ allows one to 
  evaluate $\SW_{x,w}(\zeta)$ at $\zeta=0$   
  up to a small error (say, $\pm 0.1$) using traces of $x$ (see Corollary \ref{corr-poly-eval})
  when $\delta\le 1/2$.
Given that $\SW_{x,w}(0)$ is an integer, 
  the result can be rounded to obtain the exact value of $\SW_{x,w}(0)$;
  this finishes the proof of \Cref{thm:Multiplicity-small-delta-weak}. 
  
We remark that the expression for $\SW_{x,w}(\zeta)$ given in \Cref{thm:Taylor-subword-poly}
  works for any $\zeta\in \C$, when viewing  $\SW_{x,w}(\zeta)$ as a polynomial over $\C$,
  and may be useful for subsequent work. 
Indeed Corollary \ref{corr-poly-eval} shows that $\SW_{x,w}(\zeta)$ can be evaluated at any 
  $\zeta\in B_{1-\delta}(\delta)$ up to a small error
  using traces of $x$,
  where $B_{1-\delta}(\delta)$ denotes the complex disc with center 
  $\delta$ and radius $1-\delta$.
We need $\delta\le 1/2$ so that $0\in B_{1-\delta}(\delta)$.
  

\subsection{Evaluating $\SW_{x,w}(\zeta)$ for  $\zeta \in B_{1-\delta}(\delta)$ using traces of $x$} \label{sec_estimate_poly_values}

In the rest of this section we consider $\SW_{x,w}(\zeta)$ as a polynomial over complex numbers.
The main technical ingredient in the algorithm {\tt Multiplicity}$'_{\text{small-}\delta}$ is the following theorem, which relates~the subword polynomial to traces drawn from the deletion channel:

\begin{theorem}
    \label{thm:Taylor-subword-poly}
    Let $x, k$ and $w$ be as above. Then for all $\zeta \in \C$
    we have
    \begin{equation*}
        \SW_{x,w}(\zeta) = \frac{1}{(1-\delta)^k} \; \sum_{\mathclap{\substack{\alpha \in \Z_{\geq 0}^{k-1} \\ |\alpha| \leq n-k}}} \, \E_{\by \sim \Del_\delta(x)}\hspace{-0.04cm} \Big[ \#(w_0 \ast^{\alpha_1} w_1 \ast^{\alpha_2} w_2 \cdots w_{k-2} \ast^{\alpha_{k-1}} w_{k-1},\hspace{0.06cm} \by) \Big]\cdot  \left( \frac{\zeta-\delta}{1-\delta} \right)^{|\alpha|}.
    \end{equation*}
\end{theorem}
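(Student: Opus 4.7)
The identity is a polynomial identity in $\zeta$, so the plan is to prove it by a direct combinatorial calculation: expand the expectation on the right-hand side, swap the orders of summation, and recognize a binomial-theorem collapse that reduces the right-hand side to $\SW_{x,w}(\zeta)$. Concretely, I would begin by unpacking $\E_{\by\sim\Del_\delta(x)}[\#(w_0\ast^{\alpha_1}w_1\cdots \ast^{\alpha_{k-1}} w_{k-1},\by)]$ in terms of $x$. A pattern-match of $w$ with gaps $\alpha_1,\dots,\alpha_{k-1}$ inside the trace $\by$ corresponds to choosing positions $q_0<q_1<\cdots<q_{k-1}$ in $x$ with $x_{q_i}=w_i$, such that all $k$ of these positions survive the channel and, for each $i\in[k-1]$, exactly $\alpha_i$ of the $g_i:=q_i-q_{i-1}-1$ intermediate positions survive (while the rest are deleted). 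Therefore, writing $S(w,x)$ for the set of tuples $(q_0,\dots,q_{k-1})$ with $x_{q_i}=w_i$,
\[
\E_{\by}\bigl[\#(w_0\ast^{\alpha_1}w_1\cdots w_{k-1},\by)\bigr] \;=\; (1-\delta)^k\!\!\sum_{(q_0,\dots,q_{k-1})\in S(w,x)}\prod_{i=1}^{k-1}\binom{g_i}{\alpha_i}(1-\delta)^{\alpha_i}\delta^{g_i-\alpha_i}.
\]

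Next, I would substitute this expression into the right-hand side of the claimed identity, cancel the prefactor $(1-\delta)^{-k}$, and interchange the sum over $\alpha$ with the sum over tuples in $S(w,x)$. Since $(\zeta-\delta)^{|\alpha|}/(1-\delta)^{|\alpha|}$ factors as $\prod_i ((\zeta-\delta)/(1-\delta))^{\alpha_i}$, the inner sum separates into a product over $i\in[k-1]$ of sums of the form
\[
\sum_{\alpha_i=0}^{g_i}\binom{g_i}{\alpha_i}(1-\delta)^{\alpha_i}\delta^{g_i-\alpha_i}\left(\frac{\zeta-\delta}{1-\delta}\right)^{\alpha_i}\;=\;\sum_{\alpha_i=0}^{g_i}\binom{g_i}{\alpha_i}(\zeta-\delta)^{\alpha_i}\delta^{g_i-\alpha_i}\;=\;\zeta^{g_i},
\]
where the last equality is the binomial theorem applied to $(\zeta-\delta)+\delta=\zeta$. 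Hence the right-hand side equals $\sum_{(q_0,\dots,q_{k-1})\in S(w,x)}\zeta^{\sum_i g_i}$.

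Finally, I would match this expression to the definition of $\SW_{x,w}(\zeta)$. Grouping tuples in $S(w,x)$ by the vector of gaps $(g_1,\dots,g_{k-1})\in\Z_{\geq 0}^{k-1}$ (which plays the role of $\alpha$ in \Cref{def:subword-polynomial}), the number of tuples realizing a given gap pattern $\alpha$ is exactly $\#(w_0\ast^{\alpha_1}w_1\cdots \ast^{\alpha_{k-1}}w_{k-1},x)$, and the exponent of $\zeta$ for that group is $|\alpha|$. Summing gives $\sum_{|\alpha|\le n-k}\#(w_0\ast^{\alpha_1}\cdots w_{k-1},x)\cdot\zeta^{|\alpha|}=\SW_{x,w}(\zeta)$, completing the proof.

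The argument is essentially a bookkeeping exercise, so I do not expect any single step to be a serious obstacle; the most delicate point is simply keeping the indices $q_i$, $g_i$, and $\alpha_i$ straight and justifying the swap of summations (which is finite and thus unconditional). The key structural insight driving the proof is that the substitution $\zeta\mapsto (\zeta-\delta)+\delta$ converts survival-vs-deletion bookkeeping on the gaps into the plain powers of $\zeta$ appearing in $\SW_{x,w}$; everything else is routine.
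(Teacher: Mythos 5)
Your proof is correct, and it takes a genuinely different route from the paper's. The paper first introduces a general class of $(x,f)$-deletion-channel polynomials $P_{x,f}\colon\C^k\to\C$, proves (via Claim~\ref{claim:derivatives}) that $\E_{\by\sim\Del_\delta(x)}[f(\by_{\beta_1},\ldots,\by_{\beta_1+\cdots+\beta_k+k-1})]$ equals (up to normalization) the partial derivative $D^\beta P_{x,f}(\vec\delta)$, and then derives \Cref{lemma:Taylor-of-del-poly} by applying the multivariate Taylor expansion of $P_{x,f}$ around $\vec\delta$; \Cref{thm:Taylor-subword-poly} then falls out by specializing to $f=\Indicator{\cdot=w}$ and evaluating at $\xi=(1,\zeta,\ldots,\zeta)$. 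You instead expand the expectation directly as a sum over source tuples $(q_0,\ldots,q_{k-1})$ in $x$, exchange the (finite) order of summation, and let the binomial theorem collapse each gap sum to $\zeta^{g_i}$. These are two faces of the same underlying computation --- the Taylor coefficients of $\zeta^{g_i}$ about $\delta$ are precisely the binomial terms you sum --- but your organization bypasses the derivative/Taylor-series machinery entirely and is more elementary and self-contained. What the paper's framing buys is the general-purpose object $P_{x,f}$ and \Cref{lemma:Taylor-of-del-poly}, which the authors flag as potentially useful beyond this one identity; your direct argument is arguably shorter and cleaner if the goal is only \Cref{thm:Taylor-subword-poly} itself. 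One small point worth making explicit in a write-up: after swapping the sums you pass from the constrained sum over $|\alpha|\le n-k$ to the unconstrained product of sums $\sum_{\alpha_i=0}^{g_i}$; this is justified because $\binom{g_i}{\alpha_i}=0$ whenever $\alpha_i>g_i$ and $\sum_i g_i\le n-k$ for every admissible tuple, so the extra terms all vanish.
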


Before proving \Cref{thm:Taylor-subword-poly} in \Cref{sec:generalized} we use it to obtain
  the following corollary.

\begin{corollary}[Corollary of \Cref{thm:Taylor-subword-poly}] \label{corr-poly-eval}
    Let $x,k,w$ be as above, and let $\eps > 0$. Then, given access to traces $\by \sim \Del_\delta(x)$, there exists an algorithm which, given as input any $\zeta \in B_{1-\delta}(\delta)$, evaluates $\SW_{x,w}(\zeta)$  up to error $\pm \eps$ with success probability at least $1 - \tau$. The algorithm takes
    $$\left(\frac{n}{1-\delta}\right)^{O(k)}\cdot \frac{1}{\eps^2} \cdot \log \left(\frac{1}{\tau}\right)$$ many traces and running time.
\end{corollary}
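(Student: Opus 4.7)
My plan is to use the identity provided by Theorem \ref{thm:Taylor-subword-poly} as a recipe: for each multi-index $\alpha \in \Z_{\geq 0}^{k-1}$ with $|\alpha| \leq n-k$, estimate the expectation $\mu_\alpha := \E_{\by \sim \Del_\delta(x)}[\#(w_0 \ast^{\alpha_1} w_1 \cdots \ast^{\alpha_{k-1}} w_{k-1},\hspace{0.06cm} \by)]$ empirically, and then plug the estimates into that formula. Concretely, I would draw $N$ independent traces $\by^{(1)}, \ldots, \by^{(N)} \sim \Del_\delta(x)$, form the empirical mean $\hat{\mu}_\alpha$ of the corresponding pattern count over those traces, and output
\[
\widehat{\SW}_{x,w}(\zeta) := \frac{1}{(1-\delta)^k} \sum_{\substack{\alpha \in \Z_{\geq 0}^{k-1}\\ |\alpha| \leq n-k}} \hat{\mu}_\alpha \cdot \left(\frac{\zeta-\delta}{1-\delta}\right)^{|\alpha|}.
\]
Crucially, the same batch of $N$ traces can be reused across all $\alpha$, avoiding a multiplicative sample-complexity cost per tuple.

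Next I would bound the propagated error deterministically. The number of tuples appearing in the sum is $\binom{n-1}{k-1} \leq n^k$, the prefactor $1/(1-\delta)^k$ is fixed, and $|(\zeta-\delta)/(1-\delta)|^{|\alpha|} \leq 1$ precisely because $\zeta \in B_{1-\delta}(\delta)$. Thus if every $\hat{\mu}_\alpha$ falls within $\pm\eps_0$ of $\mu_\alpha$, the triangle inequality forces the final output to lie within $\pm n^k \eps_0/(1-\delta)^k$ of $\SW_{x,w}(\zeta)$, and setting $\eps_0 := \eps(1-\delta)^k/n^k$ gives the desired $\pm\eps$ target.

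For the sample complexity, note that each count $Z_\alpha(\by) := \#(w_0 \ast^{\alpha_1} \cdots \ast^{\alpha_{k-1}} w_{k-1}, \by)$ takes integer values in $\{0, 1, \ldots, n\}$ (the trace has length at most $n$). By Hoeffding's inequality combined with a union bound over the $\leq n^k$ tuples,
\[
N \;=\; O\!\left(\frac{n^2}{\eps_0^2}\cdot \log \frac{n^k}{\tau}\right) \;=\; \left(\frac{n}{1-\delta}\right)^{O(k)} \cdot \frac{1}{\eps^2} \cdot \log \frac{1}{\tau}
\]
traces suffice to make every $\hat{\mu}_\alpha$ simultaneously $\eps_0$-accurate with probability at least $1-\tau$. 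The running time is dominated by computing the $n^{O(k)}$ pattern counts on each of the $N$ traces (each count being computable in $\poly(n)$ time by a straightforward dynamic program), so the overall running time matches the claimed bound.

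There is no deep obstacle here beyond careful bookkeeping of the two sources of blow-up, namely the $n^k$ tuples and the $(1-\delta)^{-k}$ prefactor. The condition $\zeta \in B_{1-\delta}(\delta)$ is exactly what is needed: it is what keeps $|(\zeta-\delta)/(1-\delta)|^{|\alpha|} \leq 1$ and thereby prevents an additional geometric factor in $|\alpha|$ from entering the error analysis. Outside this disc the same approach would require exponentially more samples in $|\alpha|$, motivating the more delicate complex-analytic arguments used later to reach $\zeta = 0$ from the accessible region $[\delta,1]$.
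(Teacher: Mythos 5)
Your proposal is correct and matches the paper's proof essentially step for step: estimate each expectation $E_\alpha$ empirically from a shared batch of traces, bound the number of multi-indices by $n^k$, exploit $|(\zeta-\delta)/(1-\delta)|\le 1$ on $B_{1-\delta}(\delta)$ to keep the propagated error linear in the per-term error, and conclude by a Chernoff/Hoeffding bound plus a union bound with $\eps_0=\eps(1-\delta)^k/n^k$. The only cosmetic difference is that you cite Hoeffding where the paper says Chernoff; both are the same bounded-variable tail bound here.
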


Recall that $\SW_{x,w}(0) = \#(w,x)$.
 When $\delta \le 1/2$, the disc $B_{1-\delta}(\delta)$ contains the origin.  Therefore, setting $\eps = 1/3$ in Corollary~\ref{corr-poly-eval} directly implies an algorithm {\tt Multiplicity}$'_{\text{small-}\delta}$   that 
 uses
  $((n / (1-\delta))^{O(k)} ) \cdot \log (1/\tau) = n^{O(k)} \cdot \log (1/\tau)$ traces 
   and running time to evaluate $\SW_{x,w}(0)$ up to an error of $\eps=1/3$,
  which succeeds with probability at least $1-\tau$. 
It then rounds the result to the nearest integer to obtain $\SW_{x,w}(0)=\#(w,x)$ given that the latter
  is an integer. 
This finishes the proof of Theorem~\ref{thm:Multiplicity-small-delta-weak}.

\begin{proof}[Proof of Corollary \ref{corr-poly-eval}]
The algorithm simply draws $$s=\left(\frac{n}{1-\delta}\right)^{O(k)}\cdot \frac{1}{\eps^2} \cdot \log \left(\frac{1}{\tau}\right)$$ many traces $\by_1,
  \ldots,\by_s$ of $x$ and uses them to compute an empirical estimate $\tilde{E}_\alpha$ of
\begin{equation}\label{exp} E_\alpha:=\Ex_{\by \sim \Del_\delta(x)}\hspace{-0.04cm} \Big[ \#(w_0 \ast^{\alpha_1} w_1 \ast^{\alpha_2} w_2 \cdots w_{k-2} \ast^{\alpha_{k-1}} w_{k-1},\hspace{0.06cm} \by) \Big]\end{equation} 
for each $\alpha \in \Z_{\geq 0}^{k-1}$ with $|\alpha| \leq n-k$.
This is done by computing $\#(w_0 \ast^{\alpha_1} w_1 \ast^{\alpha_2} w_2 \cdots w_{k-2} \ast^{\alpha_{k-1}} w_{k-1},\hspace{0.06cm} \by_i)$
  for each $\alpha$ and $\by_i$ (in time polynomial in $n$)
  for each pair of $\alpha$ and $\by_i$),
  and then taking the average over $\by_1,\ldots,\by_s$ for each $\alpha$.
Given that the number of $\alpha$'s is at most $n^k$, the overall running time is 
  $s\cdot n^k\cdot \poly(n)$,
  as stated in Corollary \ref{corr-poly-eval}.
  

Given that $\#(w_0 \ast^{\alpha_1} w_1 \ast^{\alpha_2} w_2 \cdots w_{k-2} \ast^{\alpha_{k-1}} w_{k-1},\hspace{0.06cm} \by)$ in (\ref{exp}) is between $0$ and $n$, 
  it follows from our choice of $s$, a Chernoff bound and a union bound, that with probability at least $1-\tau$,
  every empirical estimate $\tilde{E}_\alpha$ satisfies
  \begin{equation} \label{eq:one-estimate}
  |\tilde{E_\alpha}-E_\alpha|\le \eps \cdot \left(\frac{1-\delta}{n}\right)^k .
  \end{equation}
Using 
$$
\left|\frac{\zeta-\delta}{1-\delta}\right|\le 1
$$
when $\zeta\in B_{1-\delta}(\delta)$, we can use $\tilde{E}_\alpha$ to obtain an estimate of
  $\SW_{x,w}(\zeta)$:
$$
\frac{1}{(1-\delta)^k}\sum_{\alpha} \tilde{E}_\alpha\cdot 
\left( \frac{\zeta-\delta}{1-\delta} \right)^{|\alpha|}
$$
and the estimate is correct up to error 
$$
\frac{1}{(1-\delta)^k}\sum_{\alpha} |\tilde{E}_\alpha -E_\alpha|\le \eps,
$$
where the inequality holds by \Cref{eq:one-estimate} given that the number of $\alpha$'s is no more than $n^k$.
\end{proof}


\subsection{Generalized deletion polynomial and the proof of \Cref{thm:Taylor-subword-poly}}\label{sec:generalized}

In this subsection we prove \Cref{thm:Taylor-subword-poly}. 
We first introduce a more general class of polynomials, the $(x,f)$-deletion-channel polynomials 
  (see  \Cref{def:deletionpoly}), of which $\SW_{x,w}$ is a special case.
We then prove an extension of \Cref{thm:Taylor-subword-poly} (see \Cref{lemma:Taylor-of-del-poly})
  which applies to 
  every $(x,f)$-deletion channel polynomial; \Cref{thm:Taylor-subword-poly} follows as a direct corollary.
While we don't need the full generality of \Cref{lemma:Taylor-of-del-poly} to 
  prove \Cref{thm:Taylor-subword-poly},
  working with this new class of polynomials makes our proofs cleaner.
We also believe that \Cref{lemma:Taylor-of-del-poly} in the general form
  may be useful for subsequent analysis.

The following notation will be convenient for us. Given vectors $\gamma \in \Z_{\ge 0}^k$ and $\xi \in \C^k$, and a function $P(z_1,\ldots,z_k)$ from $\C^k$ to $\C$, we define 
\[
  \qquad \xi^\gamma = \xi_1^{\gamma_1} \cdots \xi_k^{\gamma_k} \qquad\text{and}\qquad D^{\gamma} P = \frac{\partial^{\abs{\gamma}} P}{\partial z_1^{\gamma_1} \cdots \partial z_k^{\gamma_k}} .
\]
Recall that $\gamma! = \gamma_1! \cdots \gamma_k!$ and $\abs{\gamma} = \gamma_1 + \cdots + \gamma_k$.
\noindent For $v \in \C$, we will denote the vector $(v, v, \cdots, v) \in \C^k$ by $\vec{v}$, where the dimension $k$ will be clear from context.

We define the class of $(x,f)$-deletion-channel polynomials:
  \begin{definition}\label{def:deletionpoly}
    Given $f\colon \zo^k \to \C$ and a string $x \in \zo^n$, the {\em $(x,f)$-deletion-channel polynomial} $P_{x,f}\colon \C^k \to \C$ is defined by
    \[
      P_{x,f}(\xi) := \sum_{\substack{\gamma \in \Z_{\geq 0}^k\\ \abs{\gamma} \le n - k}} f(x_{\gamma_1}, x_{\gamma_1 + \gamma_2 + 1}, \ldots, x_{\gamma_1 + \dots + \gamma_k + (k-1)})\cdot \xi^\gamma .
    \]
  \end{definition}

We call $P_{x,f}$ the $(x,f)$-deletion-channel polynomial because by choosing
   $k=1$ and $f \colon \zo \to \zo$ to be the $1$-bit identity function $\id(x) = x$, we have that
  \[
    P_{x,\id}(\xi) = \sum_{i=0}^{n-1} x_i \xi^i
  \]
  is the deletion-channel polynomial defined in~\cite{DOS17}.

  The next theorem shows that under a change of variables, the coefficients of $P_{x,f}$ with respect to the new variables can be expressed in terms of the expectation of $f$ over traces of $x$ drawn from the deletion channel. We state it and then show that \Cref{thm:Taylor-subword-poly} follows as a direct corollary. 

  \begin{theorem} \label{lemma:Taylor-of-del-poly}
    For any $\xi \in \C^k$, we have
    \[
      P_{x,f}(\xi) = \frac{1}{(1-\delta)^k} \; \sum_{\substack{\beta \in \Z_{\ge 0}^k\\ \abs{\beta} \le n - k}} \E_{\by \sim \Del_\delta(x)} \Big[ f(\by_{\beta_1}, \ldots, \by_{\beta_1 + \dots + \beta_k + k - 1}) \Big] \cdot \left( \frac{\xi - \vec{\delta}}{1-\delta} \right)^{\beta}.
    \]
  \end{theorem}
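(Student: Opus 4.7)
The plan is to recognize that the right-hand side of the theorem is essentially the multivariate Taylor expansion of $P_{x,f}$ around the point $\vec{\delta} \in \C^k$, combined with a probabilistic identification of the Taylor coefficients as expectations over the deletion channel. Since $P_{x,f}$ is a polynomial of total degree at most $n-k$, the multivariate Taylor formula gives
\[
P_{x,f}(\xi) = \sum_{\substack{\beta \in \Z_{\ge 0}^k\\ |\beta| \le n-k}} \frac{D^\beta P_{x,f}(\vec{\delta})}{\beta!}\,(\xi - \vec{\delta})^\beta.
\]
Comparing term-by-term with the target formula, it is enough to prove, for every $\beta$ with $|\beta| \le n-k$, the coefficient identity
\[
\frac{D^\beta P_{x,f}(\vec{\delta})}{\beta!} \;=\; \frac{1}{(1-\delta)^{k+|\beta|}}\,\E_{\by \sim \Del_\delta(x)}\Big[ f(\by_{\beta_1}, \by_{\beta_1+\beta_2+1}, \ldots, \by_{\beta_1+\cdots+\beta_k+k-1}) \Big].
\]

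First I would compute the left-hand side directly. Differentiating the definition of $P_{x,f}$ term-by-term and evaluating at $\vec{\delta}$ gives, after the substitution $\eta = \gamma - \beta$,
\[
\frac{D^\beta P_{x,f}(\vec{\delta})}{\beta!} \;=\; \sum_{\substack{\eta \in \Z_{\ge 0}^k\\ |\eta| \le n-k-|\beta|}} f\bigl(x_{i_1(\beta,\eta)}, \ldots, x_{i_k(\beta,\eta)}\bigr)\,\prod_{j=1}^k \binom{\beta_j+\eta_j}{\beta_j}\,\delta^{\eta_j},
\]
where $i_j(\beta,\eta) := (j-1) + \sum_{\ell \le j}(\beta_\ell + \eta_\ell)$. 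Here I have used $\frac{(\beta+\eta)!}{\beta!\,\eta!} = \prod_j \binom{\beta_j+\eta_j}{\beta_j}$.

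Next I would expand the right-hand side by conditioning on which $k$ positions of $x$ produce the specified coordinates of the trace $\by$. For a fixed tuple $0 \le i_1 < i_2 < \cdots < i_k \le n-1$, the event ``$\by_{\beta_1} = x_{i_1}$, $\by_{\beta_1+\beta_2+1} = x_{i_2}$, \ldots, $\by_{\beta_1+\cdots+\beta_k+k-1} = x_{i_k}$'' requires that each $x_{i_j}$ survives (contributing a factor $(1-\delta)^k$) and that, setting $i_0 := -1$, exactly $\beta_j$ of the $i_j - i_{j-1} - 1$ characters strictly between $x_{i_{j-1}}$ and $x_{i_j}$ survive (contributing $\binom{i_j-i_{j-1}-1}{\beta_j}(1-\delta)^{\beta_j}\delta^{i_j-i_{j-1}-1-\beta_j}$), with no constraint on positions past $x_{i_k}$. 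Setting $\eta_j := i_j - i_{j-1} - 1 - \beta_j \ge 0$ (so that $|\eta| \le n-k-|\beta|$ and $i_j = i_j(\beta,\eta)$), this rewrites the expectation as
\[
(1-\delta)^{k+|\beta|}\,\sum_{\substack{\eta \in \Z_{\ge 0}^k\\ |\eta| \le n-k-|\beta|}} f\bigl(x_{i_1(\beta,\eta)}, \ldots, x_{i_k(\beta,\eta)}\bigr)\,\prod_{j=1}^k \binom{\beta_j+\eta_j}{\beta_j}\,\delta^{\eta_j},
\]
which, after dividing by $(1-\delta)^{k+|\beta|}$, matches the expression for $D^\beta P_{x,f}(\vec{\delta})/\beta!$ computed above, completing the proof.

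There is no analytic obstacle; the only mild difficulty is bookkeeping, in particular keeping the indices $i_j(\beta,\eta)$ straight and verifying that the $+k$ shift arising from the ``$(j-1)$'' offsets in the definition of $P_{x,f}$ matches the corresponding shift in the trace coordinates $\beta_1+\cdots+\beta_j+(j-1)$. The change of variables $\eta_j = i_j - i_{j-1} - 1 - \beta_j$ is the key bridge, and the multinomial identity $\frac{(\beta+\eta)!}{\beta!\,\eta!} = \prod_j \binom{\beta_j+\eta_j}{\beta_j}$ is what makes the combinatorial factors from differentiation agree with those from the binomial survival probabilities.
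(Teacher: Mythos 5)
Your proof is correct and takes essentially the same approach as the paper: both prove the coefficient identity $D^\beta P_{x,f}(\vec{\delta})/\beta! = (1-\delta)^{-(k+|\beta|)}\,\E_{\by}[f(\by_{\beta_1},\ldots,\by_{\beta_1+\cdots+\beta_k+k-1})]$ by decomposing the expectation over alignment events (which positions of $x$ land in the designated slots of $\by$), then invoke the multivariate Taylor expansion of the polynomial $P_{x,f}$ at $\vec{\delta}$. The paper phrases the alignment sum directly in the $\gamma$ variable and recognizes the resulting terms as derivatives of monomials, while you shift to $\eta=\gamma-\beta$ and match both sides explicitly, but this is only a presentational difference.
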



  \begin{proof}[Proof of \Cref{thm:Taylor-subword-poly} assuming \Cref{lemma:Taylor-of-del-poly}]
    Given $x \in \zo^n$ and $w \in \zo^k$ for some $k \in [n]$ as in the statement of \Cref{thm:Taylor-subword-poly},
    we take $f : \zo^k \to \zo$ to be the indicator function of $w$:
    \[
        f(b_1, b_2, \ldots, b_k) = \Indicator{(b_1, b_2, \ldots, b_k) = w}.
    \]
Using this $f$ we get the following connection between $\SW_{x,w}(\zeta)$ and $P_{x,f}(1,\zeta,
  \zeta,\ldots,\zeta)$:
    \begin{align*}
      \SW_{x,w}(\zeta) &= \sum_{\mathclap{\substack{\alpha \in \Z_{\ge 0}^{k-1} \\ |\alpha| \leq n-k}}} \, \#\left(w_0 \ast^{\alpha_1} w_1 \ast^{\alpha_2} w_2 \cdots w_{k-2} \ast^{\alpha_{k-1}} w_{k-1}, \hspace{0.06cm}x\right) \, \cdot \zeta^{|\alpha|}\\
                       &= \sum_{\mathclap{\substack{\alpha \in \Z_{\ge 0}^{k-1} \\ |\alpha| \leq n-k}}} \sum_{i=0}^{n-k-|\alpha|} \, f(x_i, x_{i+\alpha_1+1}, x_{i + \alpha_1 + \alpha_2+2}, \ldots, x_{i + |\alpha| + k-1}) \, 1^i \, \zeta^{|\alpha|} 
        = P_{x,f}(1, \zeta, \zeta, \cdots, \zeta)\end{align*}
Applying \Cref{lemma:Taylor-of-del-poly} on $P_{x,f}(1, \zeta, \zeta, \ldots, \zeta)$, we have
\begin{align*}
        \SW_{x,w}(\zeta)&= \frac{1}{(1-\delta)^k} \; \sum_{\mathclap{\substack{\alpha \in \Z_{\geq 0}^{k-1} \\ |\alpha| \leq n-k}}} \sum_{i=0}^{n-k-|\alpha|} \E_{\by \sim \Del_\delta(x)}\Big [ f(\by_i, \by_{i+\alpha_1+1}, \ldots, \by_{i + |\alpha| + k-1})\Big ]
        \cdot \left( \frac{\zeta - \delta}{1-\delta} \right)^{|\alpha|}\\
        &= \frac{1}{(1-\delta)^k} \; \sum_{\mathclap{\substack{\alpha \in \Z_{\geq 0}^{k-1} \\ |\alpha| \leq n-k}}} \, \E_{\by \sim \Del_\delta(x)} \Big[ \#(w_0 \ast^{\alpha_1} w_1 \ast^{\alpha_2} w_2 \cdots w_{k-2} \ast^{\alpha_{k-1}} w_{k-1}, \hspace{0.06cm}\by) \Big] \cdot \left( \frac{\zeta-\delta}{1-\delta} \right)^{|\alpha|} 
    \end{align*}
  where the last step follows by linearity of expectation. This concludes the proof of \Cref{thm:Taylor-subword-poly}.
  \end{proof}

  We now prove \Cref{lemma:Taylor-of-del-poly}.  The high-level idea is to relate the expectation of $f$ over traces of $x$ drawn from the deletion channel to partial 
  derivatives of polynomial $P_{x,f}$ at $\smash{\vec{\delta}}$, and then apply Taylor's expansion to $P_{x,f}$ at the point $\smash{\vec{\delta}}$.

  \begin{claim} \label{claim:derivatives}
    Let $\beta \in \Z_{\ge 0}^k$ with  $\abs{\beta} \le n - k$.
    We have
    \[
      \E_{\by \sim \Del_\delta(x)} \Big[ f(\by_{\beta_1}, \ldots, \by_{\beta_1 + \dots + \beta_k + (k-1)}) \Big]
      = (1 - \delta)^k \cdot \frac{(1-\delta)^{\abs{\beta}}}{\beta!} \cdot D^\beta P_{x,f}(\vec{\delta}\hspace{0.04cm}) .
    \]
  \end{claim}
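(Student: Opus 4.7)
\medskip
\noindent\textbf{Proof plan for Claim~\ref{claim:derivatives}.}
The plan is to compute both sides explicitly and verify that they match. The left-hand side is a combinatorial quantity (an expectation over the deletion channel), while the right-hand side is an analytic quantity (a partial derivative of a polynomial); the bridge between them will be binomial coefficients that count deletion patterns.

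\medskip
\noindent\emph{Expanding the expectation.} First I would rewrite the left-hand side by conditioning on which $k$ positions of $x$ end up at the ``marked'' trace positions $\beta_1, \beta_1+\beta_2+1,\ldots,\beta_1+\cdots+\beta_k+(k-1)$. Writing $\mathbf{i}=(i_1<i_2<\cdots<i_k)$ for such a tuple of positions in $x$, I have
\[
  \E_{\by\sim \Del_\delta(x)}\Big[f(\by_{\beta_1},\ldots,\by_{\beta_1+\cdots+\beta_k+k-1})\Big]
  \;=\; \sum_{\mathbf{i}} f(x_{i_1},\ldots,x_{i_k})\cdot \Pr\big[E_{\mathbf{i},\beta}\big],
\]
where $E_{\mathbf{i},\beta}$ is the event that $x_{i_j}$ survives and is the $(\beta_1+\cdots+\beta_j+j)$-th surviving bit for each $j$.

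\medskip
\noindent\emph{Computing the probabilities.} By independence of the deletion decisions, $\Pr[E_{\mathbf{i},\beta}]$ factorises: positions $i_1,\ldots,i_k$ must all survive (contributing $(1-\delta)^k$), and in each of the $k$ ``gaps'' (namely $[0:i_1-1]$ and $[i_{j-1}+1:i_j-1]$ for $j\ge 2$) exactly $\beta_j$ of the bits must survive and the rest be deleted. Reparametrising the tuple $\mathbf{i}$ by $\alpha_1=i_1-\beta_1$ and $\alpha_j=i_j-i_{j-1}-1-\beta_j$ for $j\ge 2$, so that $\alpha\in \Z_{\ge 0}^k$ with $|\alpha|\le n-k-|\beta|$, the probability becomes
\[
  \Pr[E_{\mathbf{i},\beta}] \;=\; (1-\delta)^k \cdot (1-\delta)^{|\beta|}\cdot \delta^{|\alpha|}\cdot \frac{(\beta+\alpha)!}{\beta!\,\alpha!}.
\]
Note that bits of $x$ appearing after $i_k$ are irrelevant because what happens to them only affects the length of $\by$ past position $\beta_1+\cdots+\beta_k+k-1$, which does not enter $f$.

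\medskip
\noindent\emph{Matching the derivative.} Now I would compute $D^\beta P_{x,f}(\vec\delta)$ by differentiating term by term, using $D^\beta \xi^\gamma=\frac{\gamma!}{(\gamma-\beta)!}\xi^{\gamma-\beta}$ for $\gamma\ge \beta$ (componentwise), and then substituting $\gamma=\beta+\alpha$. The arguments of $f$ that appear after this reindexing agree precisely with $f(x_{i_1},\ldots,x_{i_k})$ under the parametrisation above. Evaluating at $\xi=\vec\delta$ yields
\[
  D^\beta P_{x,f}(\vec\delta)\;=\;\sum_{\alpha} f(x_{i_1},\ldots,x_{i_k})\cdot \frac{(\beta+\alpha)!}{\alpha!}\cdot \delta^{|\alpha|}.
\]
Comparing this with the expression for the expectation derived above gives exactly the factor $(1-\delta)^k\cdot(1-\delta)^{|\beta|}/\beta!$ relating the two quantities, which is the claim.

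\medskip
\noindent\emph{Anticipated obstacle.} The conceptual steps are all straightforward; the only place where care is required is in bookkeeping the reparametrisation $\mathbf{i}\leftrightarrow (\beta,\alpha)$ and ensuring the arguments of $f$ on both sides line up. Once this bijection is written down cleanly, the rest is an exercise in the binomial identity $\binom{\beta_j+\alpha_j}{\beta_j}=\frac{(\beta_j+\alpha_j)!}{\beta_j!\,\alpha_j!}$.
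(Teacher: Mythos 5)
Your proposal is correct and follows essentially the same approach as the paper: both condition on which $k$ positions of $x$ land at the marked positions of the trace, compute the probability of that event as a product of binomial factors by independence of the deletions, and then match the resulting sum against the Taylor coefficient $D^\beta P_{x,f}(\vec\delta)/\beta!$. The only difference is cosmetic -- the paper sums over $\gamma$ (the gap lengths) and recognizes $\binom{\gamma_i}{\beta_i}\delta^{\gamma_i-\beta_i}=\tfrac{1}{\beta_i!}\tfrac{d^{\beta_i}}{d\delta^{\beta_i}}\delta^{\gamma_i}$ directly, whereas you shift to $\alpha=\gamma-\beta$ and write the derivative formula $D^\beta\xi^\gamma=\tfrac{\gamma!}{(\gamma-\beta)!}\xi^{\gamma-\beta}$; these are the same computation.
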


  To get some intuition, consider the special case of $k=1$ (so $P_{x,f}$ is univariate) and $f = \id$. Then it is straightforward to verify that
  \[
    \E_{\by \sim \Del_\delta(x)}\big[\by_0\big] = (1-\delta) \sum_{i=0}^{n-1} x_i \delta^i = (1-\delta) \cdot P_{x,\id}(\delta) ,
  \]
  and
  \[
    \E_{\by \sim \Del_\delta(x)}\big[\by_1\big]
    = (1-\delta) \sum_{i=1}^{n-1} x_i \binom{i}{1} (1-\delta) \delta^{i-1}
    = (1-\delta)^2 \sum_{i=1}^{n-1} x_i i \delta^{i-1}
    = (1-\delta)^2 \cdot D^1 P_{x,\id}(\delta). 
  \]

  \begin{proof}[Proof of Claim~\ref{claim:derivatives}]
    For a fixed $\gamma \in \Z_{\geq 0}^k$ with $|\gamma| \leq n-k$, we write
    \[
      \gamma \to \beta,
      \quad \quad \text{or equivalently}
      \quad \quad  (\gamma_1, \gamma_2, \ldots, \gamma_k) \to (\beta_1, \beta_2, \ldots, \beta_k),
    \]
    to denote the event that 
    the $(\gamma_1, \gamma_1 + \gamma_2 + 1, \ldots, \gamma_1 + \dots + \gamma_k + (k-1))$ positions of $x$ become the $(\beta_1, \beta_1 + \beta_2 + 1, \ldots, \beta_1 + \dots + \beta_k + (k-1))$ positions of $\by \sim \Del_\delta(x)$ respectively.  For this to occur, each bit $x_{\gamma_i}$ must be present in $\by$, which happens with probability $(1-\delta)^k$.  Further, for each $x_{\gamma_i}$ to become $\by_{\beta_i}$, exactly $\beta_i$ out of the $\gamma_i$ bits between (and including) positions $\gamma_1 + \cdots + \gamma_{i-1} + i$ and $\gamma_1 + \cdots + \gamma_i + (i-1)$ of $x$ must be retained. So, the probability of this event is
    \begin{align}
      \Pr[\gamma \to \beta]
      &= 
      (1-\delta)^k \prod_{i=1}^k \binom{\gamma_i}{\beta_i} (1-\delta)^{\beta_i} \delta^{\gamma_i - \beta_i} \nonumber\\
      &= (1-\delta)^k \prod_{i=1}^k \frac{\gamma_i (\gamma_i-1) \cdots (\gamma_i - \beta_i + 1)}{\beta_i!} (1-\delta)^{\beta_i} \delta^{\gamma_i - \beta_i}  \nonumber \\
      &= (1-\delta)^k \cdot \left( \prod_{i=1}^k \frac{(1-\delta)^{\beta_i}}{\beta_i!} \right) \cdot \prod_{i=1}^k \left(\gamma_i (\gamma_i-1) \cdots (\gamma_i - \beta_i + 1) \cdot \delta^{\gamma_i - \beta_i}\right) \nonumber \\
      &= (1-\delta)^k \cdot \frac{(1-\delta)^{\abs{\beta}}}{\beta!} \cdot \prod_{i=1}^k \frac{d^{\beta_i}}{d \delta} \delta^{\gamma_i}. \label{eq:expr}
    \end{align}    
As a result, we have that
    \begin{align*}
      &\hspace{-0.6cm}\E_{\by \sim \Del_\delta(x)} \Bigl[ f(\by_{\beta_1}, \ldots, \by_{\beta_1 + \dots + \beta_k + (k-1)}) \Bigr] \\[1ex]
      &= \sum_{\abs{\gamma}\le n-k} f(x_{\gamma_1}, \ldots, x_{\gamma_1 + \cdots + \gamma_k + (k-1)})\cdot \Pr[\gamma \to \beta] \\
      &= (1-\delta)^k \cdot \frac{(1-\delta)^{\abs{\beta}}}{\beta!} \sum_{\abs{\gamma}\le n-k} f(x_{\gamma_1}, \ldots, x_{\gamma_1 + \cdots + \gamma_k + (k-1)})\cdot  \prod_{i=1}^k \frac{d^{\beta_i}}{d \delta} \delta^{\gamma_i} \tag{\Cref{eq:expr} } \\
      &= (1-\delta)^k \cdot \frac{(1-\delta)^{\abs{\beta}}}{\beta!} \cdot D^\beta P_{x,f}(\vec{\delta}\hspace{0.04cm}).  
    \end{align*}
This finishes the proof of Claim~\ref{claim:derivatives}.
  \end{proof}

  \begin{proof}[Proof of \Cref{lemma:Taylor-of-del-poly}]
{Since $P_{x,f}$ is a polynomial of degree at most $n-k$,}
    applying Taylor's expansion to $P_{x,f}$ at the point $\smash{\vec{\delta}}$ and using Claim~\ref{claim:derivatives}, we get that
  \begin{align*}
    (1-\delta)^k\cdot  P_{x,f}(\xi)
    &= (1-\delta)^k \sum_{\abs{\beta} \le n - k} \frac{D^{\beta}P_{x,f}(\vec{\delta}\hspace{0.04cm})}{\beta!} \cdot (\xi - \vec{\delta})^{\beta} \\
    &= \sum_{\abs{\beta} \le n - k} \E_{\by \sim \Del_\delta(x)} \Big[ f(\by_{\beta_1}, \ldots, \by_{\beta_1 + \dots + \beta_k + k - 1}) \Big]\cdot  \left( \frac{\xi - \vec{\delta}}{1-\delta} \right)^{\beta}. \qedhere
  \end{align*}
  \end{proof}

\section{{\tt Multiplicity}$'_{\text{large-}\delta}$:  An algorithm for deletion rate $\delta < 1$} \label{sec:large-delta}

In this section we prove a weaker version of \Cref{thm:Multiplicity}, giving an algorithm that works for any
  deletion rate $\delta < 1$ but has quasipolynomial running time and sample complexity when $k \approx \log n$ (as will be the case in our ultimate application):

\begin{theorem} \label{thm:Multiplicity-large-delta-weak} 
Let $0 < \tau,\delta < 1.$ There is an algorithm 
{\tt Multiplicity}$'_{\text{large-}\delta}$ which takes as input a string $w \in \zo^k$ and access to independent traces of an unknown source string $x \in \zo^n$.
{\tt Multiplicity}$'_{\text{large-}\delta}$ runs in  $\left(\frac{n^{1/(1-\delta)}}{1-\delta}\right)^{O(k)} \log \left(\frac{1}{\tau}\right)$ 
time and uses $\left(\frac{n^{1/(1-\delta)}}{1-\delta}\right)^{O(k)} \log \left(\frac{1}{\tau}\right)$  
many traces from $\Del_\delta(x)$, and has the following property:  For any unknown source string $x \in \zo^n$, with probability at least $1-\tau$ 
the output of {\tt Multiplicity}$'_{\text{large-}\delta}$ is $\#(w,x)$, the multiplicity of $w$ in $\subword(x,k)$ \emph{(}equivalently, the value $\SW_{x,w}(0)$\emph{)}.
\end{theorem}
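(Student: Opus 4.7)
The approach combines three ingredients: an unbiased-estimator identity that lets us evaluate $\SW_{x,w}$ at any real point in $[\delta, 1)$ using traces, a complex-analytic uniqueness theorem that extrapolates from $[\delta, 1)$ back to $\zeta = 0$, and a linear program that performs the extrapolation algorithmically.

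For the first ingredient, I would establish the estimator identity
\[
  \E_{\by \sim \Del_\zeta(x)}\big[\#(w, \by)\big] \;=\; (1-\zeta)^k \cdot \SW_{x,w}(\zeta)
\]
by a direct combinatorial argument: each contiguous occurrence of $w$ in a trace $\by$ corresponds to a choice of positions $i_0 < \cdots < i_{k-1}$ in $x$ with $x_{i_j} = w_j$ such that all these positions survive and all positions strictly between consecutive $i_j$'s are deleted, an event of probability $(1-\zeta)^k \zeta^{i_{k-1}-i_0-(k-1)}$; summing over all such choices matches \Cref{def:subword-polynomial}. Since a trace from $\Del_\zeta(x)$ with $\zeta \geq \delta$ can be simulated from a trace of $\Del_\delta(x)$ by independently deleting each surviving bit with probability $(\zeta-\delta)/(1-\delta)$, this yields an empirical estimator for $\SW_{x,w}(\zeta)$ at any $\zeta \in [\delta, 1)$.

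The second ingredient is a quantitative uniqueness theorem of the following shape: if $P$ is a real polynomial of degree at most $n-k$ whose coefficients are bounded in magnitude by $m$, and $|P(\zeta)| \le \eta$ for every $\zeta \in [\delta, 1]$, then $|P(0)| < 1/2$ provided $\eta \le m^{-O(1/(1-\delta))}$. This is the structural result referenced in the overview, and I would prove it using Jensen's formula to bound the number of zeros of $P$ inside a disk anchored around $[\delta, 1]$, together with the Hadamard three-circle theorem to propagate the supremum bound from $[\delta, 1]$ to a circle containing the origin. Since the coefficients of $\SW_{x,w}$ are non-negative integers bounded by $n^{O(k)}$, applying the theorem to the difference of two candidate polynomials implies that the integer $\SW_{x,w}(0) = \#(w,x)$ is uniquely pinned down by the values of $\SW_{x,w}$ on $[\delta, 1)$ known to precision $\kappa = n^{-O(k/(1-\delta))}$.

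Algorithmically, I would pick a grid of $N = \poly(n)/\kappa$ sample points $\zeta_1, \ldots, \zeta_N$ in $[\delta, (1+\delta)/2]$ (so that $(1-\zeta_i)^k \ge ((1-\delta)/2)^k$), and for each $\zeta_i$ draw enough simulated traces from $\Del_{\zeta_i}(x)$ to compute an empirical estimate $\tilde v_i$ of $\SW_{x,w}(\zeta_i) = \E[\#(w,\by)]/(1-\zeta_i)^k$ to additive error $\kappa/2$. A Chernoff bound together with a union bound over the grid shows that $(n^{1/(1-\delta)}/(1-\delta))^{O(k)}\log(1/\tau)$ traces overall suffice. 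I would then solve a feasibility linear program in the unknown coefficients $(c_0, \ldots, c_{n-k})$, each constrained to $[-n^{O(k)}, n^{O(k)}]$, requiring $|Q(\zeta_i) - \tilde v_i| \le \kappa$ for every $i$ (where $Q$ is the polynomial with coefficients $c_j$). With high probability over the traces, $\SW_{x,w}$ is itself feasible, so the LP has a solution; for any feasible $Q$, the difference $P = \SW_{x,w} - Q$ has bounded coefficients and satisfies $|P(\zeta_i)| \le 2\kappa$ on the grid, hence $|P(\zeta)| \le O(\kappa)$ on the whole interval by a Markov-type derivative bound (made possible by the density of the grid), and so the uniqueness theorem yields $|Q(0) - \SW_{x,w}(0)| < 1/2$. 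Outputting the nearest integer to $Q(0)$ gives $\#(w,x)$.

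The main obstacle is proving the uniqueness theorem with the sharp quantitative dependence $\kappa = n^{-O(k/(1-\delta))}$: a naive Lagrange-interpolation argument on the grid would force $\kappa$ as small as $2^{-\Omega(n)}$, leading to an exponential-in-$n$ sample complexity. Obtaining an exponent linear in $k/(1-\delta)$ rather than $n/(1-\delta)$ is precisely where the complex-analytic tools are indispensable, and it is the technical heart of the approach; a secondary but routine issue is tuning the grid density so that the Markov-type derivative argument carries the grid-wise smallness of $P$ to pointwise smallness on $[\delta, (1+\delta)/2]$ without inflating the sample complexity.
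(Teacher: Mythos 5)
Your proposal follows essentially the same route as the paper: the identity $\E_{\by \sim \Del_{\delta'}(x)}[\#(w,\by)] = (1-\delta')^k\,\SW_{x,w}(\delta')$ (the paper's \Cref{lem:estimator-polynomial}), a complex-analytic lower bound for the supremum of a polynomial with small constant term and bounded coefficients over $[\delta,(\delta+1)/2]$ (the paper's \Cref{thm:complex} and \Cref{thm:uniqueness-info-theory}, which is the contrapositive of the uniqueness statement you wrote), and an LP over the unknown coefficients with grid constraints whose feasibility is guaranteed by $\SW_{x,w}$ itself. Two small places where you diverge in emphasis rather than substance: for going from grid-wise to pointwise smallness of $p=\SW_{x,w}-q$, the paper's \Cref{lem:granularity} is just a Lipschitz bound derived directly from the coefficient bound $|s_\ell|\le m$ (giving $\Delta=\kappa/(2mn^2)$), which is more elementary than invoking the Markov brothers' inequality, although your route also works and would in fact permit a coarser grid of only $O(n^2/(1-\delta))$ points (the overall sample complexity is unchanged since it is dominated by $1/\kappa^2$, not $\log|S|$); and in the sketch of the analytic lemma, the paper applies Jensen's formula to directly lower bound the average of $\ln|p|$ on a circle (not to count zeros), localizes that lower bound to a short arc, and feeds that into Hadamard's three-circle theorem via a Joukowski-type conformal map that sends $C_1$ onto the real segment $[\delta,(\delta+1)/2]$ and $C_2,C_3$ onto ellipses; your phrasing about counting zeros is a slightly different use of Jensen, but you correctly identify the two tools that carry the argument. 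Since you flag the analytic lemma as the technical heart and do not claim to have proved it, I treat the sketch as sufficient at proposal granularity; the remainder of your plan matches the paper's.
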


Looking ahead, in \Cref{sec:strong-version} we will build on the proof of 
\Cref{thm:Multiplicity-large-delta-weak} to give a stronger version that has polynomial running time and sample complexity when $k \approx \log n$.

The following result is central to our analysis.  Informally, it says that if $q$ is a polynomial with ``not-too-large'' coefficients and a constant term which is bounded away from $\SW_{x,w}(0)$ by at least $1/2$, 
then $q$ must ``differ noticeably'' from $\SW_{x,w}$ over a particular interval. (Looking ahead, for our purposes it is crucially important that this interval corresponds to a range of deletion probabilities for which it is easy to estimate the polynomial's value given access to traces drawn from $\Del_\delta(x)$.)


\begin{theorem} \label{thm:uniqueness-info-theory}
    Fix strings $x \in \zo^n, w \in \zo^k$ for some $k \in [n]$. Let $q(z) = \sum_{\ell=0}^{n-k} q_\ell \, z^\ell$ be any polynomial such that $|\SW_{x,w}(0) - q(0)| \geq 1/2$, 
    and $0 \leq q_\ell \leq n^k$ 
    for all $\ell \in \{0,1,\cdots,n-k\}$. 
    Then 
\begin{equation}\label{hehe}
        \sup_{\zeta \in [\delta,(\delta+1)/2]} \big\lvert \SW_{x,w}(\zeta) - q(\zeta) \big\rvert \geq n^{-O(k/(1-\delta))},\quad\text{for any $\delta \in (0,1)$.}
\end{equation}
\end{theorem}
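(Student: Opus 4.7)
The plan is to reduce the inequality to a harmonic-measure estimate on a slit disk. Set $r(z) := \SW_{x,w}(z) - q(z)$, a polynomial of degree at most $N := n-k$ with $|r(0)| \geq 1/2$. The sum of coefficients of $\SW_{x,w}$ equals $\SW_{x,w}(1) \leq \binom{n}{k}$ (the number of matching $k$-tuples of positions in $x$), while the sum of coefficients of $q$ is at most $(n-k+1)\, n^k$, so $\sum_\ell |r_\ell| \leq n^{O(k)}$. Writing $\epsilon := \sup_{\zeta \in I}|r(\zeta)|$ for $I := [\delta,(\delta+1)/2]$, the goal becomes $\epsilon \geq n^{-O(k/(1-\delta))}$.

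For any $|z| \leq 1$, the coefficient bound yields $|r(z)| \leq \sum_\ell |r_\ell| \leq n^{O(k)} =: M$, so $\log M = O(k \log n)$. I would then work on the slit domain $\Omega := \{|z|<1\} \setminus I$, whose boundary consists of the unit circle (on which $|r| \leq M$) and both sides of $I$ (on which $|r| \leq \epsilon$). Since $\log|r|$ is subharmonic on $\Omega$, the two-constants theorem---which is precisely the Hadamard three-circle theorem after conformally mapping $\Omega$ onto an annulus $A_\rho := \{1 < |w| < \rho\}$---gives
\[
\log|r(0)| \;\leq\; \omega \log \epsilon + (1-\omega)\log M,
\]
where $\omega := \omega(0, I; \Omega)$ is the harmonic measure of $I$ from $0$ in $\Omega$. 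Using $|r(0)| \geq 1/2$ and rearranging yields $\omega \cdot \log(1/\epsilon) \leq O(k \log n)$.

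The main obstacle, and the heart of the argument, is the harmonic-measure lower bound $\omega \gtrsim 1-\delta$. I would work through the conformal isomorphism $\phi : \Omega \to A_\rho$, which gives the explicit identity $\omega = 1 - \log|\phi(0)|/\log\rho$ by inspection of the harmonic function $(\log\rho - \log|w|)/\log\rho$ on the annulus. The modulus $\log \rho$ of $\Omega$ can be estimated using classical bounds on the capacity of the condenser $(I, \partial D(0,1))$, giving $\log\rho = \Theta(\log(1/(1-\delta)))$. To control $\log|\phi(0)|$, I would combine the exterior Green's function of $I$ (computed explicitly via the Joukowski-type map of $\mathbb{C}\setminus I$ after sending $I$ linearly to $[-1,1]$) with Jensen's formula applied on small circles centered at points of $I$, which propagates the $\epsilon$-bound from the real segment into the interior of $\Omega$. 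These together yield $\log|\phi(0)| \leq (1 - c(1-\delta))\log\rho$ for an absolute constant $c > 0$, and hence $\omega \geq c(1-\delta)$, possibly up to a $\operatorname{polylog}(1/(1-\delta))$ loss that can be absorbed into the $O(\cdot)$ in the theorem's exponent. Feeding this back into the displayed inequality gives $\log(1/\epsilon) \leq O(k\log n/(1-\delta))$, i.e.\ $\epsilon \geq n^{-O(k/(1-\delta))}$, as required.
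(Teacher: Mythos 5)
Your high-level framework --- set $r = \SW_{x,w} - q$, note $\sum_\ell|r_\ell| \le n^{O(k)} =: M$, apply the two-constants theorem on the slit domain $\Omega = D \setminus I$ with $I = [\delta,(\delta+1)/2]$, and reduce to the geometric estimate $\omega(0,I;\Omega) \gtrsim 1-\delta$ --- is a legitimate and genuinely different route from the paper's. The paper pulls $p$ back under an explicit Joukowski map $w(z) = 1-\tfrac{3\rho}{4}+\tfrac{\rho}{8}(z+1/z)$, applies the Hadamard three-circle theorem on the concrete circles $|z|=1,2,4$, and uses Jensen's formula on a small arc to lower-bound the middle-circle supremum from $|p(0)|\ge 1/2$. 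Your version absorbs the Jensen step because $0$ is already an interior point of $\Omega$, so the hypothesis $|r(0)|\ge 1/2$ feeds directly into the two-constants inequality. Conceptually cleaner, but your argument for the crucial harmonic-measure lower bound has two genuine problems.

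First, the modulus estimate is wrong. You claim $\log\rho = \Theta(\log\frac{1}{1-\delta})$, but in fact $\log\rho = \Theta(1)$ uniformly in $\delta$. To see this, apply the M\"obius self-map $\psi(z) = \frac{z-\delta}{1-\delta z}$ of the disk: it sends $I$ to $[0,\tfrac{1}{2+\delta}]$, which stays in a fixed compact family near $[0,\tfrac13]$, so $\Omega$ is conformally equivalent to $D\setminus[0,\approx\tfrac13]$ and has bounded modulus as $\delta\to 1$. (The slit length and its distance to $\partial D$ both shrink like $1-\delta$, so the ring domain does not become thick.) Second, and more seriously, the proposed method for controlling $\log|\phi(0)|$ --- ``Jensen's formula applied on small circles centered at points of $I$, which propagates the $\epsilon$-bound from the real segment into the interior of $\Omega$'' --- conflates the purely geometric quantity $|\phi(0)|$ with the analytic bound on $|r|$. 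Harmonic measure $\omega(0,I;\Omega)$ depends only on the domain, not on $r$, so no $\epsilon$-propagation is involved; as written this step does not establish anything about $\phi$.

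The target estimate $\omega(0,I;\Omega)\gtrsim 1-\delta$ is nevertheless true, and there is a short correct argument along the M\"obius line above: by conformal invariance $\omega(0,I;\Omega) = \omega(-\delta,\psi(I);D\setminus\psi(I))$; the set $\psi(I)$ is a fixed compact interval, and the evaluation point $-\delta$ has $\mathrm{dist}(-\delta,\partial D) = 1-\delta$; a boundary Harnack / Hopf-lemma estimate for the positive harmonic function $z\mapsto \omega(z,\psi(I))$, which vanishes on $\partial D$, gives $\omega(-\delta,\psi(I)) = \Theta(1-\delta)$ with constants uniform over the compact family of slits. If you replace the modulus/Green's-function/Jensen sketch by this argument, your proof becomes a complete and somewhat slicker alternative to the paper's.
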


\Cref{thm:uniqueness-info-theory} is an easy consequence of the following more general theorem:

\begin{theorem} \label{thm:complex} 
Let $1 \leq n \leq m$.  Let $p(z) = \sum_{\ell=0}^n p_\ell\, z^\ell$ be a polynomial of degree at most $n$ with real coefficients such that $|p_0| \geq 1/2$, and $|p_\ell|\le m$ for all $\ell$. Then we have 
\begin{equation}\label{final00}
\sup_{\zeta \in [\delta,(\delta+1)/2]} \big|p(\zeta)\big| \geq m^{-O\left(1/(1-\delta)\right)},\quad
\text{for any $\delta \in (0,1)$.}
\end{equation}
\end{theorem}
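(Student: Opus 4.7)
My plan is to prove the bound via the maximum principle for subharmonic functions combined with a harmonic-measure estimate. Write $I = [\delta, (\delta+1)/2]$ and $\eta := \sup_{\zeta \in I} |p(\zeta)|$; the goal is $\eta \ge m^{-O(1/(1-\delta))}$. The coefficient bound and $n \le m$ give $|p(z)| \le (n+1)m \le 2m^2$ for $|z| \le 1$, hence $\log|p(z)| \le 3\log m$ on $\partial \mathbb{D}$, while $\log|p| \le \log \eta$ on $I$. I will apply the two-constant theorem (the harmonic-measure form of the maximum principle) to the subharmonic function $u := \log|p|$ on the slit domain $\mathbb{D} \setminus I$. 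This yields
\[
 \log|p(0)| \;\le\; \omega\,\log\eta + (1-\omega)\cdot 3\log m,
\]
where $\omega := \omega_I(0, \mathbb{D} \setminus I)$ is the harmonic measure of $I$ at $0$. Combining with $\log|p(0)| \ge -\log 2$ and rearranging gives $\log(1/\eta) \le O(\log m/\omega)$, so the theorem reduces to the quantitative lower bound $\omega \ge c\,(1-\delta)$ for some absolute $c > 0$.

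For this harmonic-measure lower bound I will transport to a universal configuration via a M\"obius change of coordinates. Let $c_I = (3\delta+1)/4$ be the midpoint of $I$ and $\phi(z) = (z - c_I)/(1 - c_I z)$ the automorphism of $\mathbb{D}$ sending $c_I$ to $0$. A direct computation gives $\phi(\delta) = -1/(3\delta+4)$ and $\phi((\delta+1)/2) = 2/(3\delta+7)$, so for every $\delta \in (0,1)$ the left endpoint of $\phi(I)$ lies in $[-1/4,\, -1/7]$ and the right endpoint in $[1/5,\, 2/7]$. In particular the fixed compact interval $J_* := [-1/7,\, 1/5]$ is contained in $\phi(I)$. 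By conformal invariance, $\omega = \omega_{\phi(I)}(\phi(0),\, \mathbb{D} \setminus \phi(I))$; and by the monotonicity of harmonic measure under enlarging the absorbing set (an immediate consequence of the maximum principle applied to the difference), $\omega \ge \omega_{J_*}(\phi(0),\, \mathbb{D} \setminus J_*)$, with the right-hand side now depending on $\delta$ only through the source point.

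Since $\phi(0) = -c_I$ has $1 - |\phi(0)| = 3(1-\delta)/4$, the last remaining step is to show that, for the \emph{fixed} domain $\mathbb{D} \setminus J_*$, the harmonic function $z \mapsto \omega_{J_*}(z)$ satisfies $\omega_{J_*}(z) \ge c_0\,(1 - |z|)$ for $z \in \mathbb{D}$ in a neighborhood of the boundary point $-1$, where $c_0 > 0$ is an absolute constant. This is a direct application of the Hopf boundary lemma at the smooth boundary point $-1$: the function $\omega_{J_*}$ is nonnegative, not identically zero, and vanishes at $-1$, so its inward normal derivative at $-1$ is strictly positive. Evaluating the resulting linear estimate at $z = -c_I$ yields $\omega \ge (3c_0/4)(1-\delta)$, and plugging this into the first paragraph completes the proof.

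The main obstacle is to pin down the constant $c_0$ with a fully explicit value, rather than via a soft appeal to Hopf, if one desires a self-contained proof. A clean hands-on route is to build an explicit subharmonic barrier on $\mathbb{D} \setminus J_*$ (such as $\log|B(z)|$ for a Blaschke-type rational function $B$ vanishing on $J_*$) and evaluate it directly at $-c_I$. An alternative approach, closer to the Jensen's formula plus Hadamard three-circle strategy signaled earlier in the paper, is to apply Jensen at $0$ on $|z|\le 1$ to bound the number of zeros of $p$ in a suitable complex neighborhood of $I$ by $O(\log m/(1-\delta))$, factor those zeros out of $p$, and apply Hadamard three-circle to the zero-free quotient on circles centered at $c_I$; this route delivers the same $m^{-O(1/(1-\delta))}$ bound.
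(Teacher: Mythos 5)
Your argument is correct, and it takes a genuinely different route from the paper's. The paper proves Theorem~\ref{thm:complex} by composing $p$ with the Joukowski-type map $w(z) = 1 - \tfrac{3\rho}{4} + \tfrac{\rho}{8}(z + \tfrac{1}{z})$ (with $\rho = 1-\delta$), then applying the Hadamard three-circle theorem to $f = p\circ w$ on the circles of radius $1, 2, 4$: the image of $C_1$ is exactly the interval $I$, the image of $C_3$ is an ellipse inside the unit disk (giving the upper bound $\sup_{C_3}|f| \le m(n+1)$), and the lower bound $\sup_{C_2}|f| \ge m^{-O(1/\rho)}$ comes from Jensen's formula applied to $p$ on the circle of radius $1-3\rho/4$, using an arc of angular half-width $\Theta(\rho)$ that sits inside the ellipse $w(C_2)$. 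Your route instead applies the two-constant theorem to $\log|p|$ directly on the slit disk $\mathbb{D}\setminus I$, reducing everything to a lower bound on the harmonic measure $\omega_I(0)$, and then obtains $\omega_I(0) \gtrsim 1-\delta$ via conformal invariance, monotonicity of harmonic measure, and the Hopf boundary lemma. Both are instances of the same general philosophy---maximum principle for subharmonic functions---but the decompositions are different: the paper never introduces harmonic measure or a Möbius map, and works with a three-circle inequality plus a hands-on Jensen calculation, which makes its constants fully explicit; your argument is cleaner conceptually (no auxiliary conformal map of the domain of $p$, no arc trick) but the final constant comes from a soft Hopf argument, as you flag yourself.

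One small thing to patch: as written, you assert the linear estimate $\omega_{J_*}(z) \ge c_0(1-|z|)$ only ``in a neighborhood of $-1$,'' and then evaluate it at $z = -c_I$. But $-c_I = -(3\delta+1)/4$ ranges over all of $(-1,-1/4)$ as $\delta$ ranges over $(0,1)$, so for $\delta$ bounded away from $1$ the point $-c_I$ need not lie in the Hopf neighborhood. The fix is routine: the ratio $t \mapsto \omega_{J_*}(-t)/(1-t)$ is positive and continuous on $[1/4, 1)$ and has a positive limit as $t \to 1^-$ by Hopf, hence is bounded below by a positive constant on all of $[1/4,1)$. With that compactness observation added, the argument goes through. (Both your proof and the paper's implicitly assume $m$ is bounded away from $1$ so that constants like $2$ and $n+1$ can be absorbed into $m^{O(1)}$; this is harmless in the paper's application where $m = n^k$.)
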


To obtain \Cref{thm:uniqueness-info-theory}  from \Cref{thm:complex}, set $p = \SW_{x,w} - q$. By the condition of \Cref{thm:uniqueness-info-theory} we have that $|p_0| = |\SW_{x,w}(0) - q_0| \geq 1/2$. Writing $(\SW_{x,w})_\ell$ for the degree-$\ell$ coefficient of $\SW_{x,w}$, from the discussion following \Cref{def:subword-polynomial} it is immediate that $0 \leq (\SW_{x,w})_\ell \leq {n \choose k} \leq n^k$, and hence $|p_\ell| = |(\SW_{x,w})_\ell - q_\ell| \leq n^k$. 
Thus we can invoke \Cref{thm:complex} with $m=n^k$ to obtain \Cref{thm:uniqueness-info-theory}.

In \Cref{sec_compute_multiplicity_EFFICIENT} we present and analyze the algorithm {\tt Multiplicity}$'_{\text{large-}\delta}$ (which is based on 
  linear programming) and prove \Cref{thm:Multiplicity-large-delta-weak} assuming \Cref{thm:uniqueness-info-theory}.
The proof of \Cref{thm:complex}, which is based on complex analysis, is given in \Cref{sec:proof:thm_complex}.

\subsection{Proof of \Cref{thm:Multiplicity-large-delta-weak} assuming \Cref{thm:uniqueness-info-theory}} \label{sec_compute_multiplicity_EFFICIENT}

\subsubsection{Estimating $\SW_{x,w}(\delta')$ 
for 
$\delta' \geq \delta$}
The following easy lemma gives an unbiased estimator for $\SW_{x,w}(\delta')$, for all $\delta' \geq \delta$, 
given traces from $\Del_{\delta}(x)$. 
\begin{lemma}~\label{lem:estimator-polynomial}
Let $x \in \zo^n$, $w \in \zo^k$ and let $\epsilon>0$. Then, given traces $\by \sim \Del_{\delta}(x)$, there 
exists an algorithm, which for any $\delta' \in [\delta,1]$, 
evaluates $\SW_{x,w}(\delta')$ up to error $\pm \epsilon$ with success probability at least $1-\tau$. The algorithm takes 
\[
n^{O(1)} \cdot \bigg( \frac{1}{1-\delta'}\bigg)^{O(k)} \cdot \frac{1}{\epsilon^2} \cdot \log \bigg( \frac{1}{\tau} \bigg)
\]
many traces and running time. 
\end{lemma}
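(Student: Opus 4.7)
The plan has two conceptually distinct pieces: (i) reduce the task of estimating $\SW_{x,w}(\delta')$ to that of empirically estimating $\mathbb{E}[\#(w,\by)]$ for $\by\sim \Del_{\delta'}(x)$ via a clean closed-form identity, and (ii) simulate $\Del_{\delta'}$ from $\Del_\delta$ by additional Bernoulli deletions.

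First, I would establish the identity
\[
\E_{\by \sim \Del_{\delta'}(x)}\big[\#(w,\by)\big] \;=\; (1-\delta')^k \cdot \SW_{x,w}(\delta').
\]
To see this, note that $\#(w,\by)$ equals the number of tuples $0 \le i_1 < i_2 < \cdots < i_k \le n-1$ such that $(x_{i_1},\ldots,x_{i_k}) = w$ and, in the draw $\by$, all $k$ of these positions survive while every position strictly between consecutive $i_j$'s is deleted (positions outside $[i_1,i_k]$ may or may not be deleted — they do not break contiguity of the surviving $w$-pattern inside $\by$). Hence, each such tuple contributes $(1-\delta')^k \cdot (\delta')^{i_k - i_1 - (k-1)}$ to the expectation. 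Re-parameterizing via $\alpha_j = i_{j+1}-i_j-1$ and grouping, the resulting sum is exactly $(1-\delta')^k$ times the definition of $\SW_{x,w}(\delta')$ in \Cref{def:subword-polynomial}.

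Second, to produce samples from $\Del_{\delta'}(x)$ given traces $\by \sim \Del_\delta(x)$, I would apply an additional independent deletion with probability $\delta'' := (\delta'-\delta)/(1-\delta) \in [0,1]$ (which is well-defined since $\delta' \ge \delta$) to each surviving bit. The overall per-bit survival probability is $(1-\delta)(1-\delta'') = 1-\delta'$, so the resulting string is distributed exactly as $\Del_{\delta'}(x)$.

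Third, for estimation I would draw $T$ simulated traces $\by_1,\ldots,\by_T \sim \Del_{\delta'}(x)$ and output
\[
\widehat{\SW}_{x,w}(\delta') \;=\; \frac{1}{T (1-\delta')^k} \sum_{i=1}^T \#(w,\by_i),
\]
where each $\#(w,\by_i)$ is computed by a trivial linear scan in $O(n)$ time. Since $\#(w,\by_i) \in [0,n]$, Hoeffding's inequality guarantees that choosing
\[
T \;=\; O\!\left(\frac{n^2}{\epsilon^2 (1-\delta')^{2k}} \cdot \log \frac{1}{\tau}\right)
\]
ensures that the empirical mean is within $\epsilon (1-\delta')^k$ of $\mathbb{E}[\#(w,\by)]$ with probability at least $1-\tau$. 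Dividing by $(1-\delta')^k$ then gives a $\pm\epsilon$ estimate of $\SW_{x,w}(\delta')$, and the total sample/running-time bound matches the claim (absorbing $n^2$ into the $n^{O(1)}$ factor and $2k$ into $O(k)$).

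There is no real obstacle here — the only subtle point is the combinatorial identity in the first step, where one must correctly account for the fact that deletions outside the bracketing range $[i_1,i_k]$ are unconstrained, so no additional $(1-\delta')$ or $\delta'$ factors appear beyond the $k$ retained bits and the $|\alpha|$ intermediate deletions.
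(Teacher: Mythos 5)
Your proposal is correct and follows essentially the same route as the paper: simulate $\Del_{\delta'}$ from $\Del_\delta$ by subsampling, establish the identity $\E_{\by\sim\Del_{\delta'}(x)}[\#(w,\by)]=(1-\delta')^k\cdot\SW_{x,w}(\delta')$ by the same combinatorial accounting (the $k$ retained positions, the $|\alpha|$ mandatory deletions between them, and the free positions outside the bracket), and then apply a Hoeffding/Chernoff bound to the empirical mean of $\#(w,\by)\in[0,n]$. The only cosmetic difference is that you spell out the resampling probability $\delta''=(\delta'-\delta)/(1-\delta)$, which the paper leaves as an observation.
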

\begin{proof}
First of all, observe that given $\by \sim \Del_{\delta}(x)$, we can sample $\by \sim \Del_{\delta'}(x)$ for any $\delta' \ge \delta$ with no overhead. 
Next, observe that the expected number of $w$ in a randomly trace $\by \sim \Del_{\delta'}(x)$ is given by 
\[
  \Ex_{\by \sim \Del_{\delta'}(x)}[\#(w,\by)] =
  \sum_{\mathclap{\substack{\alpha \in \Z_{\geq 0}^{k-1} \\ |\alpha| \leq n-k}}} \, \#\left(w_0 \ast^{\alpha_1} w_1 \ast^{\alpha_2} w_2 \cdots w_{k-2} \ast^{\alpha_{k-1}} w_{k-1},\hspace{0.06cm} x\right) \, \cdot {\delta'}^{|\alpha|} \cdot (1-\delta')^k. 
 \]
This follows from the fact that every occurrence of $w$ as a subword of trace $y$  can be uniquely identified with a  subsequence $(i_1\le \ldots \le i_k)$ such that  (i) 
$x_{i_1}=w_1 \wedge \ldots \wedge x_{i_k}=w_k$. (ii) positions $i_1, \ldots, i_k$ are not deleted in $\by$.  (iii) every position in $[i_1, \ldots, i_k] \setminus \{i_1, \ldots, i_k\}$ is deleted in the trace $\by$.  However, by \Cref{def:subword-polynomial}, it follows that 
\begin{align} 
  \Ex_{\by \sim \Del_{\delta'}(x)}[\#(w,\by)]
&= \SW_{x,w}(\delta') \cdot (1-\delta')^k. 
\label{eq:relation-emp-trace-sw}
\end{align}
Now for any $\by \sim \Del_{\delta'}(x)$, 
$\#(w,\by)$ is an integer between $0$ and $n$. Thus, the standard empirical estimator will use 
\[
n^{O(1)} \cdot \bigg( \frac{1}{1-\delta'}\bigg)^{O(k)} \cdot \frac{1}{\epsilon^2} \cdot \log \bigg( \frac{1}{\tau} \bigg)
\]
many traces and running time and returns an estimate of 
$\Ex_{\by \sim \Del_{\delta'}(x)}[\#(w,\by)]$ up to $\pm \epsilon \cdot (1-\delta')^{k}$. Using \eqref{eq:relation-emp-trace-sw}, we get the claim. 
\end{proof}

\subsubsection{The {{\tt Multiplicity}$'_{\text{large-}\delta}$ algorithm and its analysis}}

\begin{figure}[h]
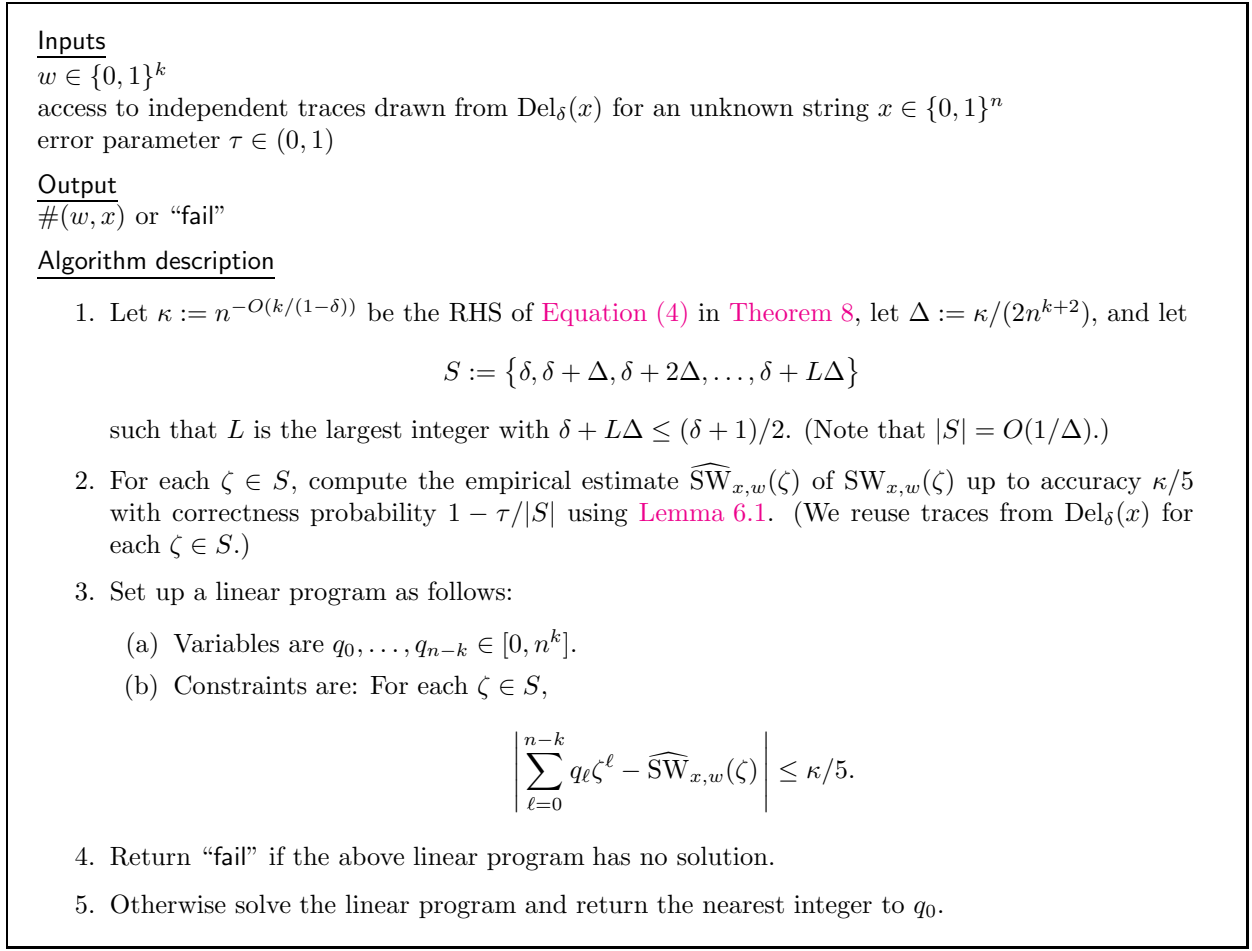

\hrule
\vline
\begin{minipage}[t]{0.98\linewidth}
\vspace{10 pt}
\begin{center}
\begin{minipage}[h]{0.95\linewidth}
{\small
\underline{\textsf{Inputs}}
~\\
$w \in \{0,1\}^k$ \\ 
access to independent traces drawn from $\Del_\delta(x)$ for an unknown string $x\in \{0,1\}^n$  \\
error parameter $\tau\in (0,1)$ 
 
\vspace{5 pt}
\underline{\textsf{Output}}
~\\ 
$\# (w,x)$ or ``\textsf{fail}''

\vspace{5pt}


\underline{\textsf{Algorithm description}}
\begin{enumerate}
  \item Let $\kappa := n^{-O(k/(1-\delta))}$ be the RHS of \Cref{hehe}
    in \Cref{thm:uniqueness-info-theory}, let $\Delta := \kappa/(2n^{k+2})$,
  and let 
  \[ S :=\big\{\delta, \delta + \Delta, \delta + 2\Delta, \ldots, \delta+L\Delta \bigr\} \]
  such that $L$ is the largest integer with $\delta+L\Delta\le (\delta+1)/2$.
(Note that $|S|= O(1/\Delta)$.)
\item For each $\zeta \in S$, compute the empirical estimate $\hat{\SW}_{x,w}(\zeta)$ of $\SW_{x,w}(\zeta)$
  up to accuracy $\kappa/5$ with correctness probability $1-\tau/|S|$ 
  using \Cref{lem:estimator-polynomial}. 
  (We reuse traces from $\Del_\delta(x)$ for each $\zeta  \in S$.)
\item Set up a linear 
  program as follows: 
\begin{enumerate} \item Variables are 
$q_0, \ldots, q_{n-k} \in [0,n^k]$. 
\item Constraints  are: For each $\zeta \in S$, $$\left|\hspace{0.04cm}\sum_{\ell=0}^{n-k} 
q_{\ell} \zeta^{\ell} - \hat{\SW}_{x,w}(\zeta)\hspace{0.04cm}\right| \le \kappa/5.$$
\end{enumerate}
\item Return ``\textsf{fail}'' if the above linear program has no solution.
\item 
Otherwise solve the linear program and return the nearest integer to $q_0$. 
\end{enumerate}

\vspace{5 pt}
}
\end{minipage}
\end{center}

\end{minipage}
\hfill \vline
\hrule
\begin{center}
 \caption{Description of the algorithm {\tt Multiplicity}$'_{\text{large-}\delta}$.}
\label{fig:tlin-1}
\end{center}
\end{figure}

We present the algorithm {{\tt Multiplicity}$'_{\text{large-}\delta}$} in \Cref{fig:tlin-1}. 
For its correctness we first observe that with probability at least $1-\tau$, 
we have that 
\[
\text{for 
  every $\zeta\in S$,} \quad \left|\hat{\SW}_{x,w}(\zeta) -{\SW}_{x,w}(\zeta)\right| \le \kappa/5.
  \]
We finish the proof by showing that when this happens, 
  the linear program in lines 3(a) and 3(b) is feasible,
  and furthermore, $|q_0-\SW_{x,w}(0)|<1/2$ in any feasible solution $(q_0,\ldots,q_{n-k})$ (when this happens, the closest integer to $q_0$
    is exactly $\SW_{x,w}(0)$).
  
  To see that the linear program is feasible, we let $p_0,\ldots,p_{n-k}$ denote the coefficients of ${\SW}_{x,w}$, so
  ${\SW}_{x,w}(\zeta)=\sum_{\ell=0}^{{n-k}} p_\ell\hspace{0.03cm} \zeta^\ell$.
 From the discussion after \Cref{thm:complex}, every $p_\ell$ lies between $0$ and 
   $n^k$.
As a result, $p_0,\ldots,p_{n-k}$ is a feasible solution to the linear program 
  because for every $\zeta\in S$,  
$$\left|\hspace{0.04cm}\sum_{\ell=0}^{n-k} 
p_{\ell} \zeta^{\ell} - \hat{\SW}_{x,w}(\zeta)\hspace{0.04cm}\right|= 
\left|\hspace{0.04cm}  
\SW_{x,w}(\zeta) - \hat{\SW}_{x,w}(\zeta)\hspace{0.04cm}\right|\le \kappa/5.$$
 
Next we let $q_0, \ldots, q_{n-k}$ be any feasible solution to the linear program
  and assume for a contradiction that 
$|q_0-\SW_{x,w}(0)|\ge1/2$.  
Let $q$ be the polynomial $q(\zeta)=\sum_{\ell=0}^{n-k} q_\ell\hspace{0.03cm}\zeta^\ell$.
Given that $0\le q_\ell\le n^k$ for every $\ell$ (as required by the linear program), 
  \Cref{thm:uniqueness-info-theory} implies (using the choice of $\kappa$ in
line~1 of the algorithm) that
\begin{equation}~\label{eq:r1half}
  \sup_{\zeta \in [\delta,(\delta+1)/2]} \big\lvert \SW_{x,w}(\zeta) - q(\zeta) \big\rvert \geq \kappa.
    \end{equation}
The following claim (with $s=\SW_{x,w} -q$ and $m=n^k$) 
  shows that there exists a $\zeta\in S$
  such that 
$$
\big\lvert \SW_{x,w}(\zeta) - q(\zeta) \big\rvert \geq \kappa/2,
$$
a contradiction to the assumption that $q_0,\ldots,q_{n-k}$ is a feasible solution
  because
$$
\left|\hspace{0.04cm}\sum_{\ell=0}^{n-k} 
q_{\ell} \zeta^{\ell} - \hat{\SW}_{x,w}(\zeta)\hspace{0.04cm}\right|
=\big|q(\zeta)- \hat{\SW}_{x,w}(\zeta)\big|\ge 
\big|q(\zeta)-\SW_{x,w}(\zeta)\big|-\big|\SW_{x,w}(\zeta)-\hat{\SW}_{x,w}(\zeta)\big|
>\kappa/5.$$
 \begin{claim} [Searching over $S$ suffices] \label{lem:granularity}
Let 
$
s(t) = s_0 + s_1 t + \cdots + s_n t^n
$
be a polynomial such that  every coefficient $s_\ell$ has $|s_\ell| \leq m$. 
Suppose $\abs{s(t_0)} \ge \kappa$ for some $t_0 \in [\delta, (\delta+1)/2]$.  Then there exists an integer $k$ such that $t' = \delta + k\Delta \in [\delta, (\delta+1)/2]$ and $\abs{s(t')} \ge \kappa/2$, where $\Delta = \kappa/(2mn^2)$.
\end{claim}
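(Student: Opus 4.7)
The plan is to show that $s$ is Lipschitz with constant $mn^2$ on $[0,1]$, and then locate a grid point within distance $\Delta$ of $t_0$; the claim will follow from the triangle inequality and the choice $\Delta = \kappa/(2mn^2)$.

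First, I would bound the derivative of $s$ uniformly on $[0,1]$. Since $s'(t) = \sum_{\ell=1}^{n} \ell\, s_\ell\, t^{\ell-1}$, and each $|s_\ell| \le m$, we obtain, for any $t \in [0,1]$,
\[
    |s'(t)| \;\le\; \sum_{\ell=1}^n \ell \cdot m \cdot t^{\ell-1} \;\le\; m \cdot \frac{n(n+1)}{2} \;\le\; mn^2.
\]
By the mean value theorem, for any $t, t'' \in [0,1]$ we then have $|s(t) - s(t'')| \le mn^2 |t - t''|$.

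Second, I would exhibit a grid point near $t_0$. The grid points $\{\delta + k\Delta : 0 \le k \le L\}$, where $L$ is the largest integer with $\delta + L\Delta \le (\delta+1)/2$, are $\Delta$-apart and $\Delta$-cover the interval $[\delta, (\delta+1)/2]$: any $t_0 \in [\delta, (\delta+1)/2]$ lies within distance $\Delta$ of some grid point $t' = \delta + k\Delta \in [\delta, (\delta+1)/2]$ (if $t_0 \le \delta + L\Delta$ use the nearest grid point below or above, and otherwise use $\delta + L\Delta$ itself, noting that $t_0 - (\delta+L\Delta) \le \Delta$ by the maximality of $L$).

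Combining the two steps, by the triangle inequality and the choice of $\Delta$,
\[
    |s(t')| \;\ge\; |s(t_0)| - |s(t_0) - s(t')| \;\ge\; \kappa - mn^2 \cdot \Delta \;=\; \kappa - \frac{\kappa}{2} \;=\; \frac{\kappa}{2}.
\]
I do not anticipate any real obstacle here: this is a standard discretization-of-continuous-search argument, and the only thing to be careful about is making sure the selected grid point does not fall outside $[\delta, (\delta+1)/2]$, which is handled by the maximality of $L$ in the definition of the grid.
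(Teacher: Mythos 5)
Your proof is correct and takes essentially the same approach as the paper: bound how much $s$ can change over a step of size $\Delta$, and use a grid point within $\Delta$ of $t_0$. The only cosmetic difference is that you obtain the Lipschitz constant $mn^2$ by bounding $s'$ and invoking the mean value theorem, whereas the paper bounds each $|t'^\ell - t_0^\ell|$ directly via the factorization $t'^\ell - t_0^\ell = (t'-t_0)\sum_{i} t'^i t_0^{\ell-1-i}$; both give the same bound $|s(t')-s(t_0)| \le mn^2\,\Delta = \kappa/2$.
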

\begin{proof}
  Let $k$ be an integer such that $t' := \delta + k\Delta \in [\delta, (\delta+1)/2]$ and $|t'-t_0| \le \Delta$.  Since $|t_0| \le 1$ and $|t'| \le 1$, for each $\ell \in \{1, \ldots n\}$ we have that
  \[
    |t'^\ell - t_0^\ell|
    \le  |t' - t_0| \cdot \sum_{i=0}^{\ell-1} \bigl| t'^i t_0^{\ell-1-i} \bigr|
    \le \Delta \ell
    \le \Delta n .
  \]
  Since $|s_\ell| \le m$ and $\Delta = \kappa/(2mn^2)$, we have 
  \[
    \left|s_\ell t'^\ell - s_\ell t_0^{\ell}\right| = \big|s_\ell\big|\cdot \big |t'^\ell - t_0^\ell\big| \le mn \Delta = \kappa/(2n) .
  \]
  Therefore
  \[
    \abs{s(t') - s(t_0)}
    \le \sum_{\ell=1}^n \abs{s_\ell t'^\ell - s_\ell t_0^\ell}
    \le \kappa/2 .
  \]
  It follows from the triangle inequality that $\abs{s(t')} \ge \abs{s(t_0)} - \abs{s(t') - s(t_0)} \ge \kappa/2$.
\end{proof}
\ignore{
\rnote{Just to save us from having to look up this trivial argument, here is a fragment that was written about this granularity stuff a while ago, basically just cut and pasted from the old file subword-counts2.tex:

}}
\ignore{
\gray{

Input: $w \in \zo^k$, traces $y \sim \Del_\delta(x)$

Output: $\#(w,x)$

High-level algorithm:
\begin{itemize}
    \item Compute estimate $\hat{\SW}_{x,w}(\tau)$ for $\tau \in S := \{\delta, \delta + (n / 1-\delta)^{-O(k/(1-\delta))}, \cdots, (\delta+1)/2\}$ up to accuracy $(n / (1-\delta))^{-O(k/(1-\delta))}$.
    \item Get solution $q_0$ to feasibility LP.
    \item Round $q_0$ to the nearest integer and return that.
\end{itemize}

LP:

Variables: $q_0, q_1, \cdots q_{n-k} \in [0, n^k]$ // coefficients of polynomial

Constraints: For all $\zeta \in S$,
\[
    \left|\sum_\ell q_\ell \, \zeta^\ell - \hat{\SW}_{x,k}(\zeta)\right| \leq \left(\frac{n}{1-\delta}\right)^{-O(k/(1-\delta))}.
\]

Correctness:

0) W.h.p, we have $|\SW_{x,k}(\zeta) - \hat{\SW}_{x,k}(\zeta)| \leq \left(n/(1-\delta)\right)^{-O(k/(1-\delta))}$ simultaneously for all $\zeta \in S$.

1) Setting $q_\ell$ to be $(\SW_{x,w})_\ell$, the degree-$\ell$ coefficient of the true polynomial $\SW_{w,x}$, we see that the constraints are satisfied.  Thus the output returned after rounding would be $\#(w,x)$.

2) Let $q(\zeta) = \sum_\ell q_\ell \, \zeta^\ell$ be any feasible solution to the above LP. If $|q_0 - \#(w,x)| \geq 1/2$, then \Cref{thm:uniqueness-info-theory} (rather, its discretized version) implies that some constraint (corresponding to some $\zeta \in S$) must be violated. Hence we have a contradiction. Thus any feasible solution must satisfy $|q_0 - \#(w,x)| < 1/2$, and so we must get the correct output $\#(w,x)$ after rounding.
}}

We now analyze the complexity of the algorithm. Note that for all $\zeta \in S$, we have $1-\zeta \geq (1-\delta)/2$. 
  By \Cref{lem:estimator-polynomial}, 
the sample complexity is 
\begin{equation}\label{hehebound}
n^{O(1)} \cdot \left(\frac{2}{1-\delta}\right)^{O(k)} \cdot \left(\frac{5}{\kappa}\right)^2 \cdot \log \left(\frac{|S|}{\tau}\right) 
= \left(\frac{n^{1/(1-\delta)}}{1-\delta}\right)^{O(k)} \log \left(\frac{1}{\tau}\right).
\end{equation}
The running time of the algorithm is (\ref{hehebound}) multiplied by $|S|$
  plus the time needed to solve the linear program.
The former can still be bounded by the same expression on the RHS of
 (\ref{hehebound}) above.
The latter can be bounded by $\poly(n)$ multiplied by the number of bits 
  needed to describe the linear program, which can also be bounded by the RHS
  of (\ref{hehebound}).
This proves the claimed upper bounds on the running time and sample complexity, and concludes the proof of \Cref{thm:Multiplicity-large-delta-weak} assuming \Cref{thm:uniqueness-info-theory}.

\subsection{Proof of \Cref{thm:complex}} \label{sec:proof:thm_complex}

In this subsection we prove \Cref{thm:complex}. For convenience we define $\rho := 1-\delta \in (0,1)$, and we restate the theorem below in terms of $\rho$: 

\medskip

\noindent {\bf Restatement of \Cref{thm:complex}}: 
\emph{
Let $1 \leq n \leq m$.  Let $p(z) = \sum_{i=0}^n p_i z^i$ be a polynomial of degree at most $n$ with real coefficients such that $|p_0| \geq 1/2$, and $|p_i|\le m$ for all $i$. Then for any $\rho \in (0,1)$, 
\[
    \sup_{\zeta \in \left[1-\rho,1- {\rho}/{2}\right]} \big|p(\zeta)\big| \geq m^{-O\left(1/\rho\right)}.
\]
}

The proof uses the Hadamard three-circle theorem, along with other standard results in complex analysis. Consider the mapping $w : \C \rightarrow \C$ given by
\[
w(z) = 1-\frac{3\rho}{4} + \frac{\rho}{8} \left(z + \frac{1}{z}\right).
\]
We observe that the map $w(z)$ is meromorphic with only one pole 
at $z=0$. 
Define radii
\[
r_1=1 ; \ \ r_2=2; \ \ r_3=4.
\]
For $i  = 1,2,3$, let $C_i \subset \C$ be the circle centered at the origin with radius $r_i$. 
Consider the map $f : \C \rightarrow \C$ given by $f(z) = p(w(z))$. Like $w(\cdot)$, $f$ is meromorphic with only one pole at $z=0$. 
The idea of the proof is to use the Hadamard three-circle theorem \cite{Had-Wiki} on $f$, which tells us that
\begin{equation} \label{eq:hadamard}
2 \log \left(\sup_{z \in C_2} |f(z)|\right) \le  \log \left(\sup_{z \in C_1} |f(z)|\right) + \log 
  \left(\sup_{z \in C_3} |f(z)|\right).
\end{equation}

Now, we will analyze each term in the above inequality. We first record some facts about the behaviour of $w$ over each circle $C_i$ that are immediate from the definition:

\begin{fact} \label{fact_w_locus}
Let $w, C_1,C_2$ and $C_3$ be as defined above.
    \begin{enumerate}[(1)]
    \item When $z$ ranges over $C_1$, $w(z)$ ranges over the real line segment 
    $[1-\rho, 1-\rho/2]$.
  
    \item When $z$ ranges over $C_2$, $w(z)$ ranges over the ellipse $E_2$ in the complex plane which is centered at the real value $1-3\rho/4$ and is the locus of all points $z=x+iy$ satisfying
    \[
    \left(\frac{x - (1-3\rho/4)}{5\rho/16}\right)^2 + \left(\frac{y}{3\rho/16}\right)^2=1.
    \]
   
   \item Similarly, when $z \in C_3$, $w(z)$ ranges over the ellipse $E_3$ in the complex plane which is centered at the real value $1-3\rho/4$ and is the locus of all points $z=x+iy$ satisfying
    \[
    \left(\frac{x - (1-3\rho/4)}{17 \rho / 32}\right)^2 + \left(\frac{y}{15\rho/32}\right)^2=1.
    \]
    Moreover, the ellipse $E_3$ is completely contained in the unit disk $B_1(0)$.
\end{enumerate}
\end{fact}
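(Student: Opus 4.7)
My plan is to verify each clause by direct calculation. For a point $z = re^{i\theta}$ on the circle $C_r$ of radius $r$, we have $z + 1/z = (r + 1/r)\cos\theta + i(r - 1/r)\sin\theta$, so
$$w(z) = \left(1 - \tfrac{3\rho}{4}\right) + \tfrac{\rho}{8}\left(r + \tfrac{1}{r}\right)\cos\theta \;+\; i\,\tfrac{\rho}{8}\left(r - \tfrac{1}{r}\right)\sin\theta.$$
This single identity will let me read off the three loci at once: the real part of $w(z) - (1 - 3\rho/4)$ is $(\rho/8)(r + 1/r)\cos\theta$ and the imaginary part is $(\rho/8)(r - 1/r)\sin\theta$.

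For (1), I would substitute $r = 1$. Then the imaginary part vanishes identically and the real part sweeps the claimed interval $[1-\rho,\, 1-\rho/2]$ as $\theta$ varies. For (2) and (3), with $r > 1$ fixed, the map $\theta \mapsto w(r e^{i\theta})$ traces an ellipse centered at $1 - 3\rho/4$ on the real axis, with horizontal and vertical semi-axes $(\rho/8)(r + 1/r)$ and $(\rho/8)(r - 1/r)$ respectively. Plugging in $r = 2$ yields semi-axes $5\rho/16$ and $3\rho/16$, and plugging in $r = 4$ yields semi-axes $17\rho/32$ and $15\rho/32$, matching the equations given for $E_2$ and $E_3$.

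The only nontrivial assertion is that $E_3$ is contained in the open unit disk $B_1(0)$. Here, rather than chasing the ellipse parameterization, I would just apply the triangle inequality to $w$ on $C_3$: for any $z$ with $|z| = 4$,
$$\bigl|w(z)\bigr| \;\le\; \left|1 - \tfrac{3\rho}{4}\right| + \tfrac{\rho}{8}\left(|z| + \tfrac{1}{|z|}\right) \;=\; 1 - \tfrac{3\rho}{4} + \tfrac{\rho}{8}\cdot\tfrac{17}{4} \;=\; 1 - \tfrac{7\rho}{32},$$
which is strictly less than $1$ for every $\rho \in (0,1)$.

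The entire fact is essentially arithmetic, so I do not anticipate a real obstacle; the only ``risk'' is a fraction-management slip in verifying that the coefficients $5/16$, $3/16$, $17/32$, $15/32$ in the statement arise correctly from $(r \pm 1/r)/8$ at $r = 2$ and $r = 4$, which a careful one-line computation settles.
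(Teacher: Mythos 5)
Your computation is correct and is exactly the routine check the paper had in mind when it called the fact ``immediate from the definition'': writing $z=re^{i\theta}$ gives $z+1/z=(r+1/r)\cos\theta + i(r-1/r)\sin\theta$, and the three items follow by plugging in $r=1,2,4$. Your triangle-inequality bound $|w(z)|\le 1-\tfrac{3\rho}{4}+\tfrac{17\rho}{32}=1-\tfrac{7\rho}{32}<1$ on $|z|=4$ is a clean way to dispatch the containment $E_3\subset B_1(0)$ without having to reason about the ellipse geometry directly.
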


\Cref{eq:hadamard} will be useful to us because of the following simple claim, which is immediate from \Cref{fact_w_locus}, Item (1):
\begin{claim}~\label{clm:a1}
\[
\sup_{z \in C_1} |f(z)| = \sup_{\zeta \in [1-\rho,1-\rho/2]} |p(\zeta)|.
\]
\end{claim}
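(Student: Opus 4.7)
My plan is to derive the identity directly from Fact~\ref{fact_w_locus}(1) by explicitly tracing the image of $C_1$ under $w$. The first step is to parametrize $z \in C_1$ as $z = e^{i\theta}$ for $\theta \in [0, 2\pi)$; then $z + 1/z = 2\cos\theta \in [-2,2]$, so substituting into the definition $w(z) = 1 - 3\rho/4 + (\rho/8)(z + 1/z)$ yields $w(e^{i\theta}) = 1 - 3\rho/4 + (\rho/4)\cos\theta$. This expression is real-valued, attains $1-\rho/2$ at $\theta = 0$ and $1-\rho$ at $\theta = \pi$, and (since $\cos\theta$ sweeps $[-1,1]$ continuously) covers the entire interval $[1-\rho,\,1-\rho/2]$ as $\theta$ varies.

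Having established $w(C_1) = [1-\rho,\,1-\rho/2]$, the claim follows by composition: since $f = p \circ w$, we have $f(C_1) = p(w(C_1)) = p([1-\rho,\,1-\rho/2])$, so taking $|\cdot|$ and then supremum on both sides gives exactly $\sup_{z \in C_1}|f(z)| = \sup_{\zeta \in [1-\rho,\,1-\rho/2]}|p(\zeta)|$. There is no genuine obstacle here; the only subtlety worth noting is confirming that the map $\theta \mapsto w(e^{i\theta})$ is surjective onto the target interval, which is immediate from continuity together with the fact that both endpoints are attained. The claim is essentially an unpacking of the definition of $w$ on the unit circle, which is why the authors describe it as immediate.
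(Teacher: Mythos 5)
Your argument is correct and matches the paper's: the paper obtains the claim as an immediate consequence of Fact~\ref{fact_w_locus}(1), and your computation $w(e^{i\theta}) = 1 - 3\rho/4 + (\rho/4)\cos\theta$ is precisely the verification of that fact (which the paper leaves implicit as ``immediate from the definition''). The final step of passing from $w(C_1) = [1-\rho, 1-\rho/2]$ to the equality of suprema via $f = p \circ w$ is exactly the intended reasoning.
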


Given \Cref{eq:hadamard} and \Cref{clm:a1}, in order to lower bound $\sup_{\zeta \in [1-\rho,1-\rho/2]}|p(\zeta)|$, it suffices to upper bound $\sup_{z \in C_3} |f(z)|$ and to lower bound $\sup_{z \in C_2} |f(z)|$. We do this in the following claims:

\begin{claim}~\label{clm:a2}
\[
\sup_{z \in C_3} |f(z)| \le m \cdot (n + 1).  
\]
\end{claim}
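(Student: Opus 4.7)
The plan is to exploit the containment statement in Item (3) of \Cref{fact_w_locus}, which asserts that the ellipse $E_3 = w(C_3)$ lies inside the closed unit disc $B_1(0)$. This is exactly the control we need on the magnitude of $w(z)$ for $z\in C_3$, since the polynomial $p$ is easy to bound on the unit disc given the hypothesis $|p_i|\le m$ on its coefficients.

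Concretely, first I would fix an arbitrary $z \in C_3$ and note, via Item (3), that $w(z)\in E_3 \subset B_1(0)$, so $|w(z)|\le 1$. Next, I would expand $f(z) = p(w(z)) = \sum_{i=0}^n p_i\, w(z)^i$ and apply the triangle inequality together with $|w(z)|^i \le 1$ and $|p_i|\le m$ to obtain
\[
|f(z)| \;\le\; \sum_{i=0}^n |p_i|\cdot |w(z)|^i \;\le\; \sum_{i=0}^n m \;=\; m(n+1).
\]
Since $z\in C_3$ was arbitrary, taking the supremum yields the claim.

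This step is essentially routine once Item (3) is in hand; the only thing to double-check is the containment $E_3\subset B_1(0)$ itself (which is already asserted in \Cref{fact_w_locus}, so we may take it as given). I would not anticipate any obstacle here: the hard work has been pushed into Item (3) of \Cref{fact_w_locus}, and the bound on $|f|$ over $C_3$ is then a one-line application of the triangle inequality combined with the coefficient bound on $p$. The companion lower bound on $\sup_{z\in C_2}|f(z)|$, which feeds into the Hadamard three-circle inequality~\eqref{eq:hadamard}, is where the real analytic content will lie.
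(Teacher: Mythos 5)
Your proposal is correct and takes essentially the same approach as the paper: both invoke Item~(3) of \Cref{fact_w_locus} to reduce to bounding $|p|$ on the unit disc, and then use the coefficient bound $|p_i|\le m$ with the triangle inequality to get $m(n+1)$. The only difference is that you spell out the triangle-inequality step explicitly, which the paper leaves implicit.
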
 
\begin{proof}
  By \Cref{fact_w_locus}, Item (3) above, we have $E_3 \subseteq B_1(0)$ and so \ignore{
  }
  \[
\sup_{z \in C_3} |f(z)| = \sup_{z \in E_3} |p(z)| \leq \sup_{z \in B_1(0)} |p(z)|,
\]
The bounds on the coefficients of $p$ immediately imply that $\sup_{z \in B_1(0)} |p(z)| \leq m \cdot (n + 1)$.
\end{proof}

\begin{claim}~\label{clm:a3}
\[
\sup_{z \in C_2} |f(z)| \ge m^{-O(1/\rho)}. 
\]
\end{claim}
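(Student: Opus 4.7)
The plan is to find a point $z_0$ strictly inside the disk $\{|z|\le 2\}$ at which $|f(z_0)|$ is lower-bounded by an absolute constant, and then propagate this lower bound to $C_2$ by applying the maximum modulus principle to a well-chosen entire auxiliary function.

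First I would locate $z_0\in\mathbb{C}$ satisfying $w(z_0)=0$, so that $f(z_0)=p(w(z_0))=p(0)=p_0$ and hence $|f(z_0)|\ge 1/2$ by the hypothesis $|p_0|\ge 1/2$. Solving $w(z)=0$ reduces to the quadratic $z^2+(8/\rho-6)z+1=0$, whose two real roots are reciprocals of one another. Taking $z_0$ to be the root of smaller magnitude, the quadratic formula yields $|z_0|=\Theta(\rho)$ uniformly over $\rho\in(0,1)$, and in particular $|z_0|<2$.

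Next, I would observe that although $f$ has a pole at the origin, the function $h(z):=z^{n}\cdot p(w(z))$ is actually a polynomial in $z$ of degree at most $2n$. Indeed, $z\cdot w(z)=(\rho/8)(z^{2}+1)+(1-3\rho/4)\,z$ is a polynomial of degree $2$ in $z$, so each term $p_{i}\cdot z^{n-i}\cdot(z\,w(z))^{i}$ is a polynomial in $z$ of degree $n-i+2i=n+i\le 2n$. Since $h(z_0)=z_0^{n}\cdot p_0$, we have $|h(z_0)|\ge |z_0|^{n}/2$. The main step is then to apply the maximum modulus principle to $h$ on the closed disk $\{|z|\le 2\}$, which strictly contains $z_0$. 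This gives $\sup_{|z|=2}|h(z)|\ge |h(z_0)|\ge |z_0|^{n}/2$, and dividing through by $2^{n}$ to recover $f=h/z^{n}$ on $C_2$ yields
\[
  \sup_{z\in C_2}|f(z)| \;=\; 2^{-n}\,\sup_{|z|=2}|h(z)| \;\ge\; \tfrac{1}{2}\!\left(\tfrac{|z_0|}{2}\right)^{\!n}.
\]

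The hardest step will be converting this estimate into the claimed bound $m^{-O(1/\rho)}$ across all regimes of $n\le m$. When $n=O((\log m)/\rho)$, this conversion is immediate from $|z_0|=\Omega(\rho)$. To handle the regime where $n$ is larger, I would first truncate $p$ to the polynomial $\tilde{p}$ consisting of its first $N=O((\log m)/\rho)$ coefficients: since every $\zeta\in E_2=w(C_2)$ satisfies $|\zeta|\le 1-7\rho/16$ (by the triangle inequality applied to $w(z)=(1-3\rho/4)+(\rho/8)(z+1/z)$ on $|z|=2$), the coefficient bound $|p_i|\le m$ together with a geometric-series estimate on the tail forces $|p(\zeta)-\tilde{p}(\zeta)|\le 1/4$ on $E_2$. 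Consequently $\sup_{z\in C_2}|f(z)|=\sup_{\zeta\in E_2}|p(\zeta)|\ge \sup_{\zeta\in E_2}|\tilde{p}(\zeta)|-1/4$, and the argument above applied to $\tilde{p}$ (which has degree $N$ and still satisfies $|\tilde{p}(0)|=|p_0|\ge 1/2$) delivers the desired lower bound of $m^{-O(1/\rho)}$.
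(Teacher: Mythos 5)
The observation that $h(z) := z^n f(z)$ is a polynomial (of degree at most $2n$), and that evaluating it at the root $z_0$ of $w$ gives $|h(z_0)| = |z_0|^n |p_0| \ge |z_0|^n/2$, is a nice idea, and the first part of your argument up through
\[
  \sup_{z\in C_2}|f(z)| \;\ge\; \tfrac{1}{2}\!\left(\tfrac{|z_0|}{2}\right)^{\!n}
\]
is correct. But this bound is useless when $n$ is large (we only know $n\le m$, not $n=O(\log m/\rho)$), and the truncation step you propose to handle that regime cannot be made to work. Since the two roots of $w$ are reciprocal, the smaller one has $|z_0|\le 1$, so $(|z_0|/2)^N\le 2^{-N}$; already for $N\ge 2$ your max-modulus lower bound $\tfrac12(|z_0|/2)^N$ is below the truncation error $1/4$ that you subtract, so $\sup_{E_2}|\tilde p|-1/4$ is nonpositive and the estimate is vacuous. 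One cannot repair this by shrinking the truncation error to some $\epsilon$ and demanding $\tfrac12(|z_0|/2)^N\ge 2\epsilon$: the truncation error on $E_2$ is of order $m(1-7\rho/16)^{N}/\rho$, and the requirement becomes $\bigl((|z_0|/2)/(1-7\rho/16)\bigr)^N \gtrsim m/\rho$. But $|z_0|/2\le 1/2 < 9/16 \le 1-7\rho/16$, so the base on the left is strictly below $1$; the left side is decreasing in $N$ while the right side is $\ge 1$, so no choice of $N$ satisfies it.

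The underlying difficulty is that the single-point maximum-modulus bound $\sup_{|z|=2}|h|\ge |h(z_0)|$ loses a multiplicative factor $(|z_0|/2)^n$ that decays with the degree, and no truncation can decouple the target bound $m^{-O(1/\rho)}$ from that degree dependence. The paper's proof of \Cref{clm:a3} instead applies Jensen's formula to $p$ on the disk $B_{1-3\rho/4}(0)$, which averages $\ln|p|$ over the full circle and yields $\E_{\bz}[\ln|p(\bz)|]\ge \ln|p(0)|\ge -\ln 2$; localizing to an arc of angular size $\Theta(\rho)$ that lies inside $E_2$ and using the uniform bound $|p|\le m(n+1)$ off that arc then gives the stated $m^{-O(1/\rho)}$ lower bound with no dependence on the degree. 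You would need some averaging tool of this kind (Jensen's formula, or a two-constants/harmonic-measure estimate) rather than a pointwise maximum-modulus comparison to close the gap.
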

\begin{proof}
Applying Jensen's formula \cite{Jensen-Wiki} to $p$ on the closed origin-centered disk of radius $1 - 3\rho/4$,
we get that 
\begin{equation} \label{eq:jensen}
\mathbf{E}_{\bz} [\ln |p(\bz)|] \ge \ln |p(0)| \geq \ln (1/2) = - \ln 2.
\end{equation}
Here $\bz$ is taken to be a uniform random point on the circle $C$ \ignore{
}of radius $1-3\rho/4$ centered at the origin.

Now, consider the arc \[
\calA = \{z \in \C : |z| = 1-3\rho/4 \text{ and } |\arg(z)| \leq 3\rho/16\}.\]
Let $c_{\max, \mathcal{A}} = \max_{z \in \calA} |p(z)|$ and $\theta^{\ast} = 3\rho/16$ (note that $\theta^{\ast}/\pi$  is the fraction of $C$ that lies in ${\cal A}$). 
Now since $|p(z)| \leq m(n+1)$ for all $z \in B_{1-3\rho/4}(0) \setminus \mathcal{A}$ (because of the coefficient bound on $p$), we have by \Cref{eq:jensen} that

\[
-\ln 2 \leq \left(1 - \frac{\theta^{\ast}}{\pi}\right)\ln \left(m(n+1)\right) + \frac{\theta^{\ast}}{\pi} \cdot \ln c_{\max, \mathcal{A}} \leq \ln \left(m(n+1)\right) + \frac{\theta^{\ast}}{\pi} \cdot \ln c_{\max, \mathcal{A}}.
\]
Thus, 
\[
\ln c_{\max, \mathcal{A}} \ge  -\frac{\pi \cdot \ln \left(2m(n+1)\right)}{\theta^{\ast}}, 
\]
and hence
\[
c_{\max,\mathcal{A}} \ge \left(2m(n+1)\right)^{-\pi/\theta^\ast}. 
\]

Next, we observe that the arc $\calA$ is entirely in the interior of the ellipse $E_2$. (To see this, observe that the center of the arc is the real value $1-3\rho/4$, which coincides with the center of the ellipse, and that every point on the arc is within distance less than $3\rho/16$ from the center of the arc (ellipse).  Since $3\rho/16$ is the length of the semi-minor axis of the ellipse, it follows that every point in the arc is within the ellipse.)
We further recall that $m \geq n$ and that $\theta^\ast = \Theta(\rho)$. Using these facts along with the maximum modulus principle and \Cref{fact_w_locus} Item (2), we conclude that 
\[
    \sup_{z \in C_2} |f(z)| = \sup_{z \in E_2} |p(z)| \geq \sup_{z \in \mathcal{A}} |p(z)| = c_{\max, \calA} \geq m^{-O(1/\rho)},
\]
and \Cref{clm:a3} is proved.
\end{proof}

\begin{proofof}{\Cref{thm:complex}} We combine \Cref{clm:a1,,clm:a2,clm:a3} in \Cref{eq:hadamard} to get that
\[
\log \sup_{\zeta \in [1-\rho,1-\rho/2]} |p(\zeta)| = \log \sup_{z \in C_1} |f(z)| \ge -O(1/\rho) \log m - \log (m(n+1)) \geq -O(1/\rho) \log m.
\]
Exponentiating both sides finishes the proof of \Cref{thm:complex}.
\end{proofof}


\section{Improved algorithms: Proof of \Cref{thm:Multiplicity}} \label{sec:strong-version}

In this section we give improved algorithms strengthening the quantitative bounds given in \Cref{thm:Multiplicity-small-delta-weak} and \Cref{thm:Multiplicity-large-delta-weak} 
and thereby complete the proof of \Cref{thm:Multiplicity}.

First we describe the main ideas underlying the improved algorithms. Both algorithms benefit from the same insights, so we will just describe the improvement of \Cref{thm:Multiplicity-large-delta-weak} in this overview. 
Recall the definition of the subword polynomial $\SW_{x,w}$ from \Cref{def:subword-polynomial}:
\[
    \SW_{x,w}(\zeta) := \sum_{\mathclap{\substack{\alpha \in \Z_{\geq 0}^{k-1} \\ |\alpha| \leq n-k}}} \, \#\left(w_0 \ast^{\alpha_1} w_1 \ast^{\alpha_2} w_2 \dots w_{k-2} \ast^{\alpha_{k-1}} w_{k-1}, \hspace{0.06cm}x\right) \cdot \zeta^{|\alpha|}.
\]

Grouping terms of the same degree together, we can write it as $
\SW_{x,w}(\zeta) = \sum_{\ell \geq 0} \gamma_\ell \, \zeta^\ell
$, where
\begin{equation*} \label{eq:subword-poly-degree}
    \gamma_\ell = \sum_{\substack{\alpha \in \Z_{\geq 0}^{k-1}\\ |\alpha| = \ell}} \#\left(w_0 \ast^{\alpha_1} w_1 \ast^{\alpha_2} w_2 \dots w_{k-2} \ast^{\alpha_{k-1}} w_{k-1},\hspace{0.06cm} x\right)
\end{equation*}
is the degree-$\ell$ coefficient, for each $0 \leq \ell \leq n-k$. 
In the proofs of Corollary \ref{corr-poly-eval} in \Cref{sec:small-delta} and \Cref{thm:uniqueness-info-theory} in \Cref{sec:large-delta}, we bounded these coefficients uniformly by $m=n^k$. The first insight is that in fact a sharper bound holds for these coefficients: specifically, we have
\begin{equation} \label{eq:coefficient-bound}
    0 \leq \gamma_\ell \leq m_\ell:=n \binom{\ell + k-2}{k-2}.
\end{equation}
This is simply because there are at most $n$ choices for the position of the first character $w_0$ in $x$, and there are   $\binom{\ell + k-2}{k-2}$ 
ways to choose a tuple of non-negative integers $\alpha_1, \cdots, \alpha_{k-1}$ that sum to $\ell$.  The second insight is that since our approaches only involve evaluating $\SW_{x,w}(\zeta)$ on non-negative real inputs $\zeta$ that are bounded below 1, we can exploit this improved coefficient bound to \emph{truncate} the high-degree portion of the polynomial; working with the resulting (much) lower-degree polynomial leads to an overall gain in efficiency.  

To explain this in more detail, we need the following definition:
\begin{definition}
    Let $p(\zeta) = \sum_{\ell=0}^n p_\ell \, \zeta^\ell$ be a univariate polynomial of degree at most $n$. For $d \in \{0,1,\cdots,n\}$, we define the \emph{$d$-low-degree part of $p$} (denoted as $p^{\leq d}$) to be
    \[
        p^{\leq d}(\zeta) = \sum_{\ell = 0}^d p_\ell \, \zeta^\ell.
    \]
    Analogously, we define the \emph{$d$-high-degree} part of $p$ to be $p^{>d}(\zeta) := \sum_{\ell >d} p_\ell \, \zeta^\ell = p(\zeta) - p^{\leq d}(\zeta)$.
\end{definition}

Consider any polynomial $q$ with a constant term which is an integer different from $\SW_{x,w}(0)$. In order for $q$ to be a polynomial that could possibly arise from the $k$-subword deck of some string $z \in \zo^n$, it must also have coefficients bounded by the right hand side of \Cref{eq:coefficient-bound}. Using these sharper bounds on  the coefficients, 
we show that there exists a threshold degree $d$ that is roughly\footnote{We ignore the
  dependence on $\delta$ for the overview here; see (\ref{choiced2}) and (\ref{choiced}) for exact choices of $d$.} 
  $O( k+\log n)$ such that
\begin{itemize}
    \item The $d$-low-degree part of the polynomials $\SW_{x,w}$ and $q$ 
    must differ by at least 
    $$
    \left(\frac{1}{n} \left(\frac{1-\delta}{2}\right)^k\right)^{O(1/(1-\delta))}
    $$
    (see \Cref{eq:lower-bound-low-degree-part}) at some point in the interval $[\delta,(\delta+1)/2]$. This result is stronger than the analogous $\approx n^{-O(k/(1-\delta))}$ lower bound established in \Cref{thm:uniqueness-info-theory}, which leads to savings on both time and sample complexity. 
    \item The maximum value that the high-degree part of such polynomials attains on the relevant interval is negligible compared to the difference specified above.
\end{itemize}
Combining these two facts enables us to carry out our analysis just on the $d$-low-degree part, which has much smaller coefficients and thereby admits a more efficient algorithm.

In \Cref{sec:improvement_small_delta}, we implement these ideas to strengthen \Cref{thm:Multiplicity-small-delta-weak} when $\delta < 1/2$. In \Cref{sec:improvement_technical}, we do the same to derive a stronger analogue of \Cref{thm:uniqueness-info-theory}, which reduces the sample complexity of computing $\#(w,x)$ for general $\delta < 1$ significantly. Finally in \Cref{sec:improvement_algorithm}, we obtain an LP-based algorithm to compute $\#(w,x)$ which is faster than the corresponding algorithm in \Cref{sec_compute_multiplicity_EFFICIENT}.

\subsection{Improvement of \Cref{thm:Multiplicity-small-delta-weak} for deletion rate $\delta < 1/2$} \label{sec:improvement_small_delta}


In this subsection we strengthen \Cref{thm:Multiplicity-small-delta-weak} for deletion rate $\delta < 1/2$ as follows:

\begin{theorem} \label{thm:Multiplicity-small-delta-strong}
Let $0 < \delta < 1/2$. 
There is an algorithm {\tt Multiplicity}$_{\text{small-}\delta}$ which takes as input a string $w \in \zo^k$, 
access to independent traces of an unknown source string $x \in \zo^n$, and a parameter $\tau>0$.
{\tt Multiplicity}$_{\text{small-}\delta}$ draws $\poly(n) \cdot (1/2-\delta)^{-O(k)} \cdot \log (1/\tau)$ traces from $\Del_\delta(x)$, runs in time $\poly(n) \cdot (1/2-\delta)^{-O(k)} \cdot \log (1/\tau)$, and has the following property: For any unknown source string $x \in \zo^n$, with probability at least $1-\tau$ the output of {\tt Multiplicity}$_{\text{small-}\delta}$ is the multiplicity of $w$ in $\subword(x,k)$ \emph{(}i.e.~the number of occurrences of $w$ as a subword of $x$\emph{)}.
\end{theorem}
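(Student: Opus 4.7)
The plan is to exploit the sharper coefficient bound $\gamma_\ell \le m_\ell = n\binom{\ell+k-2}{k-2}$ together with the low-degree truncation outlined at the start of \Cref{sec:strong-version}, within the framework of \Cref{sec:small-delta}. Setting $\zeta = 0$ in \Cref{thm:Taylor-subword-poly} and using $f$ the indicator of $w$ yields the identity
\[
\#(w,x) \;=\; \SW_{x,w}(0) \;=\; \frac{1}{(1-\delta)^k} \sum_{\substack{\alpha \in \Z_{\ge 0}^{k-1} \\ |\alpha|\le n-k}} E_\alpha \cdot (-r)^{|\alpha|},
\]
where $r := \delta/(1-\delta)$ and $E_\alpha := \E_{\by \sim \Del_\delta(x)}[\#(w_0 *^{\alpha_1} w_1 \cdots *^{\alpha_{k-1}} w_{k-1}, \by)]$. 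Crucially, the hypothesis $\delta < 1/2$ gives $r < 1$, and moreover $1-r = (1-2\delta)/(1-\delta) = \Theta(1/2-\delta)$.

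First I would fix a truncation threshold $d$ and control the tail. Since a trace has length at most $n$, each count $\#(\cdot,\by)$ is at most $n$, so $E_\alpha \le n$. Grouping by $|\alpha|=\ell$ gives $T_\ell := \sum_{|\alpha|=\ell} E_\alpha \le n\binom{\ell+k-2}{k-2}$. Using the generating-function identity $\sum_{\ell \ge 0} \binom{\ell+k-2}{k-2} r^\ell = (1-r)^{-(k-1)}$, together with the fact that $\binom{\ell+k-2}{k-2} r^\ell$ decays geometrically in $\ell$ once $\ell$ exceeds $O(k/(1-r))$, a routine tail estimate shows that choosing
\[
d \;=\; O\bigl(k \log(1/(1/2-\delta)) + \log n\bigr)
\]
yields $\frac{1}{(1-\delta)^k} \sum_{\ell > d} T_\ell \cdot r^\ell \le 1/4$.

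Next, I would estimate each $E_\alpha$ with $|\alpha|\le d$ empirically from independent traces (reusing the same traces for all $\alpha$) and combine them via the above identity truncated at degree $d$. The number of $\alpha$'s involved is $N := \binom{d+k-1}{k-1} \le 2^{d+k-1}$, which for our choice of $d$ is bounded by $\poly(n) \cdot (1/2-\delta)^{-O(k)}$. Setting the per-term accuracy to $\eta := (1-\delta)^k/(4N)$ and using $|r|<1$, the contribution of the low-degree part introduces error at most $N\eta/(1-\delta)^k \le 1/4$, so combined with the truncation error the total additive error is at most $1/2$, and rounding the final sum recovers the integer $\#(w,x)$. Since each empirical count lies in $[0,n]$, a standard Chernoff/Hoeffding bound together with a union bound over the $N$ values of $\alpha$ shows that $O((n/\eta)^2 \log(N/\tau))$ traces suffice; substituting and using $(1-\delta)^{-k} \le (1/2-\delta)^{-k} \cdot 2^{O(k)}$ yields the claimed $\poly(n) \cdot (1/2-\delta)^{-O(k)} \cdot \log(1/\tau)$ bound on both the sample complexity and the running time (the extra factor for computing each count inside a trace is only $\poly(n)$).

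The main obstacle is establishing the tail bound with the correct dependence $(1/2-\delta)^{-O(k)}$: the binomial factor $\binom{\ell+k-2}{k-2}$ is polynomial in $\ell$ but exponential in $k$, so one must carefully balance it against the geometric decay $r^\ell$ to obtain the desired scaling of $d$ in terms of both $k$ and $\log(1/(1/2-\delta))$. The hypothesis $\delta < 1/2$ is essential: when $\delta \ge 1/2$ we have $r \ge 1$ and the truncation approach breaks down entirely, which is precisely why the $\delta \ge 1/2$ regime requires the complex-analytic and LP-based machinery of \Cref{sec:large-delta} (strengthened similarly using the sharper coefficient bound) rather than the present approach.
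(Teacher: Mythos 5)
Your overall plan is the same as the paper's: use \Cref{thm:Taylor-subword-poly} to express $\SW_{x,w}(0)$ as a Taylor series around $\delta$, truncate at a threshold degree $d$ using the sharper coefficient bound $n\binom{\ell+k-2}{k-2}$, and estimate the surviving low-degree terms empirically. However, your choice of truncation threshold $d = O\bigl(k\log\frac{1}{1/2-\delta} + \log n\bigr)$ is too small by a factor of roughly $\frac{1}{1/2-\delta}$, and the tail bound you assert does not actually hold with it.

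To see the problem, write $\theta := 1/2-\delta$ and note that the tail (after absorbing the $(1-\delta)^{-k}$ prefactor and the $E_\alpha \le n$ bound) is governed by $\sum_{\ell>d}\binom{\ell+k-2}{k-2}r^\ell$ where $r = \delta/(1-\delta)$ and $\log(1/r) = \Theta(\theta)$. You correctly observe that this sum only begins decaying geometrically once $\ell \gtrsim k/(1-r) = \Theta(k/\theta)$; but for small $\theta$ your proposed $d$ is well below this threshold, so the terms are still growing at $\ell=d$ and the tail is enormous rather than tiny. Concretely, with $\theta = 0.01$, $k=10$, $n=1000$, your $d \approx 76$ gives a tail contribution of order $10^{17}$, nowhere near $1/4$. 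The paper's threshold is $d = \frac{C}{\theta}\bigl(k\ln\frac{C}{\theta} + \ln n\bigr)$ (see \Cref{choiced2}), carrying the crucial $1/\theta$ prefactor: only then is $\theta d$ large enough to defeat both $\log n$ and $\log\binom{d+k}{k}$ in the exponent.

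Once $d$ is corrected, a second issue surfaces: your bound $N = \binom{d+k-1}{k-1} \le 2^{d+k-1}$ becomes $2^{O\bigl((k\log\frac{1}{\theta} + \log n)/\theta\bigr)}$, which is far larger than the claimed $\poly(n)\cdot\theta^{-O(k)}$ — the exponent picks up an unwanted $1/\theta$. This is why the paper instead proves the tighter estimate $\binom{d+k}{k} \le n\cdot(C/\theta)^{3k}$ (\Cref{hehe102}) via Stirling, rather than resorting to the trivial $2^{d+k}$. In short: the framework is right and the $\delta<1/2$ hypothesis is exploited correctly, but you need both the $1/\theta$ prefactor in $d$ and the Stirling-style bound on $\binom{d+k}{k}$ to make the tail estimate and the complexity count go through simultaneously.
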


    Recall \Cref{thm:Taylor-subword-poly}, which relates the subword polynomial value at any point $\zeta \in \C$ to traces drawn from the deletion channel using Taylor series:
    \begin{equation*}
        \SW_{x,w}(\zeta) = \frac{1}{(1-\delta)^k} \; \sum_{\mathclap{\substack{\alpha \in \Z_{\geq 0}^{k-1} \\ |\alpha| \leq n-k}}} \, \E_{\by \sim \Del_\delta(x)}\hspace{-0.04cm} \Big[ \#(w_0 \ast^{\alpha_1} w_1 \ast^{\alpha_2} w_2 \cdots w_{k-2} \ast^{\alpha_{k-1}} w_{k-1},\hspace{0.06cm} \by) \Big]\cdot  \left( \frac{\zeta-\delta}{1-\delta} \right)^{|\alpha|}.
    \end{equation*}
As in \Cref{sec:improvement_small_delta}, our goal is to evaluate $\SW_{x,w}(0) = \#(w,x)$ up to error $1/3$ in magnitude, and return the integer nearest to our estimate. Let $\xi = (\zeta - \delta) / (1-\delta)$, so that $\zeta = \delta + \xi (1-\delta)$. 
Consider the polynomial $p$ defined as follows:
\[
    p(\xi) := (1-\delta)^k \cdot \SW_{x,w}\big(\delta + \xi (1-\delta)\big).
\]
We have that $\SW_{x,w}(0) = (1-\delta)^{-k}\, p(-\delta/(1-\delta)),$ so estimating $\SW_{x,w}(0)$ up to error $\pm 1/3$ is equivalent to estimating $p(-\delta/(1-\delta))$ up to error $\pm (1-\delta)^{k}/3$. As $0 < \delta < 1/2$, we have $1-\delta > 1/2$, and so it suffices to estimate $p(-\delta/(1-\delta))$ up to error $2^{-k}/3$. Moreover, we have $|-\delta/(1-\delta)| = \delta/(1-\delta) < 1$. We will use these observations to bound the contribution of the high-degree-part of $p$. Let $\theta = 1/2 - \delta$, so that
  $\delta/(1-\delta)\le 2\delta=1-2\theta$.

\begin{lemma} \label{lem:improvement_small_delta}
    Let $\delta < 1/2$, and let $p$ and $\theta$  
    be as above. Then by setting 
    \begin{equation}\label{choiced2}
      d := \frac{C}{\theta} \left(k \ln \frac{C}{\theta} + \ln n\right) 
    \end{equation}
    with $C=e^2$, we have  
    \[
        \sup_{|\xi| \leq 1-2\theta} |p^{>d}(\xi)| \leq \frac{0.1}{2^k}.
    \]
\end{lemma}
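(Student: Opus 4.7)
The plan is to use \Cref{thm:Taylor-subword-poly} to expand $p$ in closed form with non-negative, easily bounded coefficients, and then control the high-degree tail by identifying it with a negative-binomial tail.

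First I would substitute $\zeta = \delta + \xi(1-\delta)$ into \Cref{thm:Taylor-subword-poly}: since $(\zeta-\delta)/(1-\delta) = \xi$, the prefactor $(1-\delta)^k$ cancels with the one in the definition of $p$, giving
\[
p(\xi) = \sum_{\ell=0}^{n-k} p_\ell \, \xi^\ell, \quad p_\ell := \sum_{|\alpha| = \ell} \E_{\by \sim \Del_\delta(x)}\big[\#(w_0 \ast^{\alpha_1} w_1 \cdots \ast^{\alpha_{k-1}} w_{k-1},\, \by)\big].
\]
Each expectation is at most $n$ (a pattern of length $\ell+k$ in a trace of length at most $n$), and the number of tuples $\alpha \in \Z_{\geq 0}^{k-1}$ with $|\alpha|=\ell$ is $\binom{\ell+k-2}{k-2}$, so $0 \leq p_\ell \leq n\binom{\ell+k-2}{k-2}$. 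The triangle inequality on $\{|\xi|\leq 1-2\theta\}$ then yields
\[
\sup_{|\xi| \leq 1-2\theta}|p^{>d}(\xi)| \leq n \sum_{\ell > d}\binom{\ell+k-2}{k-2}(1-2\theta)^\ell.
\]

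The heart of the proof is bounding this series. I would observe that $\binom{\ell+k-2}{k-2}(2\theta)^{k-1}(1-2\theta)^\ell$ is exactly the PMF of a negative-binomial random variable $X$ counting failures before the $(k{-}1)$-st success in i.i.d.\ Bernoulli$(2\theta)$ trials; hence the series equals $(2\theta)^{-(k-1)}\Pr[X>d]$. A Chernoff-style argument on the MGF $\E[e^{tX}] = (2\theta/(1-(1-2\theta)e^t))^{k-1}$ with the choice $e^t = (1-\theta)/(1-2\theta)$ gives $\E[e^{tX}] = 2^{k-1}$ and, via $\ln(1+x) \geq x/(1+x)$, $t \geq \theta/(1-\theta) \geq \theta$; thus $\Pr[X>d] \leq 2^{k-1}e^{-\theta d}$. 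Combining,
\[
\sup_{|\xi|\leq 1-2\theta}|p^{>d}(\xi)| \leq n\,\theta^{-(k-1)} e^{-\theta d}.
\]

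Finally I would check the constant. The target bound $n\theta^{-(k-1)}e^{-\theta d}\leq 0.1/2^k$ rearranges to $\theta d \geq \ln n + (k-1)\ln(1/\theta) + k\ln 2 + \ln 10$, which upon substituting $d=(C/\theta)(k\ln(C/\theta)+\ln n)$ becomes
\[
Ck\ln(C/\theta) + (C-1)\ln n \geq (k-1)\ln(1/\theta) + k\ln 2 + \ln 10.
\]
For $C = e^2$ this holds for all $k\geq 1$, $n\geq 1$, $0<\theta<1/2$: the term $Ck\ln C = 2e^2 k$ already absorbs $k\ln 2 + \ln 10$, the term $(C-1)\ln n \geq 0$ is available as slack, and $Ck\ln(1/\theta) \geq (k-1)\ln(1/\theta)$ since $C>1$ and $\ln(1/\theta)>0$. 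The main obstacle is the Chernoff bound on the negative binomial; everything else is bookkeeping, and the constant $C = e^2$ is tuned precisely to clear this last inequality.
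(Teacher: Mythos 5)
Your proof is correct and takes a genuinely different route from the paper's. Both proofs open identically: expand $p$ via \Cref{thm:Taylor-subword-poly}, observe that the degree-$\ell$ coefficient is $\sum_{|\alpha|=\ell} E_\alpha$ with $0 \le E_\alpha \le n$ and $\binom{\ell+k-2}{k-2}$ tuples contributing, and reduce to bounding $n\sum_{\ell>d}\binom{\ell+k-2}{k-2}(1-2\theta)^\ell$. From there the paper proceeds ``bare-hands'': it checks that the summands are eventually decreasing (for $\ell > k/(2\theta)$, the ratio $\frac{\ell+k}{\ell}(1-2\theta) < 1$), bounds the tail by $n$ times the $\ell=d$ term, namely $n^2\binom{d+k}{k}e^{-2\theta d}$, and then invokes an explicit estimate $\binom{d+k}{k} \le n(C/\theta)^{3k}$ (Claim~\ref{hehe102}) to finish. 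You instead recognize the series as $(2\theta)^{-(k-1)}\Pr[X>d]$ for a negative-binomial $X$ and bound the tail by the MGF with the clean choice $e^t = \frac{1-\theta}{1-2\theta}$ (giving $\E[e^{tX}]=2^{k-1}$ and $t\ge\theta$), which produces $n\,\theta^{-(k-1)}e^{-\theta d}$ in one step. Your probabilistic reading eliminates the need for the separate binomial estimate and makes the roles of $\theta$ and $k$ transparent; the paper's version is more elementary and reuses Claim~\ref{hehe102}, which it also needs in \Cref{sec:improvement_technical}. Both close with essentially the same bookkeeping check on the choice of $d$ and $C=e^2$, which you verify correctly. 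One small note: the decomposition $Ck\ln(C/\theta) = Ck\ln C + Ck\ln(1/\theta)$ and the resulting term-by-term comparison should be spelled out when writing this up; as stated (``the term $Ck\ln C$ already absorbs\dots'') the reader must mentally perform that split.
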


Before proving \Cref{lem:improvement_small_delta}, we show that it implies \Cref{thm:Multiplicity-small-delta-strong}.

\begin{proof}[Proof of \Cref{thm:Multiplicity-small-delta-strong} assuming \Cref{lem:improvement_small_delta}] Consider $p^{\leq d}$, the $d$-low-degree-part of $p$, where $d$ is as given by \Cref{lem:improvement_small_delta}. For all $\xi$ with $|\xi| \leq 1-2\theta$,
\[
    |p(\xi) - p^{\leq d}(\xi)| = |p^{>d}(\xi)| \leq \frac{0.1}{2^k}.
\]
So, by the triangle inequality, in order to estimate $p(-\delta/(1-\delta))$ up to error $\pm 2^{-k}/3$, it suffices to estimate $p^{\leq d}(-\delta/(1-\delta))$ up to error $\pm 2^{-k}/5$.

Let $S_d$ be the set $\{\alpha \in \Z_{\geq 0}^{k-1}: |\alpha| \leq d\}$.
As in \Cref{sec_estimate_poly_values}, let
\[
E_\alpha:=\Ex_{\by \sim \Del_\delta(x)}\hspace{-0.04cm} \Big[ \#(w_0 \ast^{\alpha_1} w_1 \ast^{\alpha_2} w_2 \cdots w_{k-2} \ast^{\alpha_{k-1}} w_{k-1},\hspace{0.06cm} \by)\Big]
\]
for each $\alpha \in S_d$. (Note that by definition, $p^{\leq d}$ only includes terms $E_\alpha$ for $|\alpha| \leq d$.) Then
\[
  p^{\le d}(\xi)
  = \sum_{\alpha \in S_d} E_\alpha\cdot \xi^{|\alpha|}.
\]
Each $E_\alpha$ is between $0$ and $n$ and 
  using the same argument as that following \Cref{eq:coefficient-bound}, we have  
\[
  |S_d|= M:= \sum_{\ell=0}^d \binom{\ell + k - 2}{k-2} = \binom{d+k-1}{k-1} \le \binom{d+k}{k}
\]
and we use the following claim to bound the right hand side:

\begin{claim}\label{hehe102}
  Let $d = \frac{C}{\theta} (k\ln\frac{C}{\theta} + \ln n)$ for some $\theta \in (0,1]$ and $C \ge e^2$.  Then we have
  \[ 
    \binom{d+k}{k}\le n\cdot \left(\frac{C}{\theta}\right)^{3k}.
  \]
\end{claim}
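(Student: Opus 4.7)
The plan is to start from the standard binomial estimate $\binom{d+k}{k} \leq \left(e(d+k)/k\right)^k$ and then reduce everything to elementary manipulation of $d+k$ in terms of $C/\theta$. Writing $a := C/\theta$ for brevity, the hypotheses $C \geq e^2$ and $\theta \in (0,1]$ give $a \geq e^2$, so $\ln a \geq 2$ and $a \ln a \geq 1$. In particular, the definition of $d$ yields $d \geq a k \ln a \geq k$, and therefore
\[
  d + k \;\leq\; 2d \;=\; 2a\bigl(k\ln a + \ln n\bigr).
\]

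Plugging this in, the task reduces to verifying that
\[
  \left(2ea \ln a \,+\, \frac{2ea \ln n}{k}\right)^{\!k} \;\leq\; n \cdot a^{3k}.
\]
The natural way forward is to apply the crude bound $(x+y)^k \leq 2^k \max(x,y)^k$ and split into two cases. In the first case, where $\ln a \geq \ln n / k$, the left-hand side is at most $(4ea\ln a)^k$, so since $n \geq 1$ the inequality reduces to $4e \ln a \leq a^2$, which holds comfortably whenever $a \geq e^2$. In the second case, where $\ln n / k > \ln a$, the left-hand side is at most $(4ea \ln n / k)^k$; setting $t := \ln n / k$ so that $n^{1/k} = e^t$ and taking $k$-th roots, the desired inequality becomes $4et \leq e^t \cdot a^2$, i.e., $4et\, e^{-t} \leq a^2$. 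Since $\max_{t \geq 0} t e^{-t} = 1/e$, this reduces to $a^2 \geq 4$, which again holds since $a \geq e^2$.

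The proof really is just the two-line calculation above; the main thing to be careful about is the case split and confirming that the constant $C \geq e^2$ suffices in each case. No harder tool than the standard estimate $\binom{d+k}{k} \leq (e(d+k)/k)^k$ and the elementary maximization $\max_t t e^{-t} = 1/e$ is needed.
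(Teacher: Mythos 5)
Your proof is correct. Both you and the paper start from the same bound $\binom{d+k}{k}\le \bigl(e(d+k)/k\bigr)^k\le(2ed/k)^k$ (using $d\ge k$), so the approach is essentially the same; the only difference is the finishing: the paper applies $\ln x\le x$ to $\ln(C/\theta)+\ln n/k$ together with $2<\ln(C/\theta)$, while you split into two cases via $(x+y)^k\le 2^k\max(x,y)^k$ and use $\max_{t\ge0}te^{-t}=1/e$. Both are elementary and give the same bound; yours is slightly longer but each case reduces to a one-line scalar inequality, which is arguably easier to sanity-check.
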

\begin{proof}
Using $d\ge k$ and the approximation $k!\ge \sqrt{2\pi k} (k/e)^k\ge (k/e)^k$, we have 
\begin{align}
\binom{d+k}{k}\le \frac{(2d)^k}{(k/e)^k}=
\exp\left(k\ln \frac{2ed}{k} \right)
&\le \exp\left(k\left(2+\ln \frac{C}{\theta}+\ln \left(\ln \frac{C}{\theta}+\frac{\ln n}{k}\right)\right)\right) \nonumber\\ \label{hehe1001}
&\le \exp\left(k\left(2+\ln \frac{C}{\theta}+ \ln \frac{C}{\theta}+\frac{\ln n}{k}\right) \right)\\
&\le n\cdot \left(\frac{C}{\theta}\right)^{3k}, \label{hehe1011}
\end{align}
where (\ref{hehe1001}) used $\ln a\le a$, (\ref{hehe1011}) used $2 < \ln (C/\theta)$ since $C \ge e^2$.
\end{proof}  

Plugging  in \Cref{hehe102}, we have $M\le n /\theta^{O(k)}$ using $\theta<1/2$.
The algorithm just draws $s$ (to~be specified)
traces $\by \sim \Del_\delta(x)$, computes an empirical estimate $\tilde{E_\alpha}$ of $E_\alpha$ for each $\alpha\in S_d$ so that 
\[
    \left|\tilde{E_\alpha} - E_\alpha \right| \leq \frac{0.2}{2^kM}.
\]
with probability at least $1-\tau$.
This can be achieved by setting the number of traces to be
\[
  s := O\left( \bigl(M^2 2^k \bigr)^2\right) \cdot \log\left( \frac{M}{\tau}\right)= \left(\frac{n}{\theta^k}\right)^{O(1)} \cdot \log \frac{1}{\tau} 
\]
and a simple application of a Chernoff bound and a union bound. 
When this happens, 
it follows from the fact that $|-\delta / (1-\delta)| < 1$
  that $$\sum_{\alpha\in S_d} \tilde{E_\alpha}\cdot \left(\frac{-\delta}{1-\delta}\right)^{|\alpha|}$$
  is an estimate that deviates by at most $2^{-k}/5$.
Combined with the observations at the beginning of the proof, this implies that we can estimate $\SW_{x,w}(0) = \#(w,x)$ up to error $\pm 1/3$, and hence our output (the nearest integer to our estimate of $\SW_{x,w}(0)$) is $\#(w,x)$ with probability at least $1-\tau$.

The runtime is governed by the time required to compute estimates $\tilde{E_\alpha}$. We can bound it by
\[
  s \cdot n^{O(1)} \cdot |S_d| \leq \left(\frac{n}{\theta^k}\right)^{O(1)} \cdot \log \frac{1}{\tau} = n^{O(1)} \cdot \left(\frac{1}{1/2-\delta}\right)^{O(k)} \cdot \log \frac{1}{\tau}.
\]
This finishes the proof of the theorem.
\end{proof}

\begin{proof}[Proof of \Cref{lem:improvement_small_delta}]
We are interested in $|p^{>d}(\xi)|$ over $|\xi|\le 1-2\theta$, which is trivially bounded by
\[
  |p^{>d}(\xi)|
  \le \sum_{\ell={d+1}}^{n-k} n \binom{\ell+k-2}{k-2} \cdot (1-2\theta)^\ell 
  \le \sum_{\ell=d}^{n-k} n \binom{\ell+k}{k} \cdot (1-2\theta)^\ell.
\]
First, we show that terms in the sum on the right hand side above decreases with $\ell$ so it 
  suffices to bound the term with $\ell=d$ multiplied by $n$. 
    To see this, observe that
    \[
        \left|\frac{ \binom{\ell+k}{k}}{\binom{\ell+k-1}{k}} \cdot (1-2\theta) \right| =
        \frac{\ell+k}{\ell}\cdot (1-2\theta) \le
        1 + \frac{k }{\ell } - 2\theta < 1,
    \]
    whenever $\ell > k /2 \theta$, which holds for all $\ell > d$ given our choice of $d$. So,
    \[
        \sup_{|\xi| \leq 1-2\theta} \left|p^{>d}(\xi)\right|
        \leq n^2 \binom{d+k}{k} \left(1 - 2\theta\right)^d
        \le n^2 \binom{d+k}{k} e^{-2\theta d}.
    \]
    We have $e^{-2\theta d} = n^{-\frac{2C}{\theta}} \cdot (C/\theta)^{-\frac{2Ck}{\theta}}$, and so plugging in \Cref{hehe102} we have
    \[
      n^2 \cdot \bigl(n \cdot \bigl(C/\theta\bigr)^{3k} \bigr) \cdot e^{-2\theta d}
      \le n^{3 -\frac{2C}{\theta}} \cdot \bigl( C/\theta \bigr)^{(3-\frac{2C}{\theta})k}
      \le \frac{1}{n 2^k}
    \]
    because $3-2C/\theta \le -1$ when $C = e^2$.  This concludes the proof of the lemma.

\end{proof}


\subsection{Improvement of \Cref{thm:uniqueness-info-theory} for deletion rate $\delta < 1$} \label{sec:improvement_technical}

Our main technical result is the following, which is a strengthening of \Cref{thm:uniqueness-info-theory}:
\begin{theorem} \label{thm:uniqueness-improvement}
    Fix $x \in \zo^n$ and $w \in \zo^k$ with $k\le n$. Let $q(z) = \sum_{\ell=0}^{n-k} q_\ell \, z^\ell$ be any polynomial such that 
    $|\SW_{x,w}(0) - q(0)| \geq 1/2$ 
    and $0 \leq q_\ell \leq m_\ell$ 
    for all $\ell \in \{0,1,\cdots,n-k\}$. 
    Then 
    \begin{equation} \label{eq:hoho}
        \sup_{\zeta \in [\delta,(\delta+1)/2]} \big\lvert \SW_{x,w}(\zeta) - q(\zeta) \big\rvert \geq
        \left(\frac{1}{n} \left(\frac{1-\delta}{2}\right)^k\right)^{O(1/(1-\delta))},
        \quad\text{for any $\delta \in (0,1)$.} 
    \end{equation}
\end{theorem}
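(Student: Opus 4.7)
The plan is to set $p := \SW_{x,w} - q$, which by hypothesis satisfies $|p_0| \ge 1/2$; and since \eqref{eq:coefficient-bound} gives $0 \le (\SW_{x,w})_\ell \le m_\ell$ while $0 \le q_\ell \le m_\ell$ by assumption, the coefficients of $p$ obey the stronger bound $|p_\ell| \le m_\ell = n\binom{\ell+k-2}{k-2}$. The goal is again to lower-bound $\sup_{\zeta \in [\delta,(\delta+1)/2]} |p(\zeta)|$ via \Cref{thm:complex}, but now I would truncate $p$ at a carefully chosen degree $d$ and apply \Cref{thm:complex} only to the low-degree part $p^{\le d}$, whose effective coefficient bound is drastically smaller than the uniform $n^k$ used in the proof of \Cref{thm:uniqueness-info-theory}.

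Concretely, I would take $d := (C/(1-\delta))\bigl(k\ln(C/(1-\delta)) + \ln n\bigr)$, where $C$ is a sufficiently large constant depending on the universal constant in \Cref{thm:complex} (and, if needed, growing linearly in $1/(1-\delta)$, which is harmless as it only affects the $O(\cdot)$ in the exponents of the final bound). The first step is to show that the high-degree tail $\sup_{\zeta \in [\delta,(\delta+1)/2]} |p^{>d}(\zeta)|$ is negligible. On this interval $\zeta \le 1-(1-\delta)/2$, so the tail is dominated by $\sum_{\ell>d} m_\ell (1-(1-\delta)/2)^\ell$; mirroring the ratio calculation from the proof of \Cref{lem:improvement_small_delta}, the ratio of consecutive summands is at most $1-\Omega(1-\delta)$ for all $\ell > d$, so the whole series is bounded by a constant multiple of its leading term. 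Combining $m_{d+1} \le n\binom{d+k}{k} \le n^{O(1)}(1/(1-\delta))^{O(k)}$ (via the computation of \Cref{hehe102}) with $(1-(1-\delta)/2)^d \le e^{-(1-\delta)d/2}$ and substituting the choice of $d$, the tail can be made at most half the target right-hand side of~\eqref{eq:hoho}.

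The second step applies \Cref{thm:complex} to $p^{\le d}$. This polynomial has degree at most $d$, constant term of magnitude at least $1/2$, and coefficients bounded in absolute value by $M := n\binom{d+k-1}{k-1}$. \Cref{hehe102} gives $M \le n^{O(1)}(1/(1-\delta))^{O(k)}$, and \Cref{thm:complex} with $\rho = 1-\delta$ then yields
\[
\sup_{\zeta \in [\delta,(\delta+1)/2]} |p^{\le d}(\zeta)| \ge M^{-O(1/(1-\delta))} \ge \left(\frac{1}{n}\left(\frac{1-\delta}{2}\right)^k\right)^{O(1/(1-\delta))}.
\]
The triangle inequality $\sup|p(\zeta)| \ge \sup|p^{\le d}(\zeta)| - \sup|p^{>d}(\zeta)|$, combined with the two steps, then yields~\eqref{eq:hoho}.

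The main technical hurdle is the calibration of $C$: the exponential decay factor $(1-(1-\delta)/2)^d$ in the high-degree bound must simultaneously overpower the $n^{O(1)}$ and $(C/(1-\delta))^{O(k)}$ factors arising from the counting and coefficient estimates \emph{and} beat the $M^{-O(1/(1-\delta))}$ lower bound for the low-degree part. Tracking exponents shows that taking $C$ linear in $1/(1-\delta)$ suffices, keeping $d = \poly(k,\ln n, 1/(1-\delta))$ and preserving the bound $M \le n^{O(1)}(1/(1-\delta))^{O(k)}$; the remaining bookkeeping is routine verification that the constants in the various $O(\cdot)$'s compose correctly.
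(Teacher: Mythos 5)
Your proposal matches the paper's proof essentially step for step: truncate $p=\SW_{x,w}-q$ at degree $d\approx \poly(1/(1-\delta))\cdot(k\log(\cdot)+\log n)$, apply \Cref{thm:complex} to $p^{\le d}$ using the refined coefficient bound $m_\ell$, bound the tail $p^{>d}$ via a consecutive-ratio argument showing the terms are geometrically decreasing beyond degree $d$, and combine by the triangle inequality. Your observation that the constant $C$ in the choice of $d$ needs to scale like $1/(1-\delta)$ is precisely what the paper encodes by setting $\theta=(1-\delta)^2/2$ (so that $C/\theta\sim 1/(1-\delta)^2$), and the parameter calibration via \Cref{hehe102} is the same.
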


Let $p(z)= \SW_{x,w}(z) - q(z)=\sum_{\ell=0}^{n-k}p_\ell\, z^\ell$.
Let $c > 0$ be the constant hidden in the exponent of the RHS of \Cref{final00} in \Cref{thm:complex}.
Let $\theta = (1-\delta)^2/2$.
We will choose the threshold on the degree to be
\begin{equation}\label{choiced}
    d := \frac{C}{\theta} \left(k \ln \frac{C}{\theta} + \ln n\right) 
\end{equation}
where $C=e^2\max(1,c)$.
For this $d$, consider the $d$-low-degree part $p^{\leq d}$. This is a polynomial of degree at most $d$ with $|p^{\leq d}(0)| \geq 1 /2$ and the degree-$\ell$ coefficient is bounded by
\[
  |p^{\leq d}_\ell| \le n\binom{\ell+k-2}{k-2}\le n \binom{d + k-2}{k-2}
  \le n\binom{d+k}{k}
\]
for all $\ell \leq d$. We invoke \Cref{thm:complex} on $p^{\leq d}$ to conclude that
\begin{equation} \label{eq:lower-bound-low-degree-part}
    \sup_{\zeta \in [\delta, (\delta+1)/2]} \left|p^{\leq d}(\zeta)\right| \geq \left(n \binom{d+k }{k }\right)^{-c/(1-\delta)}.
\end{equation}
The following lemma upper bounds the contribution of the high-degree part $p^{>d}$ of $p$:
\begin{lemma} \label{lem:upper-bound-high-degree-part}
    Let $p$ and $d$ be as above. Then
\begin{equation}\label{hehe1033}
        \sup_{\zeta \in [\delta, (\delta+1)/2]} \left|p^{> d}(\zeta)\right| \leq 
        \frac{1}{n}\cdot  \left(n \binom{d+k }{k }\right)^{-c/(1-\delta)}.
\end{equation}
\end{lemma}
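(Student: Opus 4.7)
The plan is a direct tail-bound computation on $p^{>d}(\zeta)=\sum_{\ell>d} p_\ell\,\zeta^\ell$. Since both $\SW_{x,w}$ and $q$ have nonnegative coefficients bounded coordinate-wise by $m_\ell = n\binom{\ell+k-2}{k-2}$ (the former by \Cref{eq:coefficient-bound}, the latter by the hypothesis of \Cref{thm:uniqueness-improvement}), we get $|p_\ell|\le 2m_\ell$. Moreover, for $\zeta\in[\delta,(\delta+1)/2]$ we have $\zeta\le 1-\eta$ with $\eta:=(1-\delta)/2$. Thus
\[
\sup_{\zeta\in[\delta,(\delta+1)/2]}|p^{>d}(\zeta)|\;\le\; 2n\sum_{\ell=d+1}^{n-k}\binom{\ell+k-2}{k-2}(1-\eta)^{\ell}.
\]

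Next I would show this sum decays geometrically starting from $\ell=d+1$. The ratio of consecutive summands is $\tfrac{\ell+k-1}{\ell+1}(1-\eta) \le (1+\tfrac{k-2}{\ell+1})(1-\eta)$, which is at most $1-\eta/2$ as soon as $\ell \ge c_1 k/\eta$ for an absolute constant $c_1$. The choice of $d$ in \Cref{choiced} satisfies $d\gg k/\eta$ (since $\theta=(1-\delta)^2/2<\eta$ and $d\ge Ck/\theta$), so the geometric-series ratio bound holds throughout $\ell>d$. Summing then gives
\[
\sup_{\zeta\in[\delta,(\delta+1)/2]}|p^{>d}(\zeta)|\;\le\; \frac{4n}{\eta}\binom{d+k-1}{k-2}(1-\eta)^{d+1}\;\le\;\frac{4n}{\eta}\binom{d+k}{k}e^{-\eta d}.
\]

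Finally I would verify that the exponential decay $e^{-\eta d}$ dominates all the polynomial factors and delivers the target bound. Taking logarithms, it suffices to check
\[
\eta d \;\ge\; 2\log n + \log(4/\eta) + \Bigl(1+\tfrac{c}{1-\delta}\Bigr)\log\binom{d+k}{k} + \tfrac{c}{1-\delta}\log n.
\]
Since $\eta = (1-\delta)/2$ and $\theta=(1-\delta)^2/2$, we have $\eta d = \tfrac{(1-\delta)}{2}\cdot \tfrac{C}{\theta}(k\ln(C/\theta)+\ln n) = \tfrac{C}{1-\delta}(k\ln(C/\theta)+\ln n)$, whereas $\log\binom{d+k}{k}\le O(k\log(C/\theta)+\log n)$ by the estimate in \Cref{hehe102} from the preceding subsection. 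Thus every term on the right-hand side is of the form $O\!\left(\tfrac{1}{1-\delta}\right)\cdot(k\log(C/\theta)+\log n)$, and choosing the constant $C=e^2\max(1,c)$ large enough to absorb all the hidden constants makes the inequality go through with room to spare, yielding the claimed $\tfrac{1}{n}(n\binom{d+k}{k})^{-c/(1-\delta)}$ upper bound.

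The main obstacle is purely bookkeeping: lining up the exponent $\eta d$ against $\tfrac{c}{1-\delta}\log\binom{d+k}{k}$. The deliberate choice of $\theta=(1-\delta)^2/2$ (one factor of $(1-\delta)$ for $\eta$ and another to cancel the $(1-\delta)$ in the denominator of $c/(1-\delta)$) is exactly what makes the parameters balance; once this is recognized, the rest is routine algebra and all other steps are elementary.
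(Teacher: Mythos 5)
Your proof is correct and follows essentially the same route as the paper's: bound the coefficients of $p^{>d}$ by $O(m_\ell)$, show the terms $m_\ell\,\zeta^\ell$ decay past $\ell=d$ on $\zeta\in[\delta,(\delta+1)/2]$, then verify via Claim \ref{hehe102} and the choice of $C,\theta,d$ that the exponential factor dominates all polynomial factors. The only cosmetic differences are that you use $|p_\ell|\le 2m_\ell$ (the tight bound is $m_\ell$, since both $\gamma_\ell$ and $q_\ell$ lie in $[0,m_\ell]$) and that you sum a true geometric series with ratio $\le 1-\eta/2$, whereas the paper just notes the terms are decreasing and multiplies the $\ell=d$ term by the trivial count $n$; neither change affects the conclusion.
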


Before proving this lemma, we show that it implies \Cref{thm:uniqueness-improvement}.

\begin{proof}[Proof of \Cref{thm:uniqueness-improvement} using \Cref{lem:upper-bound-high-degree-part}]  
Since $p = p^{\leq d} + p^{>d}$, 
we use \Cref{lem:upper-bound-high-degree-part} and (\ref{eq:lower-bound-low-degree-part}) to get 
\[
    \sup_{\zeta \in [\delta, (\delta+1)/2]} \left|p(\zeta)\right| \geq  0.9\cdot \left(n \binom{d+k }{k }\right)^{-c/(1-\delta)}.
\]
Plugging in Claim \ref{hehe102} with our choice of $d$, we have 
\[
  \sup_{\zeta \in [\delta, (\delta+1)/2]} \left|p(\zeta)\right|
  \ge 0.9 \left(n \binom{d+k }{k }\right)^{-c/(1-\delta)} \geq  \left(\frac{1}{n} \left(\frac{1-\delta}{2}\right)^k\right)^{O(1/(1-\delta))},
\]
which concludes the proof of \Cref{thm:uniqueness-improvement} using \Cref{lem:upper-bound-high-degree-part}.
\end{proof}


\begin{proof}[Proof of \Cref{lem:upper-bound-high-degree-part}]
    This proof is similar to that of \Cref{lem:improvement_small_delta}. 
    First we show that the maximum possible contribution to $p^{>d}(\zeta)$,
    when $\zeta\in [\delta,(\delta+1)/2]$, arises from the degree-$d$ term in $p$: 
    \[
      \left|\frac{ \binom{\ell+k}{k} }{ \binom{\ell+k-1}{k}}\cdot \zeta\right| = 
      \frac{\ell+k}{\ell}\cdot |\zeta| \leq
      \left(1 + \frac{k }{\ell }\right)\left(1 - \frac{1-\delta}{2}\right) \leq 1 + \frac{k }{\ell } - \frac{1-\delta}{2} < 1
    \]
    whenever $\ell > 2k /(1-\delta)$, which holds for all $\ell > d$. So,
    \[
      \sup_{|\zeta| \leq (\delta+1)/2} \left|p^{>d}(\zeta)\right| \leq 
      n^2 \binom{d+k}{k} \left(1 - \frac{1-\delta}{2}\right)^d
      \le n^2\binom{d+k}{k} \cdot \exp\left(-\frac{(1-\delta)d}{2}\right).
    \]
It suffices to show that 
\[
  n^2\binom{d+k}{k} \cdot \exp\left(-\frac{(1-\delta)d}{2}\right)\le \frac{1}{n} 
  \left(n \binom{d+k }{k }\right)^{-\frac{c}{1-\delta}} 
\]
or equivalently,  
\begin{align} \label{hehe104}
  n^{3+\frac{2c}{1-\delta}}\cdot \binom{d+k}{k}^{1+\frac{c}{1-\delta}}\cdot \exp\left(-\frac{(1-\delta)d}{2}\right) \le 1.
\end{align}
By our choice of $d$ we have
\[
  \exp\left(-\frac{(1-\delta)d}{2}\right)
  \le n^{-\frac{C}{1-\delta}} \cdot \bigl(C/\theta\bigr)^{-\frac{kC}{1-\delta}} .
\]
Using \Cref{hehe102} again, the left hand side of \Cref{hehe104} is at most
\[
  n^{3 + \frac{2c}{1-\delta} - \frac{C}{1-\delta}} \cdot \bigl(C/\theta\bigr)^{k\left(3 + \frac{3c}{1-\delta} - \frac{C}{1-\delta}\right)} \le 1
\]
because $3 + \frac{3c}{1-\delta} - \frac{C}{1-\delta} \le 0$ when $C = e^2\max(1,c)$.
This concludes the proof of the lemma. 
\end{proof}


\subsection{The algorithm of \Cref{thm:Multiplicity}} 
\label{sec:improvement_algorithm}

Armed with \Cref{thm:uniqueness-improvement} in place of \Cref{thm:uniqueness-info-theory}, the algorithm  {\tt Multiplicity}$_{\text{large-}\delta}$ giving \Cref{thm:Multiplicity} and its analysis are very similar to the algorithm {{\tt Multiplicity}$'_{\text{large-}\delta}$} and its analysis given earlier in \Cref{sec_compute_multiplicity_EFFICIENT}; we only indicate the differences here. 

The algorithm changes in the following ways:

\begin{itemize}

\item In Line~1 of the algorithm, we now set $\kappa$ to be the RHS of \Cref{eq:hoho}:
\[
\kappa :=\left(\frac{1}{n} \left(\frac{1-\delta}{2}\right)^k\right)^{O(1/(1-\delta))}.
\]
With this choice of $\kappa$, it follows from the proof
  of \Cref{thm:uniqueness-improvement} that  the RHS of \Cref{hehe1033} in \Cref{lem:upper-bound-high-degree-part}
  can be bounded from above by $0.01\kappa$.

\item Later in Line~1, we now set
$$\Delta := 
\frac{\kappa}{2d^2\, m_d}=
\frac{\kappa}{2d^2 \cdot  n \binom{d + k-2}{k-2} },$$ where $d$ is as given in \Cref{choiced} 
(the idea is that now we are using the sharper coefficient bound 
  $m_\ell\le m_d$
given by \Cref{eq:coefficient-bound} rather than the cruder $n^k$ bound used earlier).

\item The coefficient bound on $q_0,\dots,q_{n-k}$ in Line~3(a) for the linear program is now 
$q_\ell \in [0, m_\ell]$ 
for all $\ell \in \{0,1,\cdots,n-k\}$ rather than $q_0,\dots,q_{n-k} \in [0,n^k]$ as earlier. 

\end{itemize}

With these changes to the algorithm, most of the analysis goes through unchanged. As before, we observe that with probability at least $1-\tau$, we have
\[
\text{for 
  every $\zeta\in S$,} \quad \left|\hat{\SW}_{x,w}(\zeta) -{\SW}_{x,w}(\zeta)\right| \le \kappa/5.
\]
We assume this happens henceforth. The solution which sets $q_\ell=(\SW_{x,w})_\ell$, the degree-$\ell$ coefficient of $\SW_{x,w}$, for all $\ell$, is clearly feasible.

Now we show that every feasible solution $q_0,\cdots,q_{n-k}$ to the linear program must satisfy $|q_0 - \SW_{x,w}(0)| < 1/2$; this is the only part of the analysis that is somewhat different. Suppose for  a contradiction that $q_0,\cdots,q_{n-k}$ is a feasible solution with $|q_0 - \SW_{x,w}(0)| \geq 1/2$. Let $q(\zeta) = \sum_\ell q_\ell \, \zeta^\ell$  and define the polynomial $p = \SW_{x,w}-q$,
with coefficients $p_\ell$. 
We invoke \Cref{thm:uniqueness-improvement} to get that $|p(\zeta^*)| \geq \kappa$ for some $\zeta^\ast \in [\delta, (\delta+1)/2]$. 
By \Cref{lem:upper-bound-high-degree-part} (and the remark below the choice
  of $\kappa$),
\begin{equation} \label{eq:improved_algo_high_degree}
    \left|p(\zeta) - p^{\leq d}(\zeta)\right| = \left|p^{>d}(\zeta)\right| \leq {0.01\kappa} \end{equation}
for all $\zeta \in [\delta, (\delta+1)/2$]. 
As a result, we have $|p^{\le d}(\zeta^*)|\ge 0.99\kappa$.
Applying~\Cref{lem:granularity} with~$s=p^{\leq d}$, $n=d$, $t_0=\zeta^*$, $m = m_d$ and our choice of $\Delta$, there exists a $\zeta'\in S$ such that 
  $|p^{\le d}(\zeta')|\ge 0.495\kappa$ and thus,
  $|p(\zeta')|\ge |p^{\le d}(\zeta')| - |p^{>d}(\zeta')|\ge0.485\kappa$.
Hence, recalling that $p = \SW_{x,w}-q$, we have
\[
    \left|\hat{\SW}_{x,w}(\zeta') - q(\zeta')\right| \geq \left|p(\zeta')\right| - \left|\hat{\SW}_{x,w}(\zeta') -{\SW}_{x,w}(\zeta')\right| \geq 0.285\kappa>\kappa/5.
\]
As $\zeta' \in S$, the solution $q$ violates a constraint of the LP. This concludes the proof of correctness.


Now we analyze the sample complexity of the algorithm. We have
\[ 
    |S| = O(1/\Delta) = \left(n \left(\frac{2}{1-\delta} \right)^k\right)^{O(1/(1-\delta))},
\] using the bounds established in \Cref{sec:improvement_technical}. Moreover, all points $\zeta \in S$ satisfy $1-\zeta \geq (1-\delta)/2$. So, by \Cref{lem:estimator-polynomial}, the sample complexity is at most
\begin{equation}\label{hehe1035}
    s = \frac{n^{O(1)}}{\kappa^2} \left(\frac{2}{1-\delta}\right)^{O(k)} \log 
    \left(\frac{|S|}{\tau}\right)
    = \left(n \left(\frac{2}{1-\delta} \right)^k\right)^{O(1/(1-\delta))} \log \frac{1}{\tau}.
\end{equation}
The running time is dominated by the time required to compute $\hat{SW}_{x,w}(\zeta)$ for each $\zeta \in S$.
The running time for each $\zeta$ can be bounded by (\ref{hehe1035}) and the same expression 
   can be used to bound the overall running time given the bound on $|S|$ above.


\begin{flushleft}
\bibliography{allrefs}{}
\bibliographystyle{alpha}
\end{flushleft}

\appendix

\end{document}